\DeclareMathAlphabet{\mathpzc}{OT1}{pzc}{m}{it}
\newcommand{\agind}[1][i]{_{#1}}
\newcommand{\ironed}{\bar}
\newcommand{\constrained}{\hat}
\newcommand{\optconstrained}{\composed{\optimized}{\constrained}}
\newcommand{\optimized}[1]{#1\opt}
\newcommand{\differentiated}[1]{#1'}
\newcommand{\primedarg}[1]{#1\primed}
\newcommand{\noaccents}[1]{#1}
\newcommand{\composed}[3]{#1{#2{#3}}}
\newcommand{\prob}[2][]{\text{\bf Pr}\ifthenelse{\not\equal{}{#1}}{_{#1}}{}\!\left[#2\right]}
\newcommand{\expect}[2][]{\text{\bf E}\ifthenelse{\not\equal{}{#1}}{_{#1}}{}\!\left[#2\right]}
\newcommand{\given}{\,\mid\,}
\newcommand{\tparen}{\big}
\newcommand{\tprob}[2][]{\text{\bf Pr}\ifthenelse{\not\equal{}{#1}}{_{#1}}{}\tparen[#2\tparen]}
\newcommand{\texpect}[2][]{\text{\bf E}\ifthenelse{\not\equal{}{#1}}{_{#1}}{}\tparen[#2\tparen]}
\DeclareMathAlphabet{\mathpzc}{OT1}{pzc}{m}{it}
\newcommand{\newagentvar}[3][\noaccents]{%
\expandafter\newcommand\expandafter{\csname #2\endcsname}{#1{#3}}%
\expandafter\newcommand\expandafter{\csname #2s\endcsname}{#1{\boldsymbol{#3}}}%
\expandafter\newcommand\expandafter{\csname #2smi\endcsname}[1][i]{#1{\boldsymbol{#3}}_{-##1}}%
\expandafter\newcommand\expandafter{\csname #2i\endcsname}[1][i]{#1{#3}\agind[##1]}%
\expandafter\newcommand\expandafter{\csname #2ith\endcsname}[1][i]{#1{#3}_{(##1)}}%
}
\newcommand{\newitemvar}[3][\noaccents]{%
\expandafter\newcommand\expandafter{\csname #2\endcsname}{#1{#3}}%
\expandafter\newcommand\expandafter{\csname #2s\endcsname}{#1{\boldsymbol{#3}}}%
\expandafter\newcommand\expandafter{\csname #2smj\endcsname}[1][j]{#1{\boldsymbol{#3}}_{-##1}}%
\expandafter\newcommand\expandafter{\csname #2j\endcsname}[1][j]{#1{#3}_{##1}}%
\expandafter\newcommand\expandafter{\csname #2jth\endcsname}[1][j]{#1{#3}_{(##1)}}%
}
\newcommand{\forbudget}[1]{#1_B}
\newcommand{\valB}{\forbudget{\val}}
\newcommand{\noisy}[1]{\tilde{#1}}
\newcommand{\mlext}{V}
\newcommand{\ccext}{V^+}
\newcommand{\opti}[1]{#1^\star}
\newcommand{\redval}{g}
\newcommand{\crs}{\pi}
\newcommand{\numSmall}{m}
\newcommand{\lagrange}{\lambda}
\newcommand{\budget}{B}
\newcommand{\reals}{{\mathbb R}}
\newcommand{\opt}{^\star}
\newcommand{\primed}{^\dagger}
\newcommand{\setdist}{{\cal D}}
\begin{document}

\markboth{E. Balkanski and J. D. Hartline}{Bayesian Budget Feasibility with Posted Pricing}
\title{Bayesian Budget Feasibility with Posted Pricing}

\author{
  Eric Balkanski\\
  Harvard University \\
  School of Engineering and Applied Sciences\\
  \texttt{ericbalkanski@g.harvard.edu}
  \and
  Jason D. Hartline\\
  Northwestern University\\
  EECS Department \\
  \texttt{hartline@eecs.northwestern.edu}
}

\newtheorem{theorem}{Theorem}%

\newaliascnt{lemma}{theorem}
\newtheorem{lemma}[lemma]{Lemma}%
\aliascntresetthe{lemma}
\providecommand*{\lemmaautorefname}{Lemma}


\newaliascnt{corollary}{theorem}
\newtheorem{corollary}[corollary]{Corollary}%
\aliascntresetthe{corollary}
\providecommand*{\corollaryautorefname}{Corollary}

\newaliascnt{proposition}{theorem}
\newtheorem{proposition}[proposition]{Proposition}%
\aliascntresetthe{proposition}
\providecommand*{\propositionautorefname}{Proposition}


\newtheorem{definition}{Definition}
\newcommand{\definitionautorefname}{Definition}

\newtheorem{example}{Example}
\newcommand{\exampleautorefname}{Example}

\maketitle

\begin{abstract}
We consider the problem of budget feasible mechanism design proposed by \citet{sin-10}, but in a Bayesian setting.  A principal has a public value for hiring a subset of the agents and a budget, while the agents have private costs for being hired.  We consider both additive and submodular value functions of the principal.  We show that there are simple, practical, ex post budget balanced posted pricing mechanisms that approximate the value obtained by the Bayesian optimal mechanism that is budget balanced only in expectation.  A main motivating application for this work is crowdsourcing, e.g., on Mechanical Turk, where workers are drawn from a large population and posted pricing is standard. Our analysis methods relate to contention resolution schemes in submodular optimization of \citet{VCZ-11} and the correlation gap analysis of \citet{yan-11}.
\end{abstract}

\section{Introduction}

Consider the problem of hiring workers to
complete complex tasks on crowdsourcing platforms such as Mechanical
Turk.  A principal must select a set of participants, henceforth
agents, whose contributions will be aggregated to complete the task.
The principal's value for the task is a function of the set of
participants selected and the principal's budget limits the total
payments to participants.  We assume that the principal's value is
submodular, i.e., it exhibits diminishing returns to recruiting
additional participants.  The participants have a private cost for
participating and will choose to participate strategically to optimize
their payments received relative to this cost.  The principal seeks a
budget feasible mechanism for selecting participants so as to maximize
the value of the completed task.


The literature on {\em budget feasible mechanism design} initiated by
\citet{sin-10} studies this problem; however, it primarily considers
sealed-bid mechanisms which do not tend to be seen on crowdsourcing
platforms like Mechanical Turk.  Instead, these platforms use posted
pricing mechanisms.  We follow a traditional economics approach to
this problem where agents' costs are drawn from a common prior
distribution and a mechanism is sought to optimize the principal's
value function in expectation.  Note that this approach is especially
relevant to the principal's problem as the workers on crowdsourcing
platforms are drawn from a large population of available workers.
We show that posted pricing mechanisms give a good
approximation to the optimal sealed-bid mechanism.  Additionally, we
give efficient algorithms for calculating the appropriate prices. In comparison to other work in optimization of prices in
crowdsourcing, our work focuses on the use of prices to control
participation and not the level of effort of participants.  Controlling
the level of effort of participants was studied in online behavioral
experiments by \citet{HSSV-15}, theoretically for crowdsourcing
contests by \citet{CHS-12}, and for user generated content by
\citet{ISS-15}.

\paragraph{Overview of Approach.}

Our approach follows similarly to that of \citet{ala-14} and
\citet{yan-11}.  The starting point for our analysis is an upper bound
on the performance of the optimal sealed bid mechanism that relaxes
the ex post budget constraint on the mechanism to hold ex ante, i.e., in
expectation over the private costs of the agents.  Via this {\em ex ante
  relaxation} and the \citet{mye-81} theory of virtual values, we
construct a posted price mechanism that is budget feasible in
expectation and a $1 - 1/e$ approximation to the optimal ex ante
mechanism when the principal's value function exhibits decreasing
returns, i.e., is {\em submodular}.  For the special case
where the principal's value function is {\em additive}, this posted pricing
is optimal (for the ex ante relaxation).

We then consider posting the prices from the solution to the ex ante
relaxation until the budget runs out.  The resulting mechanism is ex
post budget feasible, but suffers a loss in performance because the
budget may run out early.  The main technical contribution of this
work is to show that the performance of such a price
posting mechanism compares favorably to the optimal sealed-bid
mechanism. Previous work in mechanism design gives techniques which are now well understood to satisfy ex post allocation constraints. Ex post payment constraints require different techniques and our analyses follow two basic approaches that combine optimization and mechanism design concepts.  To analyze the
performance of the posted pricing under any arrival order of the
agents, we solve the ex ante relaxation with a slightly smaller budget
and then, using results from the \citet{VCZ-11} analysis of {\em
  contention resolution schemes}, show that it is unlikely for the
original ex post budget constraint to bind. Alternatively, we obtain better bounds
for additive value functions and when the order of agent arrivals can
be specified by the mechanism via the {\em correlation gap} approach
of \citet{yan-11}.  As a corollary, we obtain new correlation gap results for integral and fractional knapsack set functions.  Moreover, when the environment is symmetric (both in
distribution of agent costs and the principal's value function), the
submodular case can be reduced to the additive case.

The prices identified above can be computed or approximately computed
in polynomial time.  In particular, for submodular value functions, we
reduce the problem of finding the prices to the well-known {\em greedy
  algorithm} for submodular optimization.  The identified prices
approximate the optimal prices with relative loss in the value
function that is within a factor of $1-1/e$.  For additive value
functions, the optimization problem simplifies to a monopoly pricing
problem of classic microeconomics. Similarly to the \citet{MS-83}
treatment of welfare maximization subject to budget balance in a
buyer--seller exchange, optimization in this context is based on {\em
  Lagrangian virtual surplus}.  These optimal prices can be
approximated arbitrarily precisely by solving this problem on a
discretized instance.





\paragraph{Related work.}

The prior literature on budget feasibility primarily considers a
worst-case design and analysis framework that compares the performance
of the designed mechanism to the {\em first-best} outcome, i.e., the
one that could be obtained if the agents' costs were public.  See
\citet{sin-10}, \citet{BCGL-12}, \citet{BKS-12}, and \citet{AGN-14}.
Our analysis compares the designed mechanism, in expectation for the
known prior distribution, to the {\em second-best} outcome, i.e., the
one obtained by the Bayesian optimal mechanism.  

The following results are for prior-free mechanisms in comparison to
the first-best outcome.  \citet{sin-10} obtained a randomized truthful
budget feasible mechanism with a constant factor approximation for
submodular value functions, \citet{CGL-11} then improved the analysis
of this mechanism to a 0.13 approximation. In the Bayesian setting,
\citet{BCGL-12} obtained a constant approximation for subadditive
functions. More recently, \citet{AGN-14} obtained better bounds by
considering large markets, which we also consider in this
paper. Finally, \citet{BKS-12} also considered posted pricing
mechanisms but when the agents arrive online.  They obtained a constant
approximation for the class of symmetric submodular functions. They
also obtained a $O(\log n)$ mechanism for the case of submodular
functions.  In comparison to this last paper, we give much better
bounds when the prior distribution on costs is known.

The starting point for our analysis is the solution to the relaxed
problem of budget balance in expectation, i.e., ex ante.  In the
additive case, this problem was recently studied by \citet{EG-14}.
They show that posted pricing mechanisms solve the relaxed problem and
remark that the same performance can be obtained with ex post budget
balance, but at the expense of relaxing ex post individual rationality
(for the bidders) and not with a posted pricing.  This latter
observation follows, for example, by applying a general construction
of \citet{EF-99}.  Our analysis of the relaxed problem gives a much simpler proof of
their main theorem.

Budget feasibility has also been studied in the context of
crowdsourcing. Among that line of work, the model considered in
\citet{AGN-14} is the closest to ours, and will be compared in detail
below.  \citet{SK-13} and \citet{SM-13} consider the special case of
our model where the principal's value function is the number of tasks
performed.  The former studies posted pricing for agents with i.i.d.\@
costs from an unknown distribution, while the latter studies sealed
bid mechanisms without a prior.

\paragraph{Our results.}

\begin{figure}
\begin{center}
\scalebox{0.75}{
\renewcommand{\arraystretch}{1.4}
\begin{tabular}{ |c|c|c|c|c |}
\hline
\rule{0pt}{3ex} Value Function & \begin{tabular}{c}Mechanism \\ Family\end{tabular} & \begin{tabular}{c}Ex Post Constraint \\ Approach\end{tabular} & General Result & \begin{tabular}{c}Large \\  Markets \end{tabular}  \rule[-1.2ex]{0pt}{0pt} \\
\hline \hline

Additive & \rule[1.2ex]{0pt}{0pt}  \begin{tabular}{c}Sequential\\ Posted Pricing\end{tabular}
                   & \begin{tabular}{c}Correlation \\  Gap \end{tabular} & $(1-\frac{1}{\sqrt{2 \pi k}})(1 - \frac{1}{k}) $& $1$\\\hline

\begin{tabular}{c}Symmetric \\ Submodular \end{tabular} & \rule[1.2ex]{0pt}{0pt}  \begin{tabular}{c}Oblivious\\ Posted Pricing\end{tabular}
                   & \begin{tabular}{c}Correlation \\  Gap \end{tabular} & $(1-\frac{1}{\sqrt{2 \pi k}})(1 - \frac{1}{k}) $& $1$\\\hline

 \begin{tabular}{c}Submodular \\ (computational) \end{tabular}  &\begin{tabular}{c}Oblivious \\ Posted Pricing\end{tabular} & \begin{tabular}{c}Contention \\  Resolution \end{tabular}
 & $(1 - \frac{1}{e })^2 ( 1 - \epsilon) (1 - e^{- \epsilon^2 (1 - \epsilon) k / 12})$ &  $(1 - \frac{1}{e })^2 \approx 0.40$ \\\hline
 
 \begin{tabular}{c}Submodular \\ (non-computational) \end{tabular} & \begin{tabular}{c}Oblivious \\ Posted Pricing\end{tabular} & \begin{tabular}{c}Contention \\  Resolution \end{tabular}
 & $(1 - \frac{1}{e }) ( 1 - \epsilon) (1 - e^{- \epsilon^2 (1 - \epsilon) k / 12})$ & $1 - \frac{1}{e } \approx 0.63$ \\\hline
\end{tabular}}
\caption{Our results are approximations to the
  Bayesian optimal mechanism.  Bounds are parameterized by the market size $k$, a lower bound on the number of
  agents that can be simultaneously selected with the given budget (see
  Definition \ref{def:largemarkets}). In large markets, $k$ grows large.  The given results with the contention resolution approach require $k \geq 4$ and
  $\epsilon \in (2 /k, 1/2)$, a result for $k < 4$ is mentioned in Section~\ref{s:crs}.. For the symmetric submodular results, we also assume symmetric distributions on costs. Our
computational results also have an additional
$o(1)$ loss due to discretization.} 
\end{center}
\label{f:results-table}
\end{figure}

Our results are summarized in Figure~\ref{f:results-table}. We
consider two main classes of valuation functions, additive and
submodular. We use two different methods to satisfy the ex post payment constraint, one is based on contention resolution schemes and the
other on correlation gap. Contention resolution schemes give an
oblivious posted price mechanism, i.e., one that obtains its proven
bound under any arrival order of the agents.  The correlation gap
approach, for the case where the principal has an additive value
function, gives a sequential posted price mechanism. Such a mechanism
is specified by an ordering on agents and take-it-or-leave-it prices
to offer each agent.  As a special case, we consider symmetric
environments where both the value function and the distribution is
symmetric.

Our results can most directly be compared to those of
\citet{AGN-14}, but with the following caveats. Their results are for
sealed bid mechanisms while ours are for posted pricings; their
mechanism is prior-free while ours is parameterized by the prior
distribution on agent costs; their results compare performance to the
first-best outcome, i.e., without incentive constraints, while ours
compare to the second-best outcome, i.e., that of the Bayesian optimal
mechanism (with incentive constraints). They obtain approximation ratios of $1 - 1/e$, $1/3$ and $1/2$ in large markets respectively for additive, submodular (computational), and submodular (non-computational) value functions. Moreover, they show that no truthful mechanism can achieve an approximation ratio better than $1 - 1/e$ with respect to the first-best outcome for additive value functions.

\paragraph{Discussion about posted pricing mechanisms and benchmarks.}
Following a line of literature in mechanism design that was initiated
by \citet{CHMS-10}, the goal of this work is to show that there exists simple posted pricing mechanisms that approximate the optimal sealed-bid mechanism. Two quantities of interest therefore need to be separated. The first is the cost of incentive compatibility in budget feasible settings, i.e., the gap between the first-best and second-best benchmarks. The second is the cost of simplicity, i.e., the loss of a posted pricing mechanism compared to the Bayesian optimal mechanism. Prior work with comparisons to a first-best benchmark has approximations that are a combination of both of these quantities. Our comparison to the
second-best outcome isolates the loss from a simple decentralized
pricing over the optimal centralized mechanism as the quantity of
interest.

\paragraph{Paper Organization.}

We start with preliminaries in Section~\ref{s:prelim} to introduce the
model and different concepts used in this paper. We then
describe posted price mechanisms for the ex ante relaxation, where the
budget holds in expectation, in Section~\ref{s:relaxations}. We
explain how to go from an ex ante posted price mechanism to an ex post
posted price mechanism using two different methods, one inspired by
contention resolution schemes in Section~\ref{s:crs} and another based
on a correlation gap analysis in Section~\ref{s:additive}. We
tackle the computation issues of finding a good ex ante mechanism in
Section~\ref{s:optimal}. In Section~\ref{s:sym}, we study symmetric environments. Up to Section~\ref{s:sym}, cost distributions are assumed to be regular and Section~\ref{s:irregular} considers the case where some distributions might be irregular. Throughout the paper, we assume that the principal's valuation
function is monotone and submodular.

\section{Preliminaries}
\label{s:prelim}

There are $n$ agents $N = \{1,\ldots,n\}$.  Agent $i$ has a private
cost $\costi$ for providing a service that is drawn from a
distribution $\disti$ (denoting the cumulative distribution function)
with density $\densi$.  Indicator variable $\alloci$ denotes whether
or not $i$ provides service and $\paymenti$ denotes the payment $i$
receives.  Agent $i$ aims to optimize her utility given by $\paymenti
- \costi \alloci$.  The cost profile is denoted $\costs =
(\costi[1],\ldots,\costi[n])$; the joint distribution on costs is the
product distribution $\dists = \disti[1] \times \cdots \times
\disti[n]$; the payment profile is denoted $\payments =
(\paymenti[1],\ldots,\paymenti[n])$; and the allocation profile is
denoted $\allocs = (\alloci[1],\ldots,\alloci[n])$.

The principal has a value function $\val : \{0,1\}^n \to \reals_{+}$. For
allocation profile $\allocs \in \{0,1\}^n$ or set of agents $S = \{i :
\alloci = 1\}$ who provide service, the value to the principal is
$\val(\allocs) = \val(S)$.  The principal has a budget $\budget$ and
requires the payments to the agents not to exceed the budget, i.e.,
$\sum_i \paymenti \leq \budget$. The following mathematical program
captures the principal's objective.
\begin{align}
\label{eq:main}
\max_{\allocs,\payments}\ &\expect[\costs]{\val(\allocs(\costs))}\\
\notag    \text{s.t. } &\sum\nolimits_i \paymenti(\costs) \leq \budget \hspace{0.5 cm} \forall \costs,\\
\notag                 &\text{$\allocs(\cdot)$ and $\payments(\cdot)$ are incentive compatible.}
\end{align}

We consider only mechanisms that are incentive compatible.  A
mechanism is {\em incentive compatible} (IC), if truthful reporting of
the agents is a dominant strategy equilibrium.\footnote{The
  restriction to dominant strategy mechanisms over Bayesian incentive
  compatible mechanisms is without loss for the budget feasibility
  objective.}  We will consider the budget constraint both ex ante,
i.e., in expectation over realizations of agents' costs and random
choices of the mechanism, and ex post, i.e., the payments to the
agents never exceed the budget.  The main goal of this paper is to
approximate the optimal ex ante budget feasible mechanism with an ex
post budget feasible posted pricing mechanism.  Posted pricing
mechanisms are trivially incentive compatible.

\begin{definition}
\label{d:posted-pricing}
The {\em posted pricing} $(\critcosts,\orders)$, for prices $\critcosts$ and ordering on agents $\orders$, is:
\begin{enumerate}
\item The remaining budget is initially $\budget$.
\item The agents arrive in order $\orders$.
\item If agent $i$ arrives with cost $\costi$ below her offered price
  $\critcosti$ which is below the remaining budget, then select this agent for service, pay her $\critcosti$, and deduct $\critcosti$ from the remaining budget. Otherwise, discard this agent. 
\end{enumerate}
\end{definition}

For (implicit) distribution on costs $\dists$, we can equivalently
specify a posted pricing $(\critcosts,\orders)$ as
$(\exquants,\orders)$ where $\exquanti = \disti(\critcosti)$ is the
{\em marginal probability} that agent $i$ with cost $\costi \sim
\disti$ would accept the price $\critcosti$.\footnote{It is common in
  Bayesian mechanism design to consider the agents' private costs in
  {\em quantile space} where $i$'s quantile $\quanti = \disti(\costi)$
  is the measure of cost lower than $\costi$ according to $\disti$.
  Agent quantiles are always uniformly distributed on $[0,1]$.  From
  this perspective, $\exquanti$ is agent $i$'s price in quantile
  space.} 

Note that the prices $\critcosts$ are non-adaptive, i.e., fixed before
the agents arrive.  We consider posted pricing mechanisms under two
different models for agent arrival.  In the {\em sequential posted
  pricing} model, the ordering $\orders$ can be fixed in advance by
the mechanism and, without computational considerations, our analysis
is for the best case ordering of the prices.  In the {\em oblivious
  posted pricing} model, the ordering $\orders$ is unconstrained and
our analysis is worst case with respect to this ordering.  An
oblivious posted pricing is denoted $\critcosts$.  We
compare our mechanisms to an {\em ex ante posted pricing}
$\critcosts$ where the budget constraint holds in expectation, i.e.,
$\sum_i \critcosti\,\critquanti \leq \budget$.  The value of an ex
ante posted pricing is $\expect[S\sim\critquants]{\val(S)}$ where
$S\sim\critquants$ adds each agent $i$ to $S$ independently with
probability $\critquanti$.




The paper focuses on value functions that are monotone and submodular
(Definition~\ref{d:submodular}).  An important special case, which we
will treat separately, is that of {\em additive value functions} where
each agent has a value $\vali$ and the value function is $\val(S) =
\sum_{i\in S} \vali$.
\begin{definition}
\label{d:submodular}
A set function $\val : \{0,1\}^n \to \reals_+$ is {\em monotone submodular} if
\begin{itemize}
\item (monotonicity) $\val(T) \leq \val(S)$ for all $T \subset S$, and
\item (submodularity) for all $T \subset S$ the marginal contribution of $i \not \in S$ to $T$ is at least its marginal contribution to $S$.  In other words,
$$
\val(T \cup \{i\}) - \val(T) \geq \val(S \cup \{i\}) - \val(S).
$$
\end{itemize} 
\end{definition}

Our analysis is based on the relationship between a set function and two
standard extensions of a set functions from the domain $\{0,1\}^n$ to
the domain $[0,1]^n$.  For submodular set functions, these extensions
were studied by \citet{CCPV-07} and \citet{ADSY-10}.
\begin{definition}
\label{d:submodular-extensions}
Given a set function $\val : \{0,1\}^n \to \reals_+$,
\begin{itemize}
\item its {\em concave closure} $\ccext(\cdot)$ (a.k.a., {\em
  correlated value}) is the smallest concave function that upper
  bounds the set function.  Alternatively, $\ccext(\exquants) =
  \max_{\setdist} \expect[S \sim \setdist]{\val(S)}$ with the
  maximization taken over all distributions $\setdist$ with marginal probabilities
  $\exquants = (\exquanti[1],\ldots,\exquanti[n])$; and
\item its {\em multilinear extension} $\mlext(\cdot)$ (a.k.a., {\em
  independent value}) is the expected value of the set function when
  each element $i$ is drawn independently with marginal probability
  $\exquanti$.  In other words, $\mlext(\exquants) = \expect[S \sim
    \exquants]{\val(S)}$.
\end{itemize}
\end{definition}

For any set function, the concave closure is clearly an upper bound on
the multilinear extension.  For submodular functions the inequality
approximately holds in the opposite direction as well.  By the
interpretation of the multilinear extension as the expected value of
the set function for independent distribution and the concave closure
as the expected value of the set function for correlated
distributions, their worst case ratio over marginal probabilities
$\exquants$ is known as the {\em correlation gap} \citep{ADSY-10}.


\begin{theorem}[\citealp{CCPV-07}, \citealp{ADSY-10}] 
\label{t:cg-submodular}
For monotone submodular set function $\val(\cdot)$, the correlation gap is 
$$\min_{\exquants} \frac{\mlext(\exquants)}{\ccext(\exquants)} \leq 1 - 1/e.$$
\end{theorem}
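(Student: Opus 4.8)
The plan is to establish the pointwise inequality $\mlext(\exquants) \geq (1 - 1/e)\,\ccext(\exquants)$ for every $\exquants \in [0,1]^n$, which is the content of the correlation-gap bound. Fix $\exquants$ and, using Definition~\ref{d:submodular-extensions}, let $\setdist^\star$ be a distribution over subsets with marginals $\exquants$ that attains $\ccext(\exquants) = \expect[S \sim \setdist^\star]{\val(S)}$. The strategy is to walk the multilinear extension along the segment from $\zeros$ to $\exquants$ and show that its value along the way obeys a differential inequality whose solution is $(1 - e^{-t})\,\ccext(\exquants)$, so that at $t=1$ we get the claimed factor.

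First I would record two consequences of the hypotheses for the multilinear extension $\mlext$, both immediate from the formula $\mlext(x) = \sum_{S} \val(S)\prod_{i \in S} x_i \prod_{i \notin S}(1 - x_i)$: monotonicity of $\val$ gives $\partial \mlext/\partial x_i \geq 0$ throughout the cube, and submodularity of $\val$ gives $\partial^2 \mlext/\partial x_i \partial x_j \leq 0$ for all $i,j$, so that $\mlext$ restricted to any line with nonnegative direction is concave. The key lemma is then that for every $y \in [0,1]^n$ and every set $S$,
$$
\sum_{i \in S} \frac{\partial \mlext}{\partial x_i}(y) \;\geq\; \val(S) - \mlext(y).
$$
Indeed, writing $y \vee \ones_S$ for the coordinatewise maximum, monotonicity of $\mlext$ together with $\mlext(\ones_S) = \val(S)$ gives $\mlext(y \vee \ones_S) \geq \val(S)$, while concavity of $\mlext$ along the nonnegative direction $y \vee \ones_S - y$ gives $\mlext(y \vee \ones_S) - \mlext(y) \leq \sum_{i \in S}(1 - y_i)\,\frac{\partial \mlext}{\partial x_i}(y) \leq \sum_{i \in S}\frac{\partial \mlext}{\partial x_i}(y)$, the last step using $\partial \mlext/\partial x_i \geq 0$ and $1 - y_i \leq 1$.

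Next I would average the key lemma over $S \sim \setdist^\star$. Since $\exquanti = \sum_{S \ni i}\setdist^\star(S)$, the left-hand side averages to $\sum_i \exquanti\,\frac{\partial \mlext}{\partial x_i}(y)$ and the right-hand side to $\ccext(\exquants) - \mlext(y)$. Putting $g(t) = \mlext(t\,\exquants)$ for $t \in [0,1]$, the chain rule gives $g'(t) = \sum_i \exquanti\,\frac{\partial \mlext}{\partial x_i}(t\,\exquants) \geq \ccext(\exquants) - g(t)$. Hence $\frac{\dd}{\dd t}\big(e^t g(t)\big) = e^t\big(g'(t) + g(t)\big) \geq e^t\,\ccext(\exquants)$; integrating over $[0,1]$ yields $e\,g(1) - g(0) \geq (e - 1)\,\ccext(\exquants)$, and since $g(0) = \mlext(\zeros) = \val(\emptyset) \geq 0$ we conclude $\mlext(\exquants) = g(1) \geq (1 - 1/e)\,\ccext(\exquants)$.

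I expect the main obstacle to be the key lemma — specifically, establishing concavity of $\mlext$ along nonnegative directions and cleanly combining the resulting first-order overestimate with monotonicity. Both ingredients follow from differentiating the explicit multilinear formula, but one must be mildly careful on the boundary of the cube, where some coordinate $y_i$ equals $0$ or $1$; this is handled either by working with one-sided derivatives or by proving the inequalities on the open cube and passing to the limit. The remaining steps — averaging against $\setdist^\star$ and solving the one-dimensional differential inequality — are routine.
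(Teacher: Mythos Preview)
The paper does not give its own proof of this theorem; it is quoted as a known result with citations to \citet{CCPV-07} and \citet{ADSY-10}, so there is nothing in the paper to compare your argument against. Your proof is correct and is essentially the standard continuous-greedy argument from those references: establish that $\mlext$ is monotone and concave along nonnegative directions, derive the key inequality $\sum_{i\in S}\partial_i \mlext(y)\ge \val(S)-\mlext(y)$, average over the optimal correlated distribution to get a differential inequality for $g(t)=\mlext(t\,\exquants)$, and integrate. One small remark: the inequality sign in the paper's displayed statement is reversed --- the content (and what you actually prove, and what the paper uses in the proof of Theorem~\ref{thm:mlextccext}) is that $\mlext(\exquants)/\ccext(\exquants)\ge 1-1/e$ for all $\exquants$.
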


\begin{theorem}[\citealp{yan-11}]
\label{t:cg-multi-unit} 
For a $k$-highest-value-elements set function $\val(\cdot)$, which is
additive with value $\vali$ for element $i$ up to a capacity of at most $k$
elements, the correlation gap is $$\min_{\exquants} \frac{ \mlext(\exquants) }{
\ccext(\exquants)} \leq 1-1/\sqrt{2 \crs k}.$$
\end{theorem}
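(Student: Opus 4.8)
The plan is to show that $\mlext(\exquants) \geq \bigl(1 - 1/\sqrt{2\crs k}\bigr)\,\ccext(\exquants)$ for every $\exquants \in [0,1]^n$, which is the content of the theorem, by a ``layer-cake plus Poissonization'' argument: I would peel the weighted top-$k$ function into unweighted level sets and reduce the whole estimate to a one-parameter inequality about a single Poisson random variable.

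First I would normalize the concave closure. Since $\val$ is the weighted rank function of the $k$-uniform matroid, $\ccext$ equals the value of the fractional-knapsack-with-capacities LP, $\ccext(\exquants) = \max\{\sum_i \vali z_i : 0 \le z_i \le \exquanti,\ \sum_i z_i \le k\}$: the ``$\le$'' direction follows from $\val(S) \le \sum_{i\in S}\vali$ by taking $z_i$ to be the probability that $i$ is among the $k$ selected elements of largest value, and the ``$\ge$'' direction uses that every feasible $z$ lies in the $k$-uniform matroid polytope and hence is the marginal vector of a distribution over sets of size at most $k$, which can be padded up to marginals $\exquants$ without decreasing $\val$. Let $\mathbf z^\star$ attain the maximum. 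Then $\mathbf z^\star \le \exquants$ gives $\mlext(\mathbf z^\star) \le \mlext(\exquants)$ by monotonicity of the multilinear extension, while $\sum_i z^\star_i \le k$ gives $\ccext(\mathbf z^\star) = \sum_i \vali z^\star_i = \ccext(\exquants)$; so $\mlext(\exquants)/\ccext(\exquants) \ge \mlext(\mathbf z^\star)/\ccext(\mathbf z^\star)$, and I may assume from now on that $\sum_i \exquanti \le k$ and $\ccext(\exquants) = \sum_i \vali\exquanti$.

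Next I would slice by value and Poissonize. With $S_{>t} := \{i \in S : \vali > t\}$, the layer-cake identities $\sum_{i\in S}\vali = \int_0^\infty |S_{>t}|\,dt$ and $\val(S) = \int_0^\infty \min(|S_{>t}|,k)\,dt$ give
\[
\ccext(\exquants) - \mlext(\exquants) \;=\; \expect[S\sim\exquants]{\textstyle\sum_{i\in S}\vali - \val(S)} \;=\; \int_0^\infty \expect{(|S_{>t}|-k)^+}\,dt .
\]
For fixed $t$, $|S_{>t}|$ is a sum of independent Bernoulli random variables of mean $\mu(t) = \sum_{i:\vali>t}\exquanti \le \sum_i\exquanti \le k$. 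A Bernoulli is dominated in the convex order by a Poisson of the same mean (their distribution functions cross exactly once and the means agree), and the convex order is preserved under independent convolution, so $|S_{>t}|$ is dominated in the convex order by $N_{\mu(t)}$, where $N_\lambda \sim \mathrm{Poisson}(\lambda)$; applying this to the convex function $z \mapsto (z-k)^+$ bounds the integrand by $g(\mu(t))$ with $g(\lambda) := \expect{(N_\lambda - k)^+}$.

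Finally I would control $g$ and assemble. Differentiating the Poisson series twice yields the clean identities $g'(\lambda) = \prob{N_\lambda \ge k}$ and $g''(\lambda) = \prob{N_\lambda = k-1} \ge 0$, so $g$ is convex on $[0,\infty)$ with $g(0) = 0$; hence $g(\lambda) \le (\lambda/k)\,g(k)$ for $\lambda \in [0,k]$. Since $\expect{N_k} = k$, $g(k) = \tfrac12\expect{|N_k - k|}$, and the classical identity $\expect{|N_k - k|} = 2k^{k+1}e^{-k}/k!$ together with Stirling's bound $k! \ge \sqrt{2\crs k}\,(k/e)^k$ gives $g(k) = k\cdot k^k e^{-k}/k! \le \sqrt{k/(2\crs)}$. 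Therefore $g(\mu(t)) \le \mu(t)/\sqrt{2\crs k}$, and integrating,
\[
\ccext(\exquants) - \mlext(\exquants) \;\le\; \frac{1}{\sqrt{2\crs k}}\int_0^\infty \mu(t)\,dt \;=\; \frac{1}{\sqrt{2\crs k}}\sum_i \vali\,\exquanti \;=\; \frac{\ccext(\exquants)}{\sqrt{2\crs k}},
\]
which rearranges to the claim. I expect the main obstacle to be the sharp constant at the endpoint of the last step: a crude Cauchy--Schwarz estimate yields only $g(k) \le \sqrt{k}/2$, which is too weak by a factor $\sqrt{\crs/2}$, so one genuinely needs the exact mean absolute deviation identity (or an equally tight estimate). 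It is the convexity of $g$ together with $g(0)=0$ that converts this single endpoint bound into the $\mu(t)$-proportional bound needed under the integral, and this is exactly what lets the weighted case go through with no additional loss beyond the unweighted estimate.
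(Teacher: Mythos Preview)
The paper does not give its own proof of this statement; it is quoted from \citet{yan-11} and invoked as a black box (for instance in the proof of Lemma~\ref{l:fractional-knapsack-additive-relaxation-cg}), so there is nothing in the paper to compare your argument against. Your proof is correct and establishes $\mlext(\exquants)\ge(1-1/\sqrt{2\crs k})\,\ccext(\exquants)$ for every $\exquants$, which is the direction the paper actually uses downstream (the ``$\le$'' in the displayed statement, as in Theorem~\ref{t:cg-submodular}, is a typo for ``$\ge$'').

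Each step checks out: the LP description of $\ccext$ for the weighted $k$-uniform matroid and the reduction to $\sum_i\exquanti\le k$; the layer-cake identities $\val(S)=\int_0^\infty\min(|S_{>t}|,k)\,dt$ and $\sum_{i\in S}\vali=\int_0^\infty|S_{>t}|\,dt$; the single-crossing argument giving convex-order domination of a Poisson--binomial by a Poisson of the same mean; the computation $g'(\lambda)=\prob{N_\lambda\ge k}$ and $g''(\lambda)=\prob{N_\lambda=k-1}\ge 0$, whence $g(\lambda)\le(\lambda/k)\,g(k)$ on $[0,k]$; and the endpoint evaluation $g(k)=k\,\prob{N_k=k}=k^{k+1}e^{-k}/k!\le\sqrt{k/(2\crs)}$ via the sharp Stirling lower bound on $k!$. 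Your diagnosis of the one delicate spot is also right: any second-moment or Cauchy--Schwarz surrogate for the exact mean-absolute-deviation identity loses precisely the factor $\sqrt{\crs/2}$, so the exact identity is essential for the stated constant.
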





Our analysis is parameterized by a measure of the size of the
market.  This notion of market size is standard in the literature,
e.g., see \citet{BCGL-12} and \citet{AGN-14}.  A large market analysis
considers the market size in the limit. Although large markets are described as an assumption by \citet{AGN-14}, the market size $k$ is a parameter in our analysis and we obtain results for any market size.

\begin{definition} \label{def:largemarkets}
A market is {\em $k$-large} for prices $\critcosts$ and budget $B$ if
$B / \critcosti \geq k$ for all agents $i$.
\end{definition}
Note that the market size depends on prices and therefore on the mechanism, which is inherent to our analysis. These prices can trivially be upper bounded by the maximum cost that can be drawn from the distributions. 

\section{The Ex Ante Budget Feasible and Concave Closure Relaxations}
\label{s:relaxations}

In this section we relax the objective function and the budget
constraint to make the problem more amenable to optimization. We first
relax the budget constraint so that it only holds in expectation,
making it an ex ante feasibility constraint. We then upper bound the
value function by its concave closure.  With an ex ante feasibility
constraint, the objective is to optimize the following ex ante program over
allocation rule $\allocs(\cdot)$ and payment rule $\payments(\cdot)$
with $\costs \sim \dists$.
\begin{align}
\label{eq:bayes-opt-prog}
\max_{\allocs,\payments}\ &\expect[\costs]{\val(\allocs(\costs))}\\
\notag    \text{s.t. } &\sum\nolimits_i \expect[\costs]{\paymenti(\costs)} \leq \budget,\\
\notag                 &\text{$\allocs(\cdot)$ and $\payments(\cdot)$ are IC.}
\end{align}

When payments are part of the principal's objective or constraints,
the Bayesian mechanism design problem will typically rely on the
\citet{mye-81} theory of virtual values or, in our case where the
agents are sellers, virtual costs.  The {\em virtual cost} of agent
$i$ with cost $\costi$ drawn from distribution $\disti$ is
$\virti(\costi) = \costi + \frac{\disti(\costi)}{\densi(\costi)}$.
The {\em virtual surplus} of an agent $i$ with virtual cost
$\virti(\costi)$ and allocation indicator $\alloci$ is
$\virti(\costi)\,\alloci$.

\begin{lemma}[\citealp{MS-83}] 
\label{l:payment=virtual-cost}
In any incentive compatible mechanism, any agent $i$'s expected
payment is equal to her expected virtual surplus, i.e., for $\costs
\sim \dists$,
$$\expect[\costs]{\paymenti(\costs)} = \expect[\costs]{\virti(\costs)\,\alloci(\costs)}.$$
\end{lemma}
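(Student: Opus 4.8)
The plan is to reduce to a single agent, apply the \citet{mye-81} characterization of incentive-compatible mechanisms in the cost domain, and then integrate by parts to reveal the virtual cost. Fix agent $i$ and condition on the other agents' cost profile $\costsmi$; let $\alloci(\costi)$ and $\paymenti(\costi)$ denote the resulting interim allocation probability and expected payment as functions of $i$'s reported cost $\costi$. (For a dominant-strategy IC mechanism these quantities and all the identities below hold pointwise in $\costsmi$, which is the cleanest place to start; for a Bayesian IC mechanism one uses independence of the $\costi$'s to pass to the interim quantities.) Incentive compatibility states that $i$'s interim utility $\utili(\costi) := \paymenti(\costi) - \costi\,\alloci(\costi)$ equals $\max_{\costi'}\bigl(\paymenti(\costi') - \costi\,\alloci(\costi')\bigr)$, i.e.\ it is a supremum of affine functions of $\costi$ with slopes $-\alloci(\costi') \le 0$. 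Hence $\utili$ is convex and non-increasing, $\alloci(\cdot)$ is non-increasing, and the envelope theorem gives $\utili'(\costi) = -\alloci(\costi)$ at all but countably many $\costi$. Integrating from $\costi$ up to the top $\bar\costi$ of the support of $\disti$ (with $\bar\costi = \infty$ when the support is unbounded) yields $\utili(\costi) = \utili(\bar\costi) + \int_{\costi}^{\bar\costi}\alloci(z)\,\dd z$, equivalently $\paymenti(\costi) = \costi\,\alloci(\costi) + \int_{\costi}^{\bar\costi}\alloci(z)\,\dd z + \utili(\bar\costi)$.

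Next I would take the expectation over $\costi \sim \disti$ and swap the order of integration in the double integral (Fubini, the integrand being non-negative):
\[
\expect[\costi]{\int_{\costi}^{\bar\costi}\alloci(z)\,\dd z} \;=\; \int \alloci(z)\,\disti(z)\,\dd z \;=\; \expect[\costi]{\frac{\disti(\costi)}{\densi(\costi)}\,\alloci(\costi)}.
\]
Adding back the term $\expect[\costi]{\costi\,\alloci(\costi)}$ and recognizing $\virti(\costi) = \costi + \disti(\costi)/\densi(\costi)$ gives $\expect[\costi]{\paymenti(\costi)} = \expect[\costi]{\virti(\costi)\,\alloci(\costi)} + \utili(\bar\costi)$. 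Taking the expectation over $\costsmi$ and using $\utili(\bar\costi) = 0$ then yields the claimed identity.

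The only step that requires a word of care — and the one I would flag as the (minor) obstacle — is the boundary term $\utili(\bar\costi)$. Individual rationality, which is implicit in the budget-feasibility setting, gives $\utili \ge 0$; since $\utili$ is non-increasing, its infimum over the support is attained at $\bar\costi$, and pinning that infimum to zero is standard in this literature and without loss for program~(\ref{eq:bayes-opt-prog}), as a positive value of $\utili(\bar\costi)$ only slackens the budget constraint. The remaining technicalities are routine: monotonicity of $\alloci(\cdot)$ already guarantees the almost-everywhere differentiability of $\utili$ needed for the integral representation, and in the unbounded-support case one takes $\bar\costi \to \infty$ using that $\alloci(\cdot)$ is integrable against $\disti$, so $\utili$ has a well-defined (necessarily zero) limit. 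Everything else is the textbook Myerson computation, adapted to the procurement sign convention in which a higher cost means a lower allocation.
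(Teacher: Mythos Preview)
Your argument is the standard Myerson derivation in the procurement sign convention and is correct. Note, however, that the paper does not supply its own proof of this lemma: it is stated as a citation to \citet{MS-83} and used as a black box, so there is no ``paper's proof'' to compare against. Your treatment of the boundary term $\utili(\bar\costi)$ is the only substantive point beyond bookkeeping, and you handle it appropriately; as you observe, the lemma as written tacitly assumes the normalization $\utili(\bar\costi)=0$, which is without loss for the paper's optimization since a positive constant only relaxes the budget constraint in program~\eqref{eq:bayes-opt-prog}.
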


The definition of virtual costs and Lemma~\ref{l:payment=virtual-cost}
allows the ex ante program~\eqref{eq:bayes-opt-prog} to be rewritten in terms
of the allocation rule only.  To do so, we invoke the following
characterization of incentive compatible mechanisms of
\citet{mye-81}.

\begin{lemma}[\citealp{mye-81}]
\label{l:IC=monotone}
There exists an incentive compatible mechanism with allocation rule
$\allocs(\cdot)$ if and only if $\allocs(\cdot)$ is monotone in the
cost of any agent.
\end{lemma}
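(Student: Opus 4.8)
First I would observe that the statement is the allocation half of Myerson's characterization: ``there exists an incentive compatible mechanism with allocation rule $\allocs(\cdot)$'' means that one can supplement $\allocs$ with \emph{some} payment rule $\payments$ making $(\allocs,\payments)$ IC, so the $(\Leftarrow)$ direction will amount to exhibiting an explicit payment rule. The plan is to prove the two directions separately via the standard argument adapted to our procurement setting --- agents are sellers, so ``monotone in the cost of any agent'' means each agent's allocation probability is \emph{non-increasing} in her reported cost. Throughout, fix an agent $i$ and a report profile $\costsmi$ of the others and work with the interim quantities $\bar x_i(\cdot)=\expect{\alloci(\cdot,\costsmi)}$ and $\bar p_i(\cdot)=\expect{\paymenti(\cdot,\costsmi)}$ (expectations over the mechanism's internal coins); since incentive compatibility is in dominant strategies, every statement can be made at this interim level for each fixed $\costsmi$.

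\textbf{Necessity.} Assume $(\allocs,\payments)$ is IC and take costs $c<c'$. Writing the incentive constraint that an agent of cost $c$ does not gain by reporting $c'$, and that an agent of cost $c'$ does not gain by reporting $c$,
\begin{align*}
\bar p_i(c)-c\,\bar x_i(c)&\ge \bar p_i(c')-c\,\bar x_i(c'),\\
\bar p_i(c')-c'\,\bar x_i(c')&\ge \bar p_i(c)-c'\,\bar x_i(c),
\end{align*}
and adding the two inequalities, the payment terms cancel and one is left with $(c'-c)\big(\bar x_i(c)-\bar x_i(c')\big)\ge 0$, hence $\bar x_i(c)\ge\bar x_i(c')$. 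As this holds for every fixed $\costsmi$, $\allocs$ is monotone in each agent's cost.

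\textbf{Sufficiency.} Assume $\allocs$ is monotone, let $\bar c_i$ be an upper bound on the support of $\disti$, and define
$$
\paymenti(\costi,\costsmi)=\costi\,\bar x_i(\costi,\costsmi)+\int_{\costi}^{\bar c_i}\bar x_i(z,\costsmi)\,\dd z,
$$
a function of $i$'s own report and of $\costsmi$ only (for deterministic $\allocs$ this is just the threshold cost below which $i$ is selected, i.e. the form used by posted pricings in Definition~\ref{d:posted-pricing}; the integral term also makes the mechanism ex post individually rational for the agents, which is not needed here). Substituting, the interim utility of a cost-$\costi$ agent who reports $c'$ is $(c'-\costi)\bar x_i(c')+\int_{c'}^{\bar c_i}\bar x_i(z)\,\dd z$, so its gain from truthful reporting over reporting $c'$ equals $\int_{\costi}^{c'}\big(\bar x_i(z)-\bar x_i(c')\big)\,\dd z$ when $c'\ge\costi$ and $\int_{c'}^{\costi}\big(\bar x_i(c')-\bar x_i(z)\big)\,\dd z$ when $c'<\costi$; in both cases the integrand is non-negative because $\bar x_i$ is non-increasing. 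Hence truthful reporting is optimal for every $\costsmi$, so $(\allocs,\payments)$ is IC.

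\textbf{Main obstacle.} The result is classical, so there is no real difficulty; the only care needed is (i) phrasing everything through the interim maps $\bar x_i,\bar p_i$ so the argument covers randomized mechanisms, and (ii) noting that monotonicity of $\bar x_i$ makes the integral in the payment formula well defined on the bounded support of $\disti$ (a monotone function is Riemann integrable). If one wanted to avoid even that, one could restrict attention to deterministic $\{0,1\}$-allocations and use the threshold-payment form directly.
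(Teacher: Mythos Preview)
Your argument is the standard Myerson proof and is correct. Note, however, that the paper does not actually prove this lemma: it is stated with attribution to \citet{mye-81} and used as a black box, so there is no ``paper's own proof'' to compare against. Your write-up would serve as a fine self-contained justification if one were desired.
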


We now rewrite the optimization program~\eqref{eq:bayes-opt-prog} by
substituting in virtual costs for
payments to obtain the following virtual surplus program,

\begin{align}
\label{eq:relaxed-bayes-opt-prog}
\max_{\allocs}\ &\expect[\costs]{\val(\allocs(\costs))}\\
\notag    \text{s.t. } &\sum\nolimits_i \expect[\costs]{\virti(\costs)\,\alloci(\costs)} \leq \budget, \\
\notag & \allocs(\cdot) \text{ is monotone in the cost of any agent.}
\end{align}

 For the general case of submodular value functions, the expected
 value of the set function $\val(\cdot)$ is upper bounded by its
 concave closure (Definition~\ref{d:submodular-extensions}) as
 follows.  The allocation rule $\allocs(\cdot)$ that optimizes this
 virtual surplus program induces, for $\costs \sim \dists$, a
 distribution over sets of winning agents.  Denote this distribution
 by $\setdist$ and denote by $\exquants$ the profile of marginal
 probabilities, i.e., with $\exquanti = \prob[S \sim \setdist]{ i \in
   S}$.  By the definition of the concave closure of the set function
 $\val(\cdot)$, $\expect[\costs]{\val(\allocs(\costs))} =
 \expect[S\sim \setdist]{\val(S)} \leq \ccext(\exquants)$.

The payment to an agent is lower bounded by the payment from price
posting.  As above, the optimal mechanism selects agent $i$ with
probability $\exquanti$.  When virtual costs are monotonically
increasing, i.e., in the case of \emph{regular distributions}, the expected
payment to an agent $i$ selected with probability $\exquanti$ is
minimized if agent $i$ is served if and only if $\costi \leq
\disti^{-1}(\exquanti)$ by Lemma~\ref{l:payment=virtual-cost} since
these costs minimize $\virti(\costs)$.\footnote{The case of irregular distributions is considered in Section~\ref{s:irregular}.}  Thus, the mechanism that minimizes expected
payments and serves each agent $i$ with probability $\exquanti$ is the
mechanism that posts price $\pricei = \disti^{-1}(\exquanti)$ to each
agent $i$.

\begin{lemma} 
For any agent with cost drawn from regular distribution $\disti$ and
any incentive compatible mechanism that selects agent $i$ with
probability $\exquanti$, the expected payment of agent $i$ is at least
$\exquanti \critcosti$ where $\critcosti = \disti^{-1}(\exquanti)$.
\end{lemma}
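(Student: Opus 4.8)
The plan is to reduce agent $i$'s expected payment to her expected virtual surplus via the \citet{MS-83} characterization (Lemma~\ref{l:payment=virtual-cost}), and then to show that among all incentive compatible mechanisms serving $i$ with probability $\exquanti$ the expected virtual surplus is minimized by the threshold rule that serves $i$ exactly when $\costi \leq \critcosti$ --- equivalently, by posting the price $\critcosti = \disti^{-1}(\exquanti)$. Since $\disti$ is regular, the virtual cost $\virti$ is non-decreasing, which is what makes this threshold rule optimal; the bound will in fact be tight, with equality achieved by that posted pricing.

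First I would pass to agent $i$'s interim allocation $\balloci(\costi) = \expect[\costsmi]{\alloci(\costi,\costsmi)}$, which lies in $[0,1]$ because costs are independent, is non-increasing in $\costi$ by Lemma~\ref{l:IC=monotone}, and satisfies $\expect[\costi\sim\disti]{\balloci(\costi)} = \exquanti$ by hypothesis. By Lemma~\ref{l:payment=virtual-cost}, agent $i$'s expected payment equals $\expect[\costi\sim\disti]{\virti(\costi)\,\balloci(\costi)}$, so it suffices to lower bound this. Let $\balloci\opt$ be the threshold rule with $\balloci\opt(\costi) = 1$ for $\costi \leq \critcosti$ and $\balloci\opt(\costi) = 0$ otherwise; it is a feasible interim allocation since $\disti(\critcosti) = \exquanti$. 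The crux is the pointwise inequality $\virti(\costi)\bigl(\balloci(\costi) - \balloci\opt(\costi)\bigr) \geq \virti(\critcosti)\bigl(\balloci(\costi) - \balloci\opt(\costi)\bigr)$: when $\costi \leq \critcosti$ the factor $\balloci(\costi) - \balloci\opt(\costi)$ is non-positive and $\virti(\costi) \leq \virti(\critcosti)$, and when $\costi > \critcosti$ the factor is non-negative and $\virti(\costi) \geq \virti(\critcosti)$. Taking expectations, the right-hand side becomes $\virti(\critcosti)\bigl(\exquanti - \exquanti\bigr) = 0$, so $\expect{\virti\,\balloci} \geq \expect{\virti\,\balloci\opt}$.

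It then remains to evaluate $\expect[\costi\sim\disti]{\virti(\costi)\,\balloci\opt(\costi)} = \int_0^{\critcosti} \bigl(\costi + \disti(\costi)/\densi(\costi)\bigr)\,\densi(\costi)\,\dd\costi$; integrating $\int_0^{\critcosti}\costi\,\densi(\costi)\,\dd\costi$ by parts gives $\critcosti\,\disti(\critcosti) - \int_0^{\critcosti}\disti(\costi)\,\dd\costi$, so the two terms combine to $\critcosti\,\disti(\critcosti) = \critcosti\,\exquanti$, as claimed. (Alternatively, one can invoke Lemma~\ref{l:payment=virtual-cost} a second time: $\expect{\virti\,\balloci\opt}$ is the expected payment of the mechanism that posts price $\critcosti$ to agent $i$, which pays $\critcosti$ with probability $\exquanti$.) The only non-routine ingredient is the rearrangement step together with the reduction to the interim allocation rule; regularity enters exactly there, through monotonicity of $\virti$, and it is precisely the hypothesis that an irregular distribution would violate, necessitating the ironing treatment deferred to Section~\ref{s:irregular}.
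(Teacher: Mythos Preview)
Your proposal is correct and follows essentially the same approach as the paper: the paper's justification (given in the paragraph immediately preceding the lemma) also invokes Lemma~\ref{l:payment=virtual-cost} to reduce payments to virtual surplus and then observes that, by monotonicity of $\virti$ under regularity, serving exactly the lowest-cost types minimizes that expected virtual surplus. You simply spell out the rearrangement inequality and the integration-by-parts computation that the paper leaves implicit.
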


Combining the relaxation of the value function and the relaxation of
the payments we obtain the following concave closure program,
\begin{align}
\label{eq:relaxed-concave-bayes-opt-prog}
\max_{\quants}\ &\ccext(\quants) \\
\notag    \text{s.t. } &\sum\nolimits_i \quanti \disti^{-1}(\quanti) \leq \budget.
\end{align}

\begin{lemma}\label{l:ub} Let $\exquants^+$ be the optimal solution to the concave closure program~\eqref{eq:relaxed-concave-bayes-opt-prog}, then $\ccext(\exquants^+)$ upper bounds the performance of the optimal ex ante mechanism in the case of regular cost distributions.
\end{lemma}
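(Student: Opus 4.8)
The plan is to take an arbitrary optimal ex ante budget feasible mechanism and exhibit the profile of selection probabilities it induces as a feasible point of the concave closure program~\eqref{eq:relaxed-concave-bayes-opt-prog} whose objective value is at least the mechanism's value; optimality of $\exquants^+$ then closes the argument. So first I would fix an optimal (possibly randomized) incentive compatible mechanism $(\allocs(\cdot),\payments(\cdot))$ for the ex ante program~\eqref{eq:bayes-opt-prog}, let $\setdist$ be the distribution over winning sets it induces when $\costs\sim\dists$, and let $\exquants$ be the corresponding profile of marginals, $\exquanti=\prob[S\sim\setdist]{i\in S}$.

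The objective bound is immediate from Definition~\ref{d:submodular-extensions}: since $\ccext(\exquants)$ is the maximum of $\expect[S\sim\setdist']{\val(S)}$ over all distributions $\setdist'$ with marginals $\exquants$, and $\setdist$ is one such distribution, we get $\expect[\costs]{\val(\allocs(\costs))}=\expect[S\sim\setdist]{\val(S)}\le\ccext(\exquants)$.

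For feasibility, I would invoke the characterization lemmas in sequence. By Lemma~\ref{l:IC=monotone} the allocation rule is monotone in each agent's cost, so Lemma~\ref{l:payment=virtual-cost} applies and $\expect[\costs]{\paymenti(\costs)}=\expect[\costs]{\virti(\costi)\,\alloci(\costs)}$ for every $i$. Because the cost distributions are regular, virtual costs are monotone, and the minimum payment lemma above gives $\expect[\costs]{\paymenti(\costs)}\ge\exquanti\,\disti^{-1}(\exquanti)$: among all IC rules selecting $i$ with probability $\exquanti$, the expected payment is minimized by serving $i$ exactly on the $\exquanti$-measure of lowest costs, which (by monotonicity of $\virti$) are precisely the costs of smallest virtual cost. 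Summing over $i$ and using the ex ante budget constraint of the mechanism yields $\sum_i\exquanti\,\disti^{-1}(\exquanti)\le\sum_i\expect[\costs]{\paymenti(\costs)}\le\budget$, so $\exquants$ satisfies the constraint of program~\eqref{eq:relaxed-concave-bayes-opt-prog}.

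Combining the two facts, the value of the optimal ex ante mechanism is at most $\ccext(\exquants)\le\ccext(\exquants^+)$, the last inequality by optimality of $\exquants^+$. The only use of regularity is the agent-by-agent payment lower bound, and I expect that step — arguing that the cost-minimizing support realizing a target selection probability coincides with the virtual-cost-minimizing one — to be the single point that needs care; everything else is a direct chaining of the earlier lemmas. (The irregular case, where this coincidence fails, is exactly what Section~\ref{s:irregular} handles separately.)
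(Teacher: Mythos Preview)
Your proposal is correct and follows essentially the same route as the paper: the paper's argument (given in the discussion preceding the lemma rather than in a formal proof block) likewise takes the optimal ex ante mechanism, extracts the marginal selection probabilities $\exquants$, upper bounds the objective by $\ccext(\exquants)$ via Definition~\ref{d:submodular-extensions}, and uses regularity together with Lemma~\ref{l:payment=virtual-cost} to lower bound each agent's expected payment by $\exquanti\,\disti^{-1}(\exquanti)$, establishing feasibility. Your write-up is in fact slightly more explicit than the paper's about the chaining of Lemmas~\ref{l:payment=virtual-cost} and~\ref{l:IC=monotone}, but the substance is identical.
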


Posted price mechanisms are trivially incentive compatible.  Since the distributions of agents' costs are independent, the set of agents who will accept their offer with a posted price mechanism is a set which will contain each agent with some probability $\quanti$ independently. Therefore the performance of a posted price mechanism where agents accept their offer with probabilities $\quants$ is the multilinear extension $\mlext(\quants)$. This motivates us to rewrite the concave closure program~\eqref{eq:relaxed-concave-bayes-opt-prog} as the following multilinear extension program,
\begin{align}
\label{eq:multilinear-prog}
\max_{\quants}\ &\mlext(\quants) \\
\notag    \text{s.t. } &\sum\nolimits_i \quanti \disti^{-1}(\quanti) \leq \budget.
\end{align}
Maximizing the multilinear extension program gives us an ex ante posted price mechanism that is approximately optimal.

\begin{theorem}
\label{thm:mlextccext}  In the case of monotone submodular value functions and regular cost distributions, the ex ante mechanism that posts price $\critcost_i = \disti^{-1}(\exquanti)$ to each agent $i$ is an $1  - 1/e$ approximation to the optimal ex ante mechanism, where $\exquants$ is the optimal solution to the multilinear extension program~\eqref{eq:multilinear-prog}.
\end{theorem}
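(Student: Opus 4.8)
The plan is to chain together three ingredients that are already in hand: the upper bound of Lemma~\ref{l:ub}, the correlation gap bound of Theorem~\ref{t:cg-submodular}, and the optimality of $\exquants$ for the multilinear extension program~\eqref{eq:multilinear-prog}. The observation that makes the argument go through is that the concave closure program~\eqref{eq:relaxed-concave-bayes-opt-prog} and the multilinear extension program~\eqref{eq:multilinear-prog} have \emph{identical} feasible regions --- the set of $\quants \in [0,1]^n$ with $\sum_i \quanti \disti^{-1}(\quanti) \leq \budget$ --- and differ only in whether the objective is $\ccext$ or $\mlext$.

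First I would invoke Lemma~\ref{l:ub}: for regular cost distributions, if $\exquants^+$ is an optimal solution of the concave closure program~\eqref{eq:relaxed-concave-bayes-opt-prog}, then $\ccext(\exquants^+)$ upper bounds the value of the optimal ex ante mechanism. Next, since $\val(\cdot)$ is monotone submodular, the correlation gap bound of Theorem~\ref{t:cg-submodular} gives $\mlext(\quants) \geq (1 - 1/e)\,\ccext(\quants)$ for every $\quants$; applying this at $\quants = \exquants^+$ yields $\mlext(\exquants^+) \geq (1 - 1/e)\,\ccext(\exquants^+)$. Then, because $\exquants^+$ is feasible for~\eqref{eq:multilinear-prog} (the two programs share a feasible region), optimality of $\exquants$ for the multilinear extension program gives $\mlext(\exquants) \geq \mlext(\exquants^+)$. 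Stringing these inequalities together, $\mlext(\exquants) \geq (1-1/e)\,\ccext(\exquants^+)$, which is at least a $1-1/e$ fraction of the optimal ex ante value. To finish, I would note that posting $\critcost_i = \disti^{-1}(\exquanti)$ to each agent is a valid ex ante budget feasible posted pricing: agents accept independently, so (as discussed just before the theorem) its expected value is exactly $\mlext(\exquants)$, and its expected total payment is $\sum_i \exquanti\,\critcost_i = \sum_i \exquanti\,\disti^{-1}(\exquanti) \leq \budget$.

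I do not expect a real obstacle here --- the substance of the result lives in the relaxations constructed in Section~\ref{s:relaxations} and in the cited correlation gap bound. The one point to be careful about is the direction of the $\mlext$-vs-$\ccext$ comparison: one cannot simply compare the optimal values of the two programs (that inequality runs the wrong way, since $\mlext \leq \ccext$ pointwise), but must instead plug the concave-closure optimizer $\exquants^+$ into the multilinear objective and absorb the $1-1/e$ loss at that step, using feasibility of $\exquants^+$ for the multilinear program and optimality of $\exquants$ to close the loop.
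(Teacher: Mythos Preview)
Your proposal is correct and is essentially the same argument as the paper's own proof: invoke Lemma~\ref{l:ub} for the upper bound $\ccext(\exquants^+)$, apply the correlation gap (Theorem~\ref{t:cg-submodular}) at $\exquants^+$, and then use optimality of $\exquants$ for program~\eqref{eq:multilinear-prog} together with the shared feasible region to conclude $\mlext(\exquants)\geq (1-1/e)\,\ccext(\exquants^+)$. Your explicit remarks that the two programs have the same feasible region and that the posted pricing meets the ex ante budget are helpful clarifications but do not alter the approach.
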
 
\begin{proof}
Let $\exquants^{+}$ be the optimal solution to the concave closure
program~\eqref{eq:relaxed-concave-bayes-opt-prog}. By
Theorem~\ref{t:cg-submodular}, $\mlext(\exquants^{+}) \geq (1 -
1/e)\ccext(\exquants^{+})$. By the optimality of $\exquants$,
$\mlext(\exquants) \geq \mlext(\exquants^{+})$. Since the performance
of posting price $\disti^{-1}(\exquanti)$ to each agent $i$ is
$\mlext(\exquants)$ and since $\ \ccext(\exquants^{+})$ upper bounds
the performance of the optimal ex ante mechanism by Lemma~\ref{l:ub}, posting price
$\disti^{-1}(\exquanti)$ to each agent is an $1 - 1/e$ approximation
to the optimal ex ante mechanism.
\end{proof}

Note that in the additive case where each agent has value $\vali$,
$\mlext(\quants) = \ccext(\quants) = \sum_i \vali \quanti$ and we get
the following corollary.
\begin{corollary} \label{c:optexante} In the case of additive value functions and regular cost distributions, the ex ante mechanism that posts price $\critcost_i = \disti^{-1}(\exquanti)$ to each agent $i$ is an optimal mechanism, where $\exquants$ is the optimal solution to the multilinear extension program~\eqref{eq:multilinear-prog}.
\end{corollary}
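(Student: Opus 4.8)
The plan is to observe that the entire proof of Theorem~\ref{thm:mlextccext} goes through verbatim in the additive case, except that the one inequality that is lossy there --- the correlation gap bound $\mlext(\exquants^+) \geq (1-1/e)\,\ccext(\exquants^+)$ from Theorem~\ref{t:cg-submodular} --- becomes an equality. Concretely, the first step is to record that when $\val(S) = \sum_{i \in S} \vali$ is additive, linearity of expectation gives $\mlext(\quants) = \expect[S \sim \quants]{\val(S)} = \sum_i \vali \quanti$, and likewise $\ccext(\quants) = \max_{\setdist} \expect[S \sim \setdist]{\val(S)} = \sum_i \vali \quanti$, since \emph{every} distribution with marginal probabilities $\quants$ has the same expected value for an additive set function. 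Hence the concave closure program~\eqref{eq:relaxed-concave-bayes-opt-prog} and the multilinear extension program~\eqref{eq:multilinear-prog} have identical constraints and identical objectives, so they share the same optimal value, and we may take $\exquants^+ = \exquants$.

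The second step is to invoke Lemma~\ref{l:ub}: for regular cost distributions, $\ccext(\exquants^+)$ upper bounds the value of every incentive compatible ex ante budget feasible mechanism. The third step is to check that the proposed posted pricing is itself such a mechanism attaining this bound. It is trivially incentive compatible; its expected payment to agent $i$ is $\exquanti\,\critcosti = \exquanti\,\disti^{-1}(\exquanti)$ (by the computation preceding the statement), so its total expected payment is $\sum_i \exquanti\,\disti^{-1}(\exquanti) \leq \budget$, where the inequality is exactly the feasibility of $\exquants$ for program~\eqref{eq:multilinear-prog}; and its value is $\mlext(\exquants)$, as noted just before the multilinear extension program.

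Finally, chaining the pieces: the posted pricing achieves value $\mlext(\exquants) = \ccext(\exquants) = \ccext(\exquants^+)$, which by Lemma~\ref{l:ub} is an upper bound on the optimal ex ante mechanism, so the posted pricing is optimal. There is no genuine obstacle here --- the corollary is immediate once one records that the correlation gap of an additive function equals $1$ --- so the only thing to be careful about is the bookkeeping in the third step, confirming that the posted pricing is a legitimately feasible ex ante mechanism and that its value is precisely $\mlext(\exquants)$; both facts were already established in the discussion leading up to program~\eqref{eq:multilinear-prog}.
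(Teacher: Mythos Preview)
Your proposal is correct and takes essentially the same approach as the paper: the paper's entire justification is the one-line observation immediately preceding the corollary that for additive $\val$, $\mlext(\quants) = \ccext(\quants) = \sum_i \vali \quanti$, so the $1-1/e$ loss in Theorem~\ref{thm:mlextccext} disappears. Your write-up simply spells out the bookkeeping (feasibility and value of the posted pricing) that the paper leaves implicit from the surrounding discussion.
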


We discuss the computational issues of finding a good solution
$\quants$ to the multilinear extension
program~\eqref{eq:multilinear-prog} in Section~\ref{s:optimal}. For
the case of submodular functions, we reduce the problem to submodular
function maximization (with a cardinality constraint) for which the greedy
algorithm gives an $1-1/e$ approximation.  In the additive case, we
will show that the optimal ex ante budget feasible mechanism can be
found by taking the Lagrangian relaxation of the virtual surplus
program~\eqref{eq:relaxed-bayes-opt-prog}.

\section{Submodular Value and Oblivious Posted Pricing}
\label{s:crs}

In the previous section, we obtained an ex ante mechanism by
optimizing the multilinear extension
program~\eqref{eq:multilinear-prog}. In this section we analyze the
performance of oblivious posted pricing (with an ex post budget
constraint).

The approach of this section is the following: lower the budget by
some small amount and optimize the multilinear extension
program~\eqref{eq:multilinear-prog} so that the lowered budget is
satisfied ex ante.  With the budget sufficiently lowered, with high
probability the cost (sum of prices) of the set of agents who would
accept their offer is under the original budget (regardless of their
arrival order and ex post).

This approach is a special case of that taken by the contention
resolution schemes of \citet{VCZ-11} and we first review some known
bounds.  The first comes from the submodularity of the value function; the second comes from the Chernoff bound.

\begin{theorem}[\citealp{BKNS-10}] 
\label{t:cr-main}
Given a non-negative monotone submodular function $\val(\cdot)$, a random set $R$ which contains each agent $i$ independently with probability $\exquanti$, and a (possibly randomized) procedure $\pi$
that maps (possibly infeasible) sets to feasible sets such that,
\begin{itemize}
\item (marginal property) for all $i$, $\prob[R\sim \exquants;\pi]{i \in \pi(R) \given i
  \in R} \geq \gamma$, and
\item (monotonicity  property) for all $T \subseteq S$ and $i \in T$, $\prob[\pi]{i \in \pi(T)} \geq \prob[\pi]{i \in \pi(S)}$,
\end{itemize}
then $\expect[R\sim \exquants;\pi]{\val(\pi(R))} \geq \gamma \cdot \expect[R\sim \exquants]{\val(R)}$.
\end{theorem}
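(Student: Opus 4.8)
The plan is to reduce the inequality to a per-agent statement via a telescoping decomposition of $\val$ along a fixed order, and then prove each per-agent bound with the FKG inequality, using submodularity and the monotonicity of $\pi$ to supply the two monotone functions FKG needs. First I would normalize so that $\val(\emptyset)=0$: subtracting the constant $\val(\emptyset)$ preserves monotone submodularity, and the bound for the normalized function implies the claim for $\val$ since $(1-\gamma)\,\val(\emptyset)\ge 0$. Next, fix an arbitrary order $1,\dots,n$ on the agents, and for a set $S$ write $S_{<i}=S\cap\{1,\dots,i-1\}$ and $\Delta_i(A):=\val(A\cup\{i\})-\val(A)\ge 0$. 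Telescoping gives $\val(S)=\sum_i \mathbf{1}\{i\in S\}\,\Delta_i(S_{<i})$ for every $S$. Applying this both to $R$ and to $\pi(R)$, using that $\{i\in R\}$ is independent of $R_{<i}$ and that $\pi(R)\subseteq R$ (so $\pi(R)_{<i}\subseteq R_{<i}$ and hence $\Delta_i(\pi(R)_{<i})\ge\Delta_i(R_{<i})$ by submodularity), this reduces the theorem to showing, for every $i$,
\[
\expect[R\sim\exquants;\pi]{\mathbf{1}\{i\in\pi(R)\}\,\Delta_i(R_{<i})}\ \ge\ \gamma\cdot\exquanti\cdot\expect[R\sim\exquants]{\Delta_i(R_{<i})}.
\]

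To prove this per-agent bound I would condition on $i\in R$ (an event of probability $\exquanti$; on its complement the left-hand side is $0$ because $\pi(R)\subseteq R$), and then average out the suffix membership $\{j\in R:j>i\}$ and the internal coins of $\pi$. What remains are two functions of the prefix $R_{<i}$ alone: the retention probability $q_i(A):=\prob{i\in\pi(R)\given i\in R,\ R_{<i}=A}$ and the marginal $\Delta_i(R_{<i})$. Submodularity makes $\Delta_i$ coordinatewise nonincreasing in $R_{<i}$, and the monotonicity property of $\pi$ — applied to $A\cup\{i\}\cup R_{>i}\subseteq A'\cup\{i\}\cup R_{>i}$ for $A\subseteq A'$ and then averaged over the suffix — makes $q_i$ coordinatewise nonincreasing as well. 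Since both are monotone in the same direction in the independent Bernoulli coordinates $\{\mathbf{1}\{j\in R\}:j<i\}$, the FKG/Harris inequality gives $\expect{q_i\,\Delta_i}\ge\expect{q_i}\cdot\expect{\Delta_i}$, while $\expect{q_i}=\prob{i\in\pi(R)\given i\in R}\ge\gamma$ by the marginal property. Multiplying through by $\exquanti$, recognizing $\exquanti\,\expect{\Delta_i(R_{<i})}$ as the $i$-th term of the telescoped $\expect{\val(R)}$, and summing over $i$ completes the argument.

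The hard part is the decorrelation step: the event that $\pi$ retains $i$ and the marginal value of $i$ over its prefix are both driven by the same randomness and are not independent, so one cannot simply multiply the marginal-property bound $\gamma$ by $\expect{\Delta_i}$. The resolution is to use both hypotheses simultaneously — submodularity to make $\Delta_i$ monotone, and monotonicity of $\pi$ to make the retention probability monotone in the same direction — and only then invoke FKG on the product measure over the prefix coordinates. The care required is in the bookkeeping of the conditioning: one must first restrict to $\{i\in R\}$ and only afterwards integrate out the suffix and $\pi$'s coins, so that both quantities genuinely become monotone functions of the independent prefix; the $\val(\emptyset)$ normalization is best dispatched at the very start.
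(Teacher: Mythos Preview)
The paper does not prove this theorem; it is quoted from \citet{BKNS-10} and used as a black box, so there is no in-paper proof to compare against.

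Your argument is correct and is essentially the standard proof from the contention-resolution literature (cf.\ also \citet{VCZ-11}): telescope $\val$ along a fixed order, use submodularity to replace $\Delta_i(\pi(R)_{<i})$ by the larger $\Delta_i(R_{<i})$, then apply the FKG/Harris inequality on the product measure over the prefix coordinates to decorrelate the retention indicator from the marginal value, with the monotonicity property of $\pi$ supplying the monotonicity of $q_i$ and submodularity that of $\Delta_i$. The one point worth stating explicitly is that you rely on $\pi(R)\subseteq R$ in two places (to get $\pi(R)_{<i}\subseteq R_{<i}$ for the submodularity comparison, and to zero out the left-hand side on $\{i\notin R\}$). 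This containment is part of the standard definition of a contention resolution scheme and is satisfied in the paper's application (Lemma~\ref{l:CR-main}), but it is not written into the theorem statement as quoted here, so you should list it as a hypothesis.
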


\begin{theorem}[\citealp{VCZ-11} \footnote{The formulation of this theorem is slightly different than in \cite{VCZ-11} but follows easily from their analysis.}]
\label{t:CR-knapsack}
Given $\epsilon \in (0,1/2)$, budget $\budget$, independent variables  $p_i$ that are the payments to each agent 
such that,
\begin{itemize}
\item (scaled ex ante budget constraint) $\sum_i \expect{p_i}
  \leq (1-\epsilon)\, \budget$, 
  \item ($k$-large market) $p_i$ is bounded by $[0, \budget / k]$ for all $i$, and
\item $ k > 2 / \epsilon$,
\end{itemize}
then the probability that the sum of costs of selected agents does not exceed
the budget less the cost of any agent, i.e., $\prob{\sum_{i } p_i\leq (1-1/k) \budget}$, is at least
$1 - e^{-\epsilon^2(1-\epsilon)k/12}$.
\end{theorem}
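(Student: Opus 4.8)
The plan is to reduce the claim to a standard multiplicative Chernoff bound after rescaling the payments to the interval $[0,1]$. First I would normalize by the per-agent cap: set $X_i = p_i k / \budget$, so that the $X_i$ are independent random variables supported on $[0,1]$, the scaled ex ante constraint becomes $\sum_i \expect{X_i} \leq (1-\epsilon)k$, and the target event $\{\sum_i p_i \leq (1-1/k)\budget\}$ is exactly the event $\{\sum_i X_i \leq k-1\}$. So it suffices to show $\prob{\sum_i X_i \geq k-1} \leq e^{-\epsilon^2(1-\epsilon)k/12}$.

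Next I would write $\mu := (1-\epsilon)k$, which by the first hypothesis is an upper bound on $\expect{\sum_i X_i}$, and express the threshold multiplicatively: $k-1 = (1+\delta)\mu$ with $\delta = \frac{\epsilon - 1/k}{1-\epsilon}$. The hypothesis $k > 2/\epsilon$ gives $1/k < \epsilon/2$, hence $\epsilon - 1/k > \epsilon/2 > 0$ and so $\delta > 0$; and $\epsilon < 1/2$ gives $2\epsilon - 1/k < 1$, hence $\delta < 1$. The usual Chernoff estimate for a sum of independent $[0,1]$-valued variables (which holds with $\mu$ taken to be any upper bound on the mean, since $\expect{e^{t X_i}} \leq e^{\expect{X_i}(e^t-1)}$ for $t>0$) then gives $\prob{\sum_i X_i \geq (1+\delta)\mu} \leq e^{-\delta^2 \mu /3}$ for $\delta \in (0,1]$.

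It then remains to check the constant: that $\delta^2\mu/3 \geq \epsilon^2(1-\epsilon)k/12$. Substituting $\mu = (1-\epsilon)k$ and $\delta = (\epsilon - 1/k)/(1-\epsilon)$, this is equivalent to $4(\epsilon - 1/k)^2 \geq \epsilon^2(1-\epsilon)^2$; since $(1-\epsilon)^2 < 1$ it suffices to show $4(\epsilon - 1/k)^2 \geq \epsilon^2$, i.e. (both sides positive) $2(\epsilon - 1/k) \geq \epsilon$, i.e. $k \geq 2/\epsilon$, which is precisely the third hypothesis.

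I do not expect a real obstacle: the entire content is the Chernoff estimate, and the only delicate point is the bookkeeping of constants — making sure the slack coming from $\epsilon < 1/2$ (which yields $(1-\epsilon)^2 < 1$) together with the slack from $k > 2/\epsilon$ (which yields $\epsilon - 1/k \geq \epsilon/2$) is enough to absorb the factor $12$ in the exponent. If the margins felt tight one could instead invoke the sharper form $\prob{\sum_i X_i \geq (1+\delta)\mu} \leq e^{-\delta^2\mu/(2+\delta)}$, which leaves room to spare.
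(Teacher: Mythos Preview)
Your proof is correct. The paper itself does not prove this theorem but cites \citet{VCZ-11}, whose knapsack contention-resolution analysis is exactly the Chernoff-bound argument you carry out: rescale by the per-agent cap, apply the multiplicative Chernoff tail with $\mu$ taken as the upper bound $(1-\epsilon)k$ on the mean, and verify the constants using $k>2/\epsilon$ and $\epsilon<1/2$; so your approach matches the one the paper defers to.
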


We now connect these two results by relating the probability that the
sum of costs does not exceed $(1-1/k) \budget$ of Theorem~\ref{t:CR-knapsack}
to $\gamma$ of Theorem~\ref{t:cr-main} and then show that posted
pricings satisfy the conditions of Theorem~\ref{t:cr-main}.

\begin{lemma}
\label{l:budget-offer-relation}
For sequential posted pricing $(\prices,\orders)$ that satisfy the scaled ex ante budget constraint and $k$-large market conditions, the probability
that an agent is offered her price is lower bounded by $\prob[R \sim
  \exquants]{\sum_{i \in R} \critcosti \leq (1-1/k) \budget}$, the probability that the sum of
the prices of agents who would accept their offered price is at most
$(1-1/k)\budget$.
\end{lemma}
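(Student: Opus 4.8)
The plan is to identify a single event, depending only on the realized costs and not on the order $\orders$, on which every agent is guaranteed to be offered her price, and to check that this event coincides with the one in the statement.

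First I would let $R = \{\, j : \costi[j] \leq \critcosti[j]\,\}$ be the (random) set of agents who would accept the price offered to them. Because the costs are drawn independently, $R$ contains each agent $j$ independently with probability $\disti[j](\critcosti[j]) = \exquanti[j]$, so the distribution of $R$ is exactly $\exquants$, as in the statement. I would then record two elementary facts about the mechanism of Definition~\ref{d:posted-pricing}: an agent is selected (and paid) only if her realized cost is at most her offered price, i.e.\ only if she lies in $R$; and every selected agent is paid exactly her offered price $\critcosti[j]$. Together these give that, at every moment during the run, the total amount already disbursed is a sum of the values $\critcosti[j]$ over some subset of $R$, and hence is at most $\sum_{j\in R}\critcosti[j]$.

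Next I would fix an arbitrary agent $i$ and condition on the event $E = \{\sum_{j\in R}\critcosti[j] \leq (1-1/k)\budget\}$. On $E$, the amount disbursed to the agents preceding $i$ in $\orders$ is at most $\sum_{j\in R}\critcosti[j] \leq (1-1/k)\budget$, so when $i$ arrives the remaining budget is at least $\budget/k$. By the $k$-large market condition we have $\critcosti \leq \budget/k$, so the remaining budget is at least $\critcosti$ and $i$ is indeed offered her price. Thus the occurrence of $E$ guarantees that agent $i$ is offered her price, which yields $\prob{\text{agent } i\text{ is offered}} \geq \prob{E} = \prob[R\sim\exquants]{\sum_{j\in R}\critcosti[j] \leq (1-1/k)\budget}$, the claimed bound. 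For the conditional version $\prob{\text{agent } i\text{ is offered}\given i\in R}$ that Theorem~\ref{t:cr-main} will need, I would note that whether $i$ is offered her price is a function of the costs of the agents who precede $i$ in $\orders$ only, hence is independent of $\{i\in R\}$, so conditioning on $\{i\in R\}$ does not change the probability.

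There is no genuinely hard step; the only things to be careful about are (a) the direction of the containment --- $E$ bounds the \emph{full} sum $\sum_{j\in R}\critcosti[j]$, which dominates the partial sum over willing agents preceding $i$, which in turn upper bounds what is actually disbursed before $i$'s turn --- and (b) that the whole argument is indifferent to the ordering $\orders$, which is exactly what allows this lemma to be plugged into the contention-resolution estimate of Theorem~\ref{t:CR-knapsack} in what follows.
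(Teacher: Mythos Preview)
Your proposal is correct and follows essentially the same argument as the paper: identify the event $E=\{\sum_{j\in R}\critcosti[j]\le(1-1/k)\budget\}$, observe that on $E$ at least $\budget/k\ge\critcosti$ remains when any agent $i$ arrives, and conclude. Your added remark that the offer event for $i$ depends only on predecessors' costs (hence conditioning on $\{i\in R\}$ is harmless) is a useful anticipation of what Lemma~\ref{l:CR-main} needs; the paper makes that observation there rather than here.
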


\begin{proof}
If the total cost of all agents who would accept their price is at
most $(1-1/k)\budget$ then this budget remains at the time an agent
$i$ is considered in the sequence $\orders$.  By the definition of $k \geq
\budget / \pricei$ it is feasible to serve this agent and so she is
offered her price $\pricei$ by the sequential posted pricing mechanism.
\end{proof}

\begin{lemma}\label{l:CR-main}
For sequential posted pricing $(\exquants,\orders)$, if each agent is
offered her price with probability at least $\gamma$, then the
expected value of the mechanism is at least $\gamma
\mlext(\critquants)$.
\end{lemma}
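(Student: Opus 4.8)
The plan is to read the sequential posted pricing as a contention resolution scheme for the budget (knapsack) constraint and then invoke Theorem~\ref{t:cr-main}. Let $R$ be the random set of agents whose realized cost $\costi$ is at most the offered price $\critcosti=\disti^{-1}(\exquanti)$, i.e.\ the agents who would accept; since costs are independent, $R$ contains each agent $i$ independently with probability $\exquanti$, so that $\expect[R\sim\exquants]{\val(R)}=\mlext(\exquants)$ by Definition~\ref{d:submodular-extensions}. Let $\pi(R)\subseteq R$ be the set of agents the mechanism actually serves; then $\pi$ maps the (possibly over-budget) set $R$ to a budget-feasible subset, and the claim to prove is exactly $\expect{\val(\pi(R))}\ge\gamma\,\expect[R\sim\exquants]{\val(R)}$, the conclusion of Theorem~\ref{t:cr-main}, provided $\pi$ meets its two hypotheses.

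The substantive step is the marginal property, and this is where the non-adaptivity of posted prices enters. Agent $i$ is served exactly when $i\in R$ and she is offered her price, and whether $i$ is offered her price is determined by the budget still available when $i$ is reached in the order $\orders$, which is a function only of the realized costs of the agents preceding $i$, hence not of $\costi$. Therefore the event ``$i$ is offered her price'' is independent of $\{i\in R\}=\{\costi\le\critcosti\}$, and $\prob{i\in\pi(R)\mid i\in R}=\prob{i\text{ is offered her price}}\ge\gamma$ by hypothesis. For the monotonicity property one needs that deleting agents from the acceptor set only weakly increases a surviving agent's chance of being served; intuitively this holds because deleting acceptors leaves weakly more budget at each step. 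Granting this, Theorem~\ref{t:cr-main} yields $\expect{\val(\pi(R))}\ge\gamma\,\mlext(\exquants)$, and $\expect{\val(\pi(R))}$ is by definition the expected value of the posted pricing.

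I expect the monotonicity property to be the delicate point: the greedy ``skip an unaffordable agent and continue'' rule is not transparently monotone in the acceptor set once prices vary across agents, because serving a cheap early agent can block an expensive one, which can in turn free budget for a cheaper late agent. If a clean monotonicity argument is awkward, the fallback is a direct proof: expand $\val(\pi(R))$ telescopically along $\orders$ and lower bound it, by submodularity, by the sum over served agents of each $i$'s marginal value relative to the set of \emph{all} earlier acceptors; then for each $i$ use the independence above to factor out $\exquanti$, and argue that conditioning on ``$i$ is offered her price'' does not decrease $i$'s expected marginal contribution to the earlier acceptors (both are favored by few earlier acceptors); summing over $i$ rebuilds $\gamma\,\expect[R\sim\exquants]{\val(R)}$. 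Along either route, the one genuinely new ingredient beyond the cited contention-resolution machinery is the observation that a non-adaptive posted price makes ``being offered one's price'' independent of one's own cost.
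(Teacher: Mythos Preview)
Your approach---read the posted pricing as a contention resolution map $\pi$ and invoke Theorem~\ref{t:cr-main}---is exactly what the paper does. The paper verifies the marginal property just as you do and then asserts monotonicity in one line: ``when agent $i$ arrives, the mechanism has spent less if the acceptor set is $T\subseteq S$ than if it is $S$, so $i\in\pi(S)$ implies $i\in\pi(T)$.''

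Your suspicion about monotonicity is not just caution; the paper's one-line argument is actually wrong, and the scenario you describe is a genuine counterexample. Take budget $\budget=10$, order $\orders=(1,2,3)$, prices $\critcost_1=3$, $\critcost_2=8$, $\critcost_3=6$. With $S=\{1,2,3\}$: agent~$1$ is served (budget $7$), agent~$2$ is discarded ($8>7$), agent~$3$ is served; so $\pi(S)=\{1,3\}$. With $T=\{2,3\}\subset S$: agent~$2$ is served (budget $2$), agent~$3$ is discarded ($6>2$); so $\pi(T)=\{2\}$. Thus $3\in T$, $3\in\pi(S)$, but $3\notin\pi(T)$, violating the monotonicity hypothesis of Theorem~\ref{t:cr-main}. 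The culprit is precisely the ``skip an unaffordable agent and continue'' rule in Definition~\ref{d:posted-pricing}. So relative to the paper you have correctly located a gap that the paper glosses over.

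Your fallback, however, inherits the same difficulty. The step ``conditioning on `$i$ is offered her price' does not decrease $i$'s expected marginal contribution to earlier acceptors'' would follow from FKG if the event $A_i=\{i\text{ is offered}\}$ were a monotone decreasing event in $R_{<i}$; but the example above shows it is not (removing agent~$1$ from $R_{<3}$ turns $A_3$ from true to false). So the positive-correlation claim is unsupported, and the direct telescoping route is not obviously easier than the contention-resolution route. In short: your plan matches the paper's, your diagnosis of the weak point is sharper than the paper's, but neither the paper's monotonicity claim nor your proposed fallback closes the gap as written.
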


\begin{proof}
It suffices to show, for sequential posted pricing
$(\exquants,\orders)$ with an ex post budget constraint $\budget$,
that the marginal and monotonicity properties of
Theorem~\ref{t:cr-main} hold. 

In our case, $R\sim \exquants$ is the random set of agents who would accept
their offer if the budget never runs out.  Given a set of agents $R$
who accept their offer, define $\pi(R)$ to be the set of agents who
accept their offer and who arrive before the budget runs out. In our
case, $\pi$ is deterministic given the ordering $\orders$. Note that
$\prob[R\sim \exquants;\pi]{i \in \pi(R) \given i \in R}$ is equal to
the probability that an agent gets offered her price, meaning that she
arrives before the budget runs out. Thus, by the assumption of the
lemma the marginal property holds.

For the monotonicity property, consider two sets $T \subseteq S$. When
an agent $i$ arrives in the posted price mechanism, the mechanism has
spent less if the set of agents who accept their offer is $T$ than if
this set is $S$. Therefore $i \in \pi(S)$ implies that $i \in \pi(T)$ and the monotonicity property holds.
\end{proof}

By combining the previous results, we obtain the main theorem for this section.

\begin{theorem} 
For $\epsilon \in (0,1/2)$, if the oblivious posted pricing $\prices$
corresponding to the optimal solution $\exquants$ to the multilinear
extension program~\eqref{eq:multilinear-prog} with budget $(1 -
\epsilon) \budget$
(i.e., with $\price_i = \disti^{-1}(\exquanti)$ for each agent $i$) 
satisfies $2/ \epsilon \leq k \leq B / \max_i \pricei$, then this
posted pricing mechanism is a $(1 - 1 /e)(1 - \epsilon) (1- e^{-
  \epsilon^2 ( 1- \epsilon) k / 12})$ approximation to the optimal
mechanism for submodular value functions and $(1 - \epsilon) (1- e^{-
  \epsilon^2 ( 1- \epsilon) k / 12})$ for additive value functions in the case of regular cost distributions.
\end{theorem}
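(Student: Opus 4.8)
The plan is to chain together the three main ingredients already assembled in this section: the upper bound on the optimal ex ante mechanism from Section~\ref{s:relaxations}, the knapsack concentration bound of Theorem~\ref{t:CR-knapsack}, and the contention resolution machinery of Theorem~\ref{t:cr-main} (with the bookkeeping lemmas Lemma~\ref{l:budget-offer-relation} and Lemma~\ref{l:CR-main}). First I would fix the ex ante solution: let $\exquants$ optimize the multilinear extension program~\eqref{eq:multilinear-prog} with the \emph{scaled} budget $(1-\epsilon)\budget$, and set $\pricei = \disti^{-1}(\exquanti)$. Since $k \le B/\max_i \pricei$, the payments $p_i = \pricei$ (deterministic given acceptance, so bounded by $[0,\budget/k]$) satisfy all three hypotheses of Theorem~\ref{t:CR-knapsack}, namely $\sum_i \expect{p_i} = \sum_i \exquanti\pricei \le (1-\epsilon)\budget$, the $k$-large market condition, and $k > 2/\epsilon$. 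Hence $\prob[R\sim\exquants]{\sum_{i\in R}\pricei \le (1-1/k)\budget} \ge 1 - e^{-\epsilon^2(1-\epsilon)k/12}$.

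Next I would invoke Lemma~\ref{l:budget-offer-relation}: this probability lower bounds the probability that any given agent is offered her price under any arrival order. Feeding this into Lemma~\ref{l:CR-main} (which itself just checks the marginal and monotonicity properties of Theorem~\ref{t:cr-main}), the expected value of the ex post posted pricing mechanism is at least $\gamma \cdot \mlext(\exquants)$ with $\gamma = 1 - e^{-\epsilon^2(1-\epsilon)k/12}$. It remains to relate $\mlext(\exquants)$ to the optimum. For submodular value functions, $\mlext(\exquants) \ge (1-1/e)\,\ccext(\exquants^+_{(1-\epsilon)B})$ by Theorem~\ref{thm:mlextccext} (applied with budget $(1-\epsilon)\budget$), and $\ccext$ evaluated at the optimal solution of the concave closure program with budget $(1-\epsilon)\budget$ is at least $(1-\epsilon)$ times the optimal value with budget $\budget$ — because scaling a feasible point for budget $\budget$ by $(1-\epsilon)$ is feasible for budget $(1-\epsilon)\budget$ and $\ccext$ is concave with $\ccext(\zeros)\ge 0$, so $\ccext((1-\epsilon)\exquants) \ge (1-\epsilon)\ccext(\exquants)$. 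Combined with Lemma~\ref{l:ub}, this gives the $(1-1/e)(1-\epsilon)(1-e^{-\epsilon^2(1-\epsilon)k/12})$ bound. The additive case is identical except that $\mlext = \ccext$ exactly (Corollary~\ref{c:optexante}), so the factor $1-1/e$ drops out.

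The one genuinely delicate point — and the step I would flag as the main obstacle — is the budget-scaling inequality $\OPT_{(1-\epsilon)\budget} \ge (1-\epsilon)\,\OPT_{\budget}$ for the concave closure program. It is not completely automatic because the constraint $\sum_i \quanti \disti^{-1}(\quanti) \le \budget$ is not linear in $\quants$ (the left side is a sum of terms $\quanti\disti^{-1}(\quanti)$, which need not be convex). So I would argue it directly: given the optimal $\exquants$ for budget $\budget$, the scaled point $(1-\epsilon)\exquants$ has cost $\sum_i (1-\epsilon)\exquanti\,\disti^{-1}((1-\epsilon)\exquanti)$; since $\disti^{-1}$ is nondecreasing, $\disti^{-1}((1-\epsilon)\exquanti) \le \disti^{-1}(\exquanti)$, so the cost is at most $(1-\epsilon)\sum_i\exquanti\disti^{-1}(\exquanti) \le (1-\epsilon)\budget$, hence feasible; and by concavity of $\ccext$ with $\ccext(\zeros)\ge0$, $\ccext((1-\epsilon)\exquants)\ge(1-\epsilon)\ccext(\exquants)$. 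Everything else is routine composition of the stated lemmas. Note also that the hypothesis range $2/\epsilon \le k$ together with $\epsilon < 1/2$ is exactly what Theorem~\ref{t:CR-knapsack} needs, so no additional case analysis on small $k$ is required here.
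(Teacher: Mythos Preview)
Your proposal is correct and follows essentially the same route as the paper's proof: chain the concentration bound of Theorem~\ref{t:CR-knapsack} through Lemma~\ref{l:budget-offer-relation} and Lemma~\ref{l:CR-main} to get the $\gamma$-factor, and separately bound $\mlext(\exquants)$ against $\ccext(\exquants^+_{\budget})$ via the correlation gap and the budget-scaling step $\ccext(\exquants^+_{(1-\epsilon)\budget}) \ge \ccext((1-\epsilon)\exquants^+_\budget) \ge (1-\epsilon)\ccext(\exquants^+_\budget)$. The paper's own proof does these two pieces in the opposite order but with the same content, and your treatment of the ``delicate'' scaling step (feasibility via monotonicity of $\disti^{-1}$, then concavity of $\ccext$) is exactly the argument the paper gives.
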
 
\begin{proof}
The proof starts with the ex ante mechanism from the previous section and then applies results from this section to modify it into an ex post mechanism.

Let $\exquants$ be the optimal solution to the multilinear extension
program~\eqref{eq:multilinear-prog} with budget $(1 - \epsilon)
\budget$, $\exquants^+_{(1-\epsilon) \budget}$ be the optimal solution to the concave closure program~\eqref{eq:relaxed-concave-bayes-opt-prog} with budget $(1 - \epsilon)
\budget$, and $\exquants^+_{\budget}$ be the optimal solution to the concave closure
program~\eqref{eq:relaxed-concave-bayes-opt-prog} with budget $
\budget$.

 By the optimality of $\exquants$ and Theorem~\ref{t:cg-submodular},
$$ 
\mlext(\exquants) \geq \mlext(\exquants^+_{(1-\epsilon) \budget}) \geq (1 - \tfrac{1}{e}) \ccext(\exquants^+_{(1-\epsilon) \budget}).$$ 
Note that the solution $(1 - \epsilon)\exquants^+_{\budget}$ has cost at most $(1 - \epsilon) \budget$ since $\disti^{-1}(\cdot)$ is increasing. So by the optimality of $\exquants^+_{(1-\epsilon) \budget}$ and by the concavity of the concave closure $\ccext(\cdot)$, $$\ccext(\exquants^+_{(1-\epsilon) \budget}) \geq \ccext((1 - \epsilon) \exquants^+_{ \budget}) \geq (1-\epsilon) \ccext( \exquants^+_{ \budget}). $$ Since $\ccext( \exquants^+_{ \budget})$ is an upper bound on the performance of the optimal ex ante mechanism by Lemma~\ref{l:ub}, the ex ante posted pricing mechanism defined for each agent
by $\price_i = \disti^{-1}(\exquanti)$ is a $(1 - 1/e)(1 - \epsilon)$ approximation to the optimal mechanism.

We now consider the posted pricing mechanism defined by $\prices$ that is no longer ex ante. Since the budget has been lowered by a factor $1 - \epsilon$, each agent is offered her price with probability at least $\prob[R \sim \exquants]{\sum_{i \in R} \critcosti \leq (1-1/k) \budget}$ by Lemma~\ref{l:budget-offer-relation}, regardless of the ordering $\sigma$ of agents. By Theorem~\ref{t:CR-knapsack}, this probability is at least $1- e^{-
  \epsilon^2 ( 1- \epsilon) k / 12}$. Therefore, by Lemma~\ref{l:CR-main}, the expected value of this mechanism is at least $(1- e^{-
  \epsilon^2 ( 1- \epsilon) k / 12}) \mlext(\critquants)$ and this mechanism is a $(1 - \epsilon) (1 - 1/e) (1- e^{-
  \epsilon^2 ( 1- \epsilon) k / 12})$ approximation to the optimal mechanism in the case of submodular value functions. In the case of additive functions, there is no loss from the multilinear extension to the concave closure, so the mechanism is a $(1 - \epsilon) (1- e^{-
  \epsilon^2 ( 1- \epsilon) k / 12})$ approximation.
\end{proof}

Note that as the size of the market $k$ grows to infinity, this
approximation ratio approaches $1 - 1/e$. Also note that this
mechanism requires the market to be at least $4$-large. Using another result from \citet{VCZ-11} and a similar analysis to the one from this section, a $(1 - 1
/e)/8$ posted pricing mechanism can easily be obtained for any market size. This posted pricing attains
its performance guarantee when agents with cost at least $\budget/4$
arrive before all others, but otherwise the order is oblivious.


\section{Additive Value and Sequential Posted Pricing}
\label{s:additive}

In this section we give improved bounds for sequential posted pricing,
i.e., where the mechanism orders the agents, and when the value
function is additive, i.e., $\val(S) = \sum_{i \in S} \vali$.  In
particular, we analyze the sequential posted pricing $(\critcosts,
\orders)$ with $\critcosti = \disti^{-1}(\exquanti)$ from the solution
to the multilinear extension program~\eqref{eq:multilinear-prog} with
the full budget $\budget$ and the ordering $\orders$ by decreasing
bang-per-buck, i.e., $\vali / \critcosti$ for agent $i$.

Our results in this section are based on the analysis of the
correlation gap of fractional and integral-knapsack set functions (to
be defined subsequently).  The fractional-knapsack set function is a
submodular function, so a correlation gap of $1 - 1/e$ can be directly
obtained (Theorem~\ref{t:cg-submodular}). In this section, we improve
this bound to $1 - 1 / \sqrt{2 \pi k }$ for $k$-large markets, i.e.,
with $k = \budget / \max_i \pricei$.  From this bound we observe that
the correlation gap for fractional-knapsack in large market is
asymptotically one.  We show that the integral-knapsack correlation
gap is nearly the same.  Following the approach of \citet{yan-11}, the
factor by which sequential posted pricing approximates the ex ante
relaxation is equal to the integral-knapsack correlation gap.


\begin{definition}
\label{d:fractional-knapsack-value}
The {\em fractional-knapsack} set function corresponding to additive
set function $\val(S) = \sum_{i\in S} \vali$, sizes $\prices$, and
capacity $\budget$ is denoted $\valB(S)$ and equals the maximum value
solution to the corresponding fractional-knapsack problem on elements
$S$.\footnote{This value is given by sorting the elements of $S$ by
  $\vali/\pricei$ and admitting them greedily until the first element
  that does not fit with the remaining capacity, that element is
  admitted fractionally (providing a fraction of its value).}  The {\em integral-knapsack} set function can be defined analogously to
the fractional one, but it cannot add elements fractionally.
\end{definition}

Most of this section analyzes the ratio of the independent value of fractional-knapsack to the correlated value of $\val(\cdot)$ (see Definition~\ref{d:submodular-extensions} for the definition of independent and correlated values) in the case where the budget constraint is met ex ante, i.e., $\expect[S
     \sim \exquants]{\valB(S)} / \expect[S
     \sim \setdist]{\val(S)}$ when $\sum_i \pricei \exquanti \leq \budget$. We then show that this ratio is equal to the approximation ratio of the sequential posted pricing mechanism. Finally, we use this ratio to bound the integral, and fractional, knapsack correlation gap.

The main idea to derive a bound on this ratio is to show that it is minimized when all agents have equal cost $\budget / k$, in which case, when the budget constraint is met ex ante, we can then apply the result from \citet{yan-11} for the correlation gap of the k-highest-value-elements set function.

\begin{lemma}\label{l:maxcBk}
For any additive value function $\val(\cdot)$ and budget $\budget$, over
marginal probabilities $\exquants$ and prices $\prices$ that (a) satisfy
the ex ante budget constraint, i.e., $ \sum_i \pricei\,\exquanti \leq \budget$,
and (b) satisfy the $k$-large market condition, i.e., $\pricei \leq
\budget/k$, the ratio of the independent value of the fractional-knapsack and the correlated value of $\val(\cdot)$ is minimized when
$\pricei = \budget/k$ for all $i$.
\end{lemma}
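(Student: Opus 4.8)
The plan is to show that every feasible pair $(\exquants,\prices)$ can be replaced by one in which \emph{all} prices equal $\budget/k$ without increasing the ratio; since such pairs lie in the feasible region, this forces the minimum to occur there. Fix feasible $(\exquants,\prices)$ and an additive $\val$. We may assume $\pricei>0$ and $\exquanti>0$ for every $i$: agents with $\exquanti=0$ can be deleted outright, and an agent with $\pricei=0$ contributes exactly $\vali\exquanti$ both to the independent value of the fractional knapsack (it is always taken, for free, using no capacity) and to the correlated value of $\val$ (which, for an additive function, equals $\sum_i\vali\exquanti$ regardless of the correlation), so deleting it changes the ratio from $(N+c)/(D+c)$ to $N/D$ with $c\ge 0$ and $N\le D$, which does not increase it. Now define the transformed instance, on the same agents, by
\[
\pricei^{\dagger}=\budget/k,\qquad \exquanti^{\dagger}=\exquanti\,\pricei\,k/\budget,\qquad \vali^{\dagger}=\vali\,\budget/(k\,\pricei).
\]
I would check the bookkeeping: $\exquanti^{\dagger}\le 1$ by the $k$-large market condition $\pricei\le\budget/k$; $\pricei^{\dagger}\exquanti^{\dagger}=\pricei\exquanti$, so the ex ante budget constraint is preserved; $\vali^{\dagger}\exquanti^{\dagger}=\vali\exquanti$, so the correlated value of the new additive function is unchanged, i.e.\ the \emph{denominator} is unchanged; and the bang-per-buck $\beta_i:=\vali/\pricei=\vali^{\dagger}/\pricei^{\dagger}$ of each agent is preserved. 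It then remains to show that the \emph{numerator} does not increase:
\[
\expect[S\sim\exquants^{\dagger}]{\valB(S)}\ \le\ \expect[S\sim\exquants]{\valB(S)},
\]
where $\valB$ uses the new sizes on the left and the old sizes on the right.

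For this I would use the ``layer'' identity
\[
\valB(S)=\int_0^{\infty}\min\!\Big(\sum_{i\in S:\,\beta_i\ge\beta}(\text{size of }i),\ \budget\Big)\,d\beta,
\]
which follows by sorting $S$ by decreasing bang-per-buck, writing the fractional-knapsack value as the integral, over capacity positions $s\in[0,\budget]$, of the bang-per-buck of the item occupying position $s$, and then exchanging the order of integration. Since the transformation fixes every $\beta_i$, for each threshold $\beta$ the index set $I_\beta=\{i:\beta_i\ge\beta\}$ is the same in both instances, and the displayed integral becomes $\int_0^{\infty}\min(A_\beta,\budget)\,d\beta$ with $A_\beta=\sum_{i\in I_\beta}\pricei X_i$, $X_i\sim\mathrm{Bernoulli}(\exquanti)$ independent (and, in the new instance, $A_\beta^{\dagger}=\sum_{i\in I_\beta}(\budget/k)\,X_i^{\dagger}$, $X_i^{\dagger}\sim\mathrm{Bernoulli}(\exquanti^{\dagger})$, independent). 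Each summand has the same mean, $\expect{\pricei X_i}=\pricei\exquanti=\expect{(\budget/k)X_i^{\dagger}}$, and, because $\budget/k\ge\pricei$, the two-point variable $(\budget/k)X_i^{\dagger}$ is a mean-preserving spread of $\pricei X_i$; equivalently $\pricei X_i$ precedes $(\budget/k)X_i^{\dagger}$ in the convex order.

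Finally I would invoke two standard facts: the convex order is preserved under sums of independent random variables, so $A_\beta$ precedes $A_\beta^{\dagger}$ in the convex order for every $\beta$; and the convex order reverses the expectations of concave functions, so, as $x\mapsto\min(x,\budget)$ is concave, $\expect{\min(A_\beta^{\dagger},\budget)}\le\expect{\min(A_\beta,\budget)}$. Integrating over $\beta\ge 0$ and applying the layer identity yields the numerator inequality, so the ratio does not increase under the transformation and its minimum is attained with $\pricei=\budget/k$ for all $i$. (When all prices equal $\budget/k$ the fractional-knapsack set function is exactly the $k$-highest-value-elements function, and $\sum_i\exquanti^{\dagger}=(k/\budget)\sum_i\pricei\exquanti\le k$, which is the regime where Theorem~\ref{t:cg-multi-unit} then applies.)

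The hard part is the numerator inequality, and within it finding the layer representation of $\valB$: it must be organized around the very bang-per-buck ordering that the transformation leaves fixed, and it must display $\valB$ as an integral of a \emph{concave} function of sums of \emph{independent} terms, so that the elementary mean-preserving-spread comparison of the size variables $\pricei X_i$ versus $(\budget/k)X_i^{\dagger}$ propagates to the whole expression. The remaining ingredients---feasibility bookkeeping, the additivity-based invariance of the correlated value, and the two convex-order facts---are routine.
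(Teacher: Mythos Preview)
Your proof is correct and follows essentially the same approach as the paper. Your layer identity $\valB(S)=\int_0^\infty\min(\sum_{i\in S:\beta_i\ge\beta}\pricei,\,\budget)\,d\beta$ is the integral form of the paper's telescoping decomposition $\valB(S)=\sum_i(\vali/\pricei-\val_{i+1}/\price_{i+1})\min(\budget,\sum_{j\le i,\,j\in S}\price_j)$, and your convex-order (mean-preserving-spread) step is the same concavity-of-$\min(\cdot,\budget)$ argument the paper uses; the paper just carries it out one agent at a time via conditioning, whereas you push all agents to $\budget/k$ at once via closure of the convex order under independent sums. The paper likewise allows $\vals$ to change along with $\prices$ (its special case $\vals=\prices$ does exactly this), so your simultaneous rescaling of $\vali$ matches its intent.
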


\begin{proof} 
For the first part of the proof, we assume that $\vals = \prices$,
i.e., that the bang-per-buck is one for all elements. The last step of
the proof is to generalize this special case to any values. Observe
that with this assumption, $\valB(S) = \min(\budget, \sum_{j \in S}
\pricei[j])$.

Assume that there is some $\pricei$ such that $\pricei < \budget /
k$. We show that when $\vali = \pricei$, increasing $\pricei$ to any
$\pricei' > \pricei$ and decreasing $ \exquanti$ to $\exquanti' =
\pricei \exquanti / \pricei'$ preserves the correlated value while
only lowering the independent value. Let $\price_j' = \price_j$ and
$\exquant_j' = \exquant_j$ for $j \neq i$. The correlated value of
$\val(\cdot)$ is $\expect[S \sim \setdist]{\val(S)} = \sum_j \price_j
\exquant_j = \sum_j \price_j' \exquant_j'$ so it is preserved. Similarly, the ex ante budget constraint is still satisfied.

The argument for the independent value decreasing is the following.
Let $\valB^{\prime}(S)$ be defined similarly as $\valB(S)$, but where
agents have values and costs equal to $\prices'$.  Condition on the
subset of other agents $S$ who accept their prices and consider the
marginal contribution to the expected value of $\valB(\cdot)$ and
$\valB^{\prime}(\cdot)$ from agent $i$.  In the case that $C = \sum_{j
  \in S} \pricei[j] > \budget$, this contribution is zero for both
$\pricei$ and $\pricei'$.  When $C < \budget$,
these contributions are $\exquanti\min(\budget - C,\pricei)$ and
$\exquanti'\min(\budget - C,\pricei')$.  By the definition of
$\exquanti' = \pricei \exquanti/\pricei'$ and concavity of
$\min(\budget-C,\cdot)$, the former is greater than the latter.  This
inequality holds for all sets $S$, so removing the conditioning on $S$, it
holds in expectation and the independent value of fractional-knapsack is lowered.





It remains to extend this result to any $\vals$. Fix $\vals$ and assume without loss of generality that $\val_1 / \price_1  \geq \cdots \geq \val_n / \price_n$. Then the fractional-knapsack set function can be rewritten as $$\val_{\budget}(S) = \sum_{i \in N} (\vali /\pricei - \val_{i + 1}/ \price_{i+1}) \min(\budget, \sum_{j \in S \cap \{1, \dots, i\}} \price_j)$$ and the additive set function as $$\val(S) = \sum_{i \in N} (\vali /\pricei - \val_{i + 1}/ \price_{i+1})  (\sum_{j \in S \cap \{1, \dots, i\}} \price_j)$$ since these sums telescope.

 So the ratio of   independent value of $\valB(S)$ to the correlated value of $\val(S)$ is minimized when the ratios of the independent value of $\min(\budget, \sum_{j \in S \cap \{1, \dots, i\}} \price_j)$ to the correlated value of $\sum_{j \in S \cap \{1, \dots, i\}} \price_j$ are minimized for all $i$. We conclude by observing that $\min(\budget, \sum_{j \in S \cap \{1, \dots, i\}} \price_j)$ and $ \sum_{j \in S \cap \{1, \dots, i\}} \price_j$ are the fractional-knapsack set function and the additive set function when $\vali = \critcosti$ over ground set $\{1, \dots, i\}$, and that their ratio is minimized when $\pricei = \budget / k$ for all agents $i$.
\end{proof}

Next, we use the result from \citet{yan-11} to bound the ratio of the independent value of fractional-knapsack to the correlated value of $\val(\cdot)$.

\begin{lemma}
\label{l:fractional-knapsack-additive-relaxation-cg}
For any distribution over sets $\setdist$ with marginal probabilities
$\exquants$ satisfying the ex ante budget constraint, i.e., $\sum_i \pricei\,\exquanti \leq \budget$, the ratio of the independent value of fractional-knapsack to the correlated value of $\val(\cdot)$ is at least $1-1/\sqrt{2 \pi k}$ when the market is $k$-large.
\end{lemma}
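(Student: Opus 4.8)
The plan is to reduce to a single worst-case price configuration via Lemma~\ref{l:maxcBk} and then recognize it as the $k$-highest-value-elements instance of \citet{yan-11}. Concretely, I would first apply Lemma~\ref{l:maxcBk}: over all marginal probabilities $\exquants$ and prices $\prices$ satisfying the ex ante budget constraint $\sum_i \pricei\,\exquanti \le \budget$ and the $k$-large market condition $\pricei \le \budget/k$, the ratio of the independent value of fractional-knapsack to the correlated value of $\val(\cdot)$ is minimized when $\pricei = \budget/k$ for every $i$. Hence it suffices to prove the bound $1 - 1/\sqrt{2\pi k}$ in that case; it then transfers to every admissible triple $(\setdist,\exquants,\prices)$.

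Next I would specialize to $\pricei = \budget/k$ for all $i$ and record three facts. (i) The ex ante budget constraint becomes $\sum_i \exquanti \le k$. (ii) With uniform sizes $\budget/k$ and capacity $\budget$, exactly $k$ elements fit and no fractional element is ever needed, so $\valB(S)$ equals the sum of the $\min(k,|S|)$ largest values in $S$; that is, $\valB$ is the $k$-highest-value-elements set function of Theorem~\ref{t:cg-multi-unit}, whose independent value at $\exquants$ is its multilinear extension. (iii) Since $\val(\cdot)$ is additive, $\expect[S\sim\setdist]{\val(S)} = \sum_i \vali\,\exquanti$ for every distribution $\setdist$ with marginals $\exquants$.

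The crux is to identify this denominator $\sum_i \vali\,\exquanti$ with the concave closure of the $k$-highest-value-elements function. One direction is immediate: the concave closure is at most $\sum_i \vali\,\exquanti$ because the $k$-highest function is pointwise dominated by the additive function $\val$. For the reverse direction I would use $\sum_i \exquanti \le k$: the point $\exquants$ lies in the polytope $\{ x \in [0,1]^n : \sum_i x_i \le k \}$, whose vertices are $0/1$ vectors with at most $k$ ones, so $\exquants$ is the marginal vector of some distribution $\setdist$ supported only on sets of size at most $k$ (this is exactly what dependent rounding produces); on every set in its support $\valB$ and $\val$ agree, so the concave closure at $\exquants$ is at least $\expect[S\sim\setdist]{\val(S)} = \sum_i \vali\,\exquanti$. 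Hence the ratio in question equals $\mlext(\exquants)/\ccext(\exquants)$ for the $k$-highest-value-elements function, and Theorem~\ref{t:cg-multi-unit} gives $\mlext(\exquants) \ge (1 - 1/\sqrt{2\pi k})\,\ccext(\exquants)$, which is the claim.

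I expect the main obstacle to be precisely step (iii): a priori the denominator is only the linear functional $\sum_i \vali\,\exquanti$, and one must verify it really coincides with the concave closure of the capped function so that Yan's correlation-gap bound applies verbatim. This is where the constraint $\sum_i \exquanti \le k$ inherited from the ex ante budget is indispensable --- without it only $\ccext(\exquants) \le \sum_i \vali\,\exquanti$ survives, which points the wrong way. The reduction via Lemma~\ref{l:maxcBk} and the identification of uniform-size fractional knapsack with the $k$-highest function are routine by comparison; for non-integral $k$ one carries $\lfloor k\rfloor$ integral items plus one fractional item throughout, which changes nothing essential.
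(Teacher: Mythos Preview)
Your proposal is correct and follows essentially the same route as the paper: reduce via Lemma~\ref{l:maxcBk} to uniform prices $\pricei=\budget/k$, use the ex ante constraint to get $\sum_i \exquanti\le k$, identify the fractional-knapsack function with the $k$-highest-value-elements function, argue that the correlated value of $\val$ coincides with the concave closure of that function (the paper phrases this as ``there is a distribution such that the budget constraint is always met ex post''), and invoke Theorem~\ref{t:cg-multi-unit}. Your treatment of the crux step is in fact more explicit than the paper's, spelling out the polytope/dependent-rounding justification that the paper compresses into a single sentence.
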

\begin{proof}
Consider the case where each agent $i$ has cost $\pricei = \budget / k$ and assume that the ex ante budget constraint is satisfied, so $\sum_i \exquanti \leq k$.   Since any set of size at most $k$ is feasible and since $\sum_i \exquanti \leq k$, there is a distribution such that the budget constraint is always met ex post. Therefore, the correlated value of $\val(\cdot)$ is equal to the correlated value of fractional-knapasck. The ratio of the independent value of fractional-knapsack to the correlated value of $\val(\cdot)$ is thus equal to the correlation gap of fractional-knapsack. Since all agents have cost $\budget / k$, the fractional-knapsack set function is equal to the k-highest-value-elements set function. By Theorem~\ref{t:cg-multi-unit}, the ratio of  the independent value of fractional-knapsack to the correlated value of $\val(\cdot)$ is therefore $1-1/\sqrt{2 \pi k}$.

By Lemma~\ref{l:maxcBk}, the ratio of the independent value of fractional-knapsack to the correlated value of $\val(\cdot)$ when the ex ante budget constraint is satisfied is minimized when all agents have cost $\budget / k$, so this ratio is at least $1-1/\sqrt{2 \pi k}$.
\end{proof}

We now prove the main theorem of this section which relates the
approximation factor of sequential posted pricing (with ex post budget
feasibility) to the optimal mechanism with ex ante budget feasibility.

\begin{theorem}
\label{t:pp-additive}
The sequential posted pricing mechanism $(\exquants,\orders)$, where $\exquants$ is the solution to the multilinear extension
program~\eqref{eq:multilinear-prog} and where the order $\orders$ is decreasing in
$\frac{\vali}{\pricei}$,  is a $(1-1/\sqrt{2 \pi
  k})(1 - 1/k)$ approximation to the optimal mechanism in the case of regular cost distributions.
\end{theorem}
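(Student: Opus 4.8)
The plan is to lower-bound the expected value of the sequential posted pricing by $(1-1/k)$ times the expected fractional-knapsack value at the marginals $\exquants$, and then to combine this with Lemma~\ref{l:fractional-knapsack-additive-relaxation-cg} and the fact (Lemma~\ref{l:ub}) that $\sum_i\vali\exquanti$ upper bounds the optimal mechanism.

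\emph{Setup.} In the additive case $\mlext(\quants)=\ccext(\quants)=\sum_i\vali\quanti$, so the multilinear extension program~\eqref{eq:multilinear-prog} and the concave closure program~\eqref{eq:relaxed-concave-bayes-opt-prog} are literally the same program; hence the solution $\exquants$ to~\eqref{eq:multilinear-prog} with the full budget $\budget$ also solves~\eqref{eq:relaxed-concave-bayes-opt-prog} with budget $\budget$, and by Lemma~\ref{l:ub} the quantity $\ccext(\exquants)=\sum_i\vali\exquanti$ upper bounds the optimal ex ante mechanism, hence also the optimal ex post budget feasible mechanism (which it relaxes). Writing $\pricei=\disti^{-1}(\exquanti)$, the constraint of~\eqref{eq:multilinear-prog} says exactly $\sum_i\pricei\exquanti\le\budget$, and the market is $k$-large for the prices $\prices$ (with $k=\budget/\max_i\pricei$), so the marginals $\exquants$ meet the hypotheses of Lemma~\ref{l:fractional-knapsack-additive-relaxation-cg}.

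\emph{Key step: a per-realization knapsack bound.} Since costs are independent, the set $R$ of agents who would accept their offer contains each $i$ independently with probability $\exquanti$, and, conditioned on $R$, the mechanism is deterministic: it scans the agents in the order $\orders$ (decreasing $\vali/\pricei$) and serves each one whose price still fits the remaining budget. I claim that for every $R$ the value it obtains is at least $(1-1/k)\,\valB(R)$. If $\sum_{i\in R}\pricei\le\budget$ then every agent of $R$ is served and the value equals $\val(R)=\valB(R)$. Otherwise list $R$ as $i_1,i_2,\dots$ in the order $\orders$ and let $m$ be least with $\sum_{\ell\le m}\pricei[i_\ell]>\budget$; every proper prefix then has total price at most $\budget$, so (by induction on $\ell$) the mechanism serves $i_1,\dots,i_{m-1}$ and its value is at least $\sum_{\ell<m}\vali[i_\ell]$. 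Put $r=\budget-\sum_{\ell<m}\pricei[i_\ell]$, so $0\le r<\pricei[i_m]\le\budget/k$ and $\valB(R)=\sum_{\ell<m}\vali[i_\ell]+r\cdot\vali[i_m]/\pricei[i_m]$. From $\vali[i_\ell]/\pricei[i_\ell]\ge\vali[i_m]/\pricei[i_m]$ for $\ell<m$ we get $\sum_{\ell<m}\vali[i_\ell]\ge(\vali[i_m]/\pricei[i_m])\sum_{\ell<m}\pricei[i_\ell]=(\vali[i_m]/\pricei[i_m])(\budget-r)$; since $r\le\budget/k$ implies $\tfrac1k(\budget-r)\ge(1-\tfrac1k)\,r$, multiplying by $\vali[i_m]/\pricei[i_m]\ge0$ yields $\tfrac1k\sum_{\ell<m}\vali[i_\ell]\ge(1-\tfrac1k)\,r\,\vali[i_m]/\pricei[i_m]$, which rearranges to $\sum_{\ell<m}\vali[i_\ell]\ge(1-1/k)\,\valB(R)$.

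\emph{Conclusion.} Taking expectations over $R\sim\exquants$, the mechanism's expected value is at least $(1-1/k)\,\expect[R\sim\exquants]{\valB(R)}$. Lemma~\ref{l:fractional-knapsack-additive-relaxation-cg} then gives $\expect[R\sim\exquants]{\valB(R)}\ge(1-1/\sqrt{2\pi k})\,\ccext(\exquants)=(1-1/\sqrt{2\pi k})\sum_i\vali\exquanti$, and $\sum_i\vali\exquanti$ upper bounds the optimum; chaining the two inequalities yields the claimed $(1-1/\sqrt{2\pi k})(1-1/k)$ approximation. The only non-routine ingredient is the per-realization bound of the key step --- it is where the $1-1/k$ factor, essentially the integral-versus-fractional-knapsack loss in a $k$-large market, enters, so pinning down its constant is the main thing to get right; everything else is bookkeeping with results already established.
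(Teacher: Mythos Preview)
Your proof is correct and follows essentially the same route as the paper: for additive $\val$ the multilinear and concave closure programs coincide, so $\sum_i\vali\exquanti$ upper bounds the optimum (Lemma~\ref{l:ub}); Lemma~\ref{l:fractional-knapsack-additive-relaxation-cg} then gives the $(1-1/\sqrt{2\pi k})$ factor between the independent fractional-knapsack value and this bound; and dropping the fractional agent costs at most a $(1-1/k)$ factor. The only difference is one of detail: the paper simply asserts the last step (``the loss from this fractional agent is at most a factor $1-1/k$''), whereas you spell out a clean per-realization inequality showing that the value actually collected on any realized accepting set $R$ is at least $(1-1/k)\,\valB(R)$. Your explicit algebra here is a welcome addition---the paper's one-line justification hides exactly the computation you carry out---but the underlying idea is the same.
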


\begin{proof}
Denote $\exquants$ the optimal solution to the multilinear extension
program~\eqref{eq:multilinear-prog}.  For additive value functions,
linearity of expectation implies that the multilinear extension is
equal to the concave closure and the optima of the multilinear
extension program~\eqref{eq:multilinear-prog} and concave closure
program~\eqref{eq:relaxed-concave-bayes-opt-prog} are the same.  Their
performance upper bounds that of the optimal mechanism that satisfies
ex post budget feasibility by Lemma~\ref{l:ub}. The objective value of these programs with optimal solution $\exquants$ is $\sum_i \vali \exquanti$, which is equal to the correlated
value of the additive set function $\val(\cdot)$ on distributions with
marginals $\exquants$.  So by Lemma~\ref{l:fractional-knapsack-additive-relaxation-cg}, the ratio of the independent value of fractional-knapsack to the upper bound of the optimal mechanism is at least $1-1/\sqrt{2 \pi k}$

The random set of agents who accept their offer in the sequential posted pricing is equal to the set of agents who are admitted by the fractional-knapsack set function on an independent random set of agents with marginals $\exquants$, without including the fractional agent. The loss from this fractional agent is at most a factor $1 - 1/k$. This posted pricing mechanism therefore has an approximation ratio of $(1-1/\sqrt{2 \pi k})(1-1/k)$.
\end{proof}

As a corollary of Lemma~\ref{l:fractional-knapsack-additive-relaxation-cg}, we get new correlation gap results for the fractional, and integral, knapsack set functions. 

\begin{theorem}
\label{t:cg}
The correlation gaps of fractional-knapsack and integral-knapsack are at least $1-1/\sqrt{2 \pi k}$ and $(1-1/\sqrt{2 \pi k})(1-1/k)$ respectively, in a $k$-large market.
\end{theorem}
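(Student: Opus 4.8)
The plan is to derive both correlation gaps from Lemma~\ref{l:fractional-knapsack-additive-relaxation-cg}, exploiting that the underlying additive function $\val(\cdot)$ has correlated value (concave closure) equal to $\sum_i\vali\,\exquanti$ and that this quantity dominates, pointwise in $\exquants$, the correlated values of both the fractional-knapsack function $\valB(\cdot)$ and the integral-knapsack function, which I write $\valB^{\mathrm{int}}(\cdot)$, because $\valB^{\mathrm{int}}(S)\le\valB(S)\le\val(S)$ for every set $S$ and the concave closure is monotone in the function. The one combinatorial fact I would establish first is that in a $k$-large market the fractional element of a greedy knapsack packing is cheap: if $\sum_{i\in S}\pricei\le\budget$ the claim is vacuous, and otherwise, ordering $S$ by decreasing bang-per-buck and letting $\ell$ be the overflowing element with $\theta\in[0,1)$ the fraction of it taken, the greedy packing fills the whole budget $\budget$ with total value $\valB(S)$, so its average bang-per-buck is $\valB(S)/\budget$; since $\ell$ is packed last it has the smallest bang-per-buck, $\vali[\ell]/\pricei[\ell]\le\valB(S)/\budget$, whence $\theta\,\vali[\ell]\le\pricei[\ell]\cdot\valB(S)/\budget\le\valB(S)/k$. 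Equivalently, $\valB^{\mathrm{int}}(S)\ge(1-1/k)\,\valB(S)$ for all $S$ in a $k$-large market.

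For the fractional-knapsack correlation gap, fix marginals $\exquants$. When $\exquants$ satisfies the ex ante budget constraint $\sum_i\pricei\,\exquanti\le\budget$, Lemma~\ref{l:fractional-knapsack-additive-relaxation-cg} gives that the independent value of $\valB$ at $\exquants$ is at least $(1-1/\sqrt{2\pi k})\sum_i\vali\,\exquanti$, while its correlated value is at most $\sum_i\vali\,\exquanti$; the ratio is thus at least $1-1/\sqrt{2\pi k}$. For $\exquants$ violating the ex ante constraint I would trim: let $\setdist$ be the distribution realizing the correlated value of $\valB$ at $\exquants$, and replace each $S$ in its support by the prefix of its greedy packing that fits within $\budget$; by the fact above this loses at most a $(1-1/k)$ factor of $\valB(S)$, lowers every marginal, and makes the resulting marginals $\exquants'$ ex ante feasible, so applying the feasible case to $\exquants'$ together with monotonicity of the independent value at $\exquants\ge\exquants'$ finishes. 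Removing the $(1-1/k)$ slack incurred by this trimming, that is, showing that the infeasible region can never be the binding case, is the delicate point and the step I expect to be the main obstacle; note that the downstream use in Theorem~\ref{t:pp-additive} only ever needs ex ante feasible $\exquants$, so there the clean bound is immediate.

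For the integral-knapsack correlation gap, the pointwise bound $\valB^{\mathrm{int}}(S)\ge(1-1/k)\valB(S)$ forces the independent value (multilinear extension) of $\valB^{\mathrm{int}}$ to be at least $(1-1/k)$ times that of $\valB$, while $\valB^{\mathrm{int}}\le\valB$ forces its correlated value (concave closure) to be at most that of $\valB$; taking the ratio and substituting the fractional-knapsack bound yields a correlation gap of at least $(1-1/k)(1-1/\sqrt{2\pi k})$, as claimed.
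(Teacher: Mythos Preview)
Your integral-knapsack argument is essentially the paper's: bound the independent value of $\valB^{\mathrm{int}}$ from below by $(1-1/k)$ times that of $\valB$ via the pointwise inequality you prove, bound its correlated value from above by that of $\valB$, and multiply through by the fractional-knapsack bound.

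The gap is exactly where you flag it: the fractional-knapsack bound on the ex ante infeasible region. Your trimming replaces each $S$ by the integrally-feasible greedy prefix, which genuinely costs a $(1-1/k)$ factor in correlated value and hence only delivers $(1-1/k)(1-1/\sqrt{2\pi k})$ for $\valB$, not the claimed $1-1/\sqrt{2\pi k}$. The paper avoids this loss by a different reduction: rather than trimming sets (and losing value), it argues that the correlation-gap ratio $\mlext_B(\exquants)/\ccext_B(\exquants)$ is \emph{minimized} at an ex ante feasible $\exquants$, so Lemma~\ref{l:fractional-knapsack-additive-relaxation-cg} applies there directly with no extra factor. The observation is that whenever $\sum_i\pricei\,\exquanti>\budget$, one can decrease some coordinate $\exquanti$ while leaving the correlated value $\ccext_B(\exquants)$ \emph{unchanged}: any distribution realizing $\ccext_B(\exquants)$ must place mass on over-budget sets, and in such sets there are elements not packed by the fractional knapsack that can be dropped from the set without affecting $\valB$. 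Since lowering a marginal can only lower the independent value $\mlext_B$, the ratio weakly decreases; iterating drives $\exquants$ into the feasible region. What you are missing is not a sharper estimate inside your trimming step, but this monotonicity argument that makes the trimming---and its $(1-1/k)$ penalty---unnecessary.
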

\begin{proof}
We first show the correlation gap of fractional-knapsack, the correlation gap of integral-knapsack will then follow easily. We start by showing that the correlation gap is minimized when the budget constraint is satisfied. Then, we upper bound the fractional-knapsack correlated value by the correlated value of $\val(\cdot)$. Finally, we apply Lemma~\ref{l:fractional-knapsack-additive-relaxation-cg}.

We claim that the correlation gap of fractional-knapsack is minimized when the budget constraint is satisfied. Observe that if the budget constraint is not satisfied, then it is possible to decrease some $\exquanti$ such that the correlated value of fractional-knapsack remains the same. Since decreasing some $\exquanti$ only decreases the independent value of fractional-knapsack, the ratio of the independent value to the correlated value also decreases.

Clearly, the fractional-knapsack correlated value is upper bounded by the correlated value of $\val(\cdot)$. Therefore, the correlation gap of fractional-knapsack is at least the ratio of the independent value of fractional-knapsack to the correlated value of $\val(\cdot)$ when the budget constraint is satisfied, so at least $1-1/\sqrt{2 \pi k}$ by Lemma~\ref{l:fractional-knapsack-additive-relaxation-cg}.

Finally, observe that the correlated value of fractional-knapsack upper bounds the correlated value of integral- knapsack and that the independent value of integral-knapsack is a $1 - 1/k$ approximation to the independent value of fractional-knapsack. Therefore, the correlation gap of integral-knapsack is at least $(1-1/\sqrt{2 \pi k})(1-1/k)$.
\end{proof}

\paragraph{Comparison of Sequential and Oblivious posted pricing.} 
We now compare the approximation ratio for additive value functions achieved using the sequential
posted pricing mechanism with the bang per buck order, $(1-1/\sqrt{2 \pi k})(1 - 1/k)$, and using oblivious posted pricing where the budget is lowered, $(1 -
\epsilon) (1- e^{- \epsilon^2 ( 1- \epsilon) k / 12})$. Figure~2 shows that the approximation ratio with the sequential ordering approaches $1$ much faster than with the oblivious ordering as the size of the market increases. To obtain these results for oblivious posted
pricing, we numerically solved for the best $\epsilon$. We emphasize that we are comparing the theoretical bounds of these approaches, and not empirical performances.

\begin{figure}\label{f:comp}
\centering
  \includegraphics[scale = .5, bb=0 0 550 150]{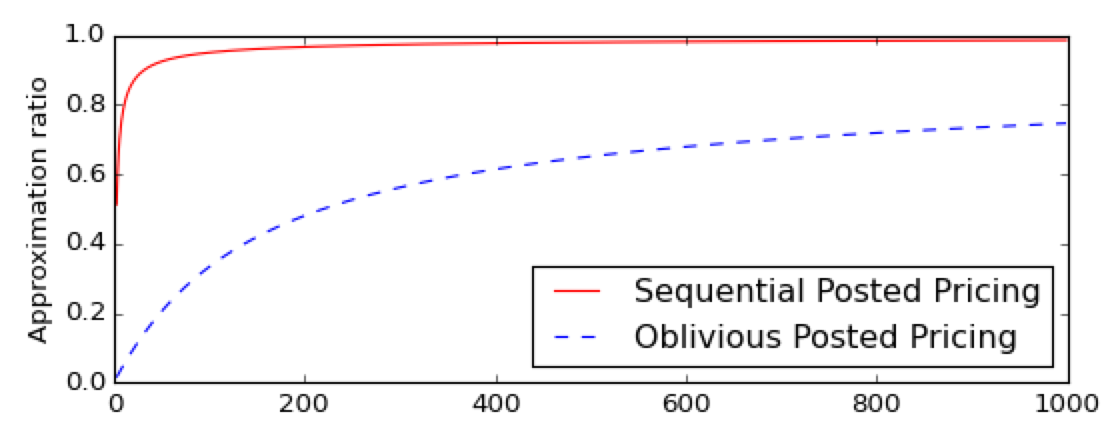}
\caption{Comparison of the approximation ratios obtained for additive value functions by the two different approaches. On the horizontal axis is $k$, the size of the market. }
\end{figure}

\section{Computing Prices}
\label{s:optimal}
In the two previous sections, we gave conditions under which optimal
prices from the multilinear extension
program~\eqref{eq:multilinear-prog} perform well when offered
sequentially or obliviously.  In this section, we consider the
computational problem of finding these prices. For submodular
value functions, we reduce the problem to the well-known greedy algorithm for submodular optimization.  For additive value functions, we use a simple method based
on the Lagrangian relaxation of the budget constraint.

\subsection{The Lagrangian Relaxation for Additive Value Functions}
Consider the case of additive value functions where the principal has
a value $\vali$ for each agent $i$ and the value function is
$\val(S) = \sum_{i \in S} \vali$. Recall the virtual surplus
program~\eqref{eq:bayes-opt-prog} from Section~\ref{s:relaxations}:
\begin{align}
\max_{\allocs}\ &\expect[\costs]{\val(\allocs(\costs))} \tag{\ref{eq:bayes-opt-prog}} \\
\notag    \text{s.t. } &\sum\nolimits_i \expect[\costs]{\virti(\costs)\,\alloci(\costs)} \leq \budget,
\\
\intertext{which can be rewritten for additive value functions as:}
\label{eq:relaxed-bayes-opt-prog-add}
\max_{\quants}\ &\sum\nolimits_i \expect[\costs]{ \vali \, \alloci(\costs)}\\
\notag    \text{s.t. } &\sum\nolimits_i \expect[\costs]{\virti(\costs)\,\alloci(\costs)} \leq \budget.
\end{align}
 
We show that the ex ante optimal mechanism can be found directly by taking the Lagrangian relaxation of the budget constraint (with parameter $\lagrange$) of the following Lagrangian program:

\begin{align}
\label{eq:relaxed-bayes-opt-prog-virt}
\max_{\allocs}\ & \lagrange \budget +  \sum\nolimits_i \expect[\costs]{(\vali - \lagrange \virti(\costi))\,\alloci(\costs)}.
\end{align}

For any Lagrangian parameter $\lagrange$, this objective can be
optimized by pointwise optimizing $\sum_i (\vali - \lagrange \virti(\costi))\,\alloci(\costs)$, a.k.a., the {\em Lagrangian virtual
  surplus}. This pointwise optimization picks all the agents such that  $\vali \geq \lagrange \virti(\costi)$.  If the virtual cost functions are monotone, i.e., in the
so-called {\em regular} case, then this optimization gives a
monotone allocation rule where an agent is picked whenever $\costi \leq  \virti^{-1}(\vali / \lagrange)$

Notice that as the Lagrangian parameter increases, the payments of
the agents, as represented by virtual costs, become more costly in the
objective of the lagrangian program~\eqref{eq:relaxed-bayes-opt-prog-virt}. Thus, the
expected payment of the mechanism is monotonically decreasing in the
Lagrangian parameter.  With $\lagrange = 0$ the Lagrangian virtual
surplus optimizer simply maximizes $\val(\allocs)$ and pays each
agent selected the maximum cost in the support of her distribution.
If this payment is under budget then it is optimal, otherwise, we can
increase $\lagrange$ until the budget constraint is satisfied.  For
example, with $\lagrange = \infty$ the empty set of agents is selected
and no payments are made.  The optimal mechanism is the one that meets
the budget constraint with equality.  In the case that the expected
payment is discontinuous then mixing between the least over-budget and
least under-budget mechanism is optimal.   For
further discussion of Lagrangian virtual surplus optimizers, see \citet{DHH-13}.

\begin{proposition}
\label{prop:bayes-opt}
The Lagrangian virtual surplus optimizer (or appropriate mixture
thereof) that meets the budget constraint with equality is the
Bayesian optimal ex ante budget feasible mechanism.
\end{proposition}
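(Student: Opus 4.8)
The plan is to prove optimality by Lagrangian weak duality together with complementary slackness, exploiting that---after substituting expected virtual surplus for expected payments via Lemma~\ref{l:payment=virtual-cost}---the ex ante program~\eqref{eq:relaxed-bayes-opt-prog-add} is linear in the interim allocation rule, subject only to the single budget inequality and the monotonicity constraint of Lemma~\ref{l:IC=monotone}. For the upper bound: fix $\lagrange\ge 0$ and let $\allocs^{L}_{\lagrange}$ be a Lagrangian virtual surplus optimizer at $\lagrange$, i.e., a pointwise maximizer of $\sum_i\bigl(\vali-\lagrange\,\virti(\costi)\bigr)\,\alloci(\costs)$ in program~\eqref{eq:relaxed-bayes-opt-prog-virt}, which in the regular case serves agent $i$ exactly when $\costi\le\virti^{-1}(\vali/\lagrange)$. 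For any $\allocs$ feasible for~\eqref{eq:relaxed-bayes-opt-prog-add} (monotone in each cost, hence IC, and with $\sum_i\expect[\costs]{\virti(\costi)\,\alloci(\costs)}\le\budget$),
$$\expect[\costs]{\val(\allocs(\costs))} \;\le\; \lagrange\budget + \sum\nolimits_i\expect[\costs]{\bigl(\vali-\lagrange\,\virti(\costi)\bigr)\,\alloci(\costs)} \;\le\; \lagrange\budget + \sum\nolimits_i\expect[\costs]{\bigl(\vali-\lagrange\,\virti(\costi)\bigr)\,(\allocs^{L}_{\lagrange})_i(\costs)},$$
the first step because $\lagrange\bigl(\budget-\sum_i\expect[\costs]{\virti(\costi)\,\alloci(\costs)}\bigr)\ge 0$ and the second by pointwise optimality of $\allocs^{L}_{\lagrange}$ (which dominates $\allocs$ realization by realization, and is itself monotone hence a legitimate allocation rule). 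So the value of~\eqref{eq:relaxed-bayes-opt-prog-virt} upper bounds the optimal ex ante value for every $\lagrange\ge 0$.

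Next I would pick the right $\lagrange$. As observed in the text, the expected payment $\sum_i\expect[\costs]{\virti(\costi)\,(\allocs^{L}_{\lagrange})_i(\costs)}$ is non-increasing in $\lagrange$, since as $\lagrange$ grows the thresholds $\virti^{-1}(\vali/\lagrange)$ shrink and the sets of cost realizations on which each agent is served are nested and decreasing; it equals the payment of the $\val$-maximizing rule (top of each support) at $\lagrange=0$ and tends to $0$ as $\lagrange\to\infty$. If the $\lagrange=0$ rule is within budget it is feasible and $\val$-maximal, hence optimal. Otherwise, by a one-dimensional intermediate-value argument there is a critical $\optlagrange$ at which the expected payment crosses $\budget$: where it is continuous this singles out a Lagrangian optimizer meeting the budget with equality, and where it jumps (which can occur when $\vali-\optlagrange\,\virti(\costi)=0$ on a positive-measure set, e.g.\ a flat piece of $\virti$) I would mix the least-over-budget and least-under-budget Lagrangian optimizers at $\optlagrange$---equivalently, choose how much of the tied cost mass to serve---to obtain a rule $\allocs^{\dagger}$ with expected payment exactly $\budget$. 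The tied terms contribute $0$ to the Lagrangian objective, so every such mixture is still a Lagrangian optimizer at $\optlagrange$; and a mixture of rules monotone in each cost is monotone, so $\allocs^{\dagger}$ is IC by Lemma~\ref{l:IC=monotone}.

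Finally, complementary slackness closes the gap: since $\sum_i\expect[\costs]{\virti(\costi)\,\alloci^{\dagger}(\costs)}=\budget$ we get $\optlagrange\bigl(\budget-\sum_i\expect[\costs]{\virti(\costi)\,\alloci^{\dagger}(\costs)}\bigr)=0$, hence
$$\expect[\costs]{\val(\allocs^{\dagger}(\costs))} \;=\; \optlagrange\budget + \sum\nolimits_i\expect[\costs]{\bigl(\vali-\optlagrange\,\virti(\costi)\bigr)\,\alloci^{\dagger}(\costs)},$$
so $\allocs^{\dagger}$ attains the upper bound of the first paragraph at $\lagrange=\optlagrange$. Being itself feasible for~\eqref{eq:relaxed-bayes-opt-prog-add}, $\allocs^{\dagger}$ is an optimal ex ante budget feasible mechanism (with the $\lagrange=0$ case handled above), and by Lemma~\ref{l:payment=virtual-cost} it is implemented by posting to each agent $i$ the cost threshold $\virti^{-1}(\vali/\optlagrange)$; cf.\ \citet{DHH-13}.

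I expect the main obstacle to be the middle step: making the intermediate-value argument rigorous when the expected payment is discontinuous in $\lagrange$, and checking that mixing two Lagrangian optimizers at the critical $\optlagrange$ preserves both Lagrangian optimality (clear, since the tied terms vanish in the objective) and implementability (monotonicity survives mixtures). The first and last paragraphs are routine linear-programming duality bookkeeping.
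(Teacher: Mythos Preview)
Your proposal is correct and is a rigorous version of exactly the argument the paper gives: the paper does not supply a separate proof after the proposition, but states it as a summary of the preceding paragraph (monotonicity of expected payment in $\lagrange$, the $\lagrange=0$ and $\lagrange=\infty$ endpoints, and mixing at a discontinuity), referring to \citet{DHH-13} for details. Your weak-duality/complementary-slackness write-up is the standard way to make that discussion precise and matches the paper's approach.
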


Lagrangian virtual surplus optimization suggests selecting an agent $i$
when her private cost $\costi$ is below $\virti^{-1}(\vali /
\lagrange)$.  The mechanism that achieves this outcome posts the price
of $\pricei = \virti^{-1}(\vali/\lagrange)$ to agent $i$.  Denote by
$\exquanti = \disti(\pricei)$ the probability that $i$ accepts the
price $\pricei$.  For the prices $\prices$, the total expected
payments are $\sum_i \pricei\,\exquanti$.  When the virtual cost
functions are monotone and strictly increasing, there is a Lagrangian
parameter for which the budget constraint is met with equality, i.e.,
with $\sum_i \pricei\,\exquanti = \budget$. The optimal ex ante mechanism is
therefore the posted price mechanism that posts $\pricei$ to each
agent $i$ for the Lagrangian parameter $\lagrange$ that satisfies
$\sum_i \pricei\,\exquanti = \budget$. Note that such a Lagrangian parameter
$\lagrange$ can be arbitrarily well approximated since $\sum_i
\pricei\,\exquanti$ is decreasing as a function of $\lagrange$.

\begin{example} 
 Consider $n$ agents with costs drawn uniformly and i.i.d.\@ from
 $[0,1]$ and uniform additive value function $\vali = 1$ for all $i$,
 i.e., the cardinality function.  The virtual cost function is
 $\virt(\cost) = \cost + \frac{\dist(\cost)}{\dens(\cost)} = 2\cost$.
 The Lagrangian parameter $\lagrange = \frac{1}{2} \sqrt{n/\budget}$
 induces a uniform posted price of $\price = \sqrt{\budget/n}$ which
 is accepted with probability $\exquant = \sqrt{\budget/n}$ for an
 expected payment of $\budget/n$.  Summing over all $n$ agents,
 the budget is balanced ex ante.
\end{example}

\subsection{A Reduction to the Greedy Algorithm for Submodular Optimization}

For general submodular value functions we reduce the optimization of
the multilinear extension program~\eqref{eq:multilinear-prog},
restated below, to the problem of optimizing a submodular function
subject to a cardinality constraint. This problem of optimizing a submodular function under cardinality, knapsack, or matroid constraints is well
studied and the {\em greedy algorithm} gives a
$1-1/e$ approximation for knapsack and cardinality constraints; see \citet{NWF-78}, \citet{KMN-99}, and
\citet{SVI-04}.
\begin{align*}
\max_{\quants}\ &\mlext(\quants) \tag{\ref{eq:multilinear-prog}} \\ 
\notag    \text{s.t. } &\sum\nolimits_i \quanti \disti^{-1}(\quanti) \leq \budget.
\end{align*} 
Define the \emph{cost curve} of agent $i$ to be the expected payment to agent $i$, i.e., $\quanti \disti^{-1}(\quanti)$ in our case.
The main difference between the multilinear extension program
\eqref{eq:multilinear-prog} and the knapsack setting considered in the
literature is that the cost curves in the knapsack setting are linear in $\quanti$. Our reduction to the greedy algorithm is the following. We divide each agent $i$, called a big agent, in cost space into $\numSmall$ discrete agents $i_j$ of equal cost, called the small agents. An agent $i_j$ corresponds to the $j$th increase of $\quanti$, starting from $\quanti = 0$, that has cost $\budget / \numSmall$.
  We set
$1 / \numSmall$ as a fraction of the total budget $\budget$ which fixes the
number of steps in the algorithm to be $\numSmall$. With large $\numSmall$, the reduction becomes a finer discretization.  

Before formally describing the reduction, we introduce some notation. For each $i$ and $j$,  let $\delta_{ij}$  be the $j$th increase in $\quanti$, starting from $\quanti= 0$,
  that has cost $\budget / \numSmall$, i.e., $\delta_{ij}$ satisfying $ \budget / \numSmall = \disti^{-1}(\sum_{k \leq j} \delta_{ik}) \cdot (\sum_{k \leq j} \delta_{ik})  - \disti^{-1}(\sum_{k < j} \delta_{ik}) \cdot (\sum_{k < j} \delta_{ik})$. Given a set $S$ of small agents, the continuous solution  corresponding to $S$ is $\quants(S)$ with $\quanti(S) = \sum_{j : i_j \in S} \delta_{ij}$.

\paragraph{The reduction.}
\begin{enumerate}
\item For each agent $i$, create $ \numSmall$ small agents $i_j$ where
  $1 \leq j \leq \numSmall$ so that the reduced instance has $
  \numSmall n$ agents.
\item For each small agent $i_j$, its cost is $ \budget / \numSmall$.
\item For each small agent $i_j$, its marginal contribution
  $\mlext_S(i_j)$ in value to a set $S$ is the marginal contribution
  of increasing the fraction of agent $i$ corresponding to $S$ by
  $\delta_{ij}$, i.e., $\mlext(\quants') - \mlext(\quants(S))$ where
  $\quanti' = \quanti(S) + \delta_{ij}$ and $\quant_j' = \quant_j(S)$
  for $j \neq i$.
\end{enumerate}

We show that the solution to the reduced problem that we obtained with the greedy algorithm for cardinality constraint corresponds to a solution for the multilinear extension program~\eqref{eq:multilinear-prog} that is a
$1 - 1/e - o(1)$ approximation, almost matching the performance of the greedy
algorithm for knapsack constraint with integral agents and linear cost curves. We start by showing that if a solution is feasible in the reduced problem, then the continuous solution corresponding to it is a feasible solution to the multilinear extension program~\eqref{eq:multilinear-prog}. Then, with access to exact values of the increases $\delta_{ij}$ and of the marginal contributions $\mlext_S(i_j)$, the approximation ratio is
$1 - 1/e - o(1)$.  Finally, we show that it is possible to approximate $\delta_{ij}$ and $\mlext_S(i_j)$ with estimates that cause an additional loss of $o(1)$ to the approximation ratio.

\paragraph{From a set of small agents to a continuous solution for the big agents.}

Previously, we defined a distribution to be regular if the virtual cost function is monotonically increasing. An alternate definition is that a distribution $\dist$ is regular if the cost curve $\quant \cdot \dist^{-1}(\quant)$ is convex. This definition is the analogue to the revenue curve being concave for regular distributions when the agents are buyers, and not sellers, from \citet{BR-89}.

Recall that given a set $S$ of small agents, the continuous solution  corresponding to $S$ is $\quants(S)$ with $\quanti(S) = \sum_{j : i_j \in S} \delta_{ij}$ and that $\delta_{ij}$ is the $j$th increase in $\quanti$ that has cost $\budget / \numSmall$. Therefore, given a set $S$ of small agents of size at most $\numSmall$ such that for any $\delta_{ij} \in S$, $\delta_{ik} \in S$ for all $k < j$, then $\quants(S)$ has cost at most $\budget$. The condition that if $\delta_{ij} \in S$, then $\delta_{ik} \in S$ for all $k < j$, is equivalent to the condition that greedy always picks small agents corresponding to lower quantiles before small agents corresponding to higher quantiles, which we show formally.

\begin{lemma}
Given two small agents $i_k$ and $i_j$ such that $k < j$, the greedy algorithm with a cardinality constraint picks $i_k$ before $i_j$ for regular distributions $\disti$.
\end{lemma}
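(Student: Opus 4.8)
The plan is to show that greedy prefers the earlier "slice" $i_k$ of agent $i$ over the later slice $i_j$ (with $k<j$) by arguing that the marginal value of $i_k$ is always at least that of $i_j$ at every step of the algorithm. Since both small agents have the same cost $\budget/\numSmall$, greedy compares them purely by marginal value, so it suffices to establish this value inequality. Two effects are at play and must be disentangled: (i) the quantile \emph{increment} $\delta_{ik}$ that slice $i_k$ contributes may differ from $\delta_{ij}$, and (ii) the multilinear extension $\mlext$ is concave along each coordinate for submodular $\val$, so a given quantile increment is worth more when the current fraction $\quanti$ is smaller.

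First I would observe that, by the regularity assumption in the just-introduced alternate form — the cost curve $\quant\mapsto \quant\cdot\disti^{-1}(\quant)$ is convex — the quantile increments needed to purchase a fixed cost slab $\budget/\numSmall$ are \emph{decreasing} in the index: $\delta_{i1}\geq \delta_{i2}\geq\cdots$. Indeed, a convex increasing cost curve requires larger and larger cost for each additional unit of quantile, equivalently smaller and smaller quantile per fixed unit of cost; so $\delta_{ik}\geq\delta_{ij}$ whenever $k<j$. This is the step I expect to be the main obstacle, since it requires carefully translating "convex cost curve" into "the slab-widths $\delta_{ij}$ are monotone decreasing," being a little careful that the cost curve is evaluated at cumulative quantiles $\sum_{\ell\le j}\delta_{i\ell}$ rather than at equally spaced points.

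Next I would use submodularity of $\val$, which gives that the multilinear extension $\mlext$ is concave in each coordinate $\quanti$ holding the others fixed. Consider any point during greedy's execution where neither $i_k$ nor $i_j$ has yet been selected (if $i_k$ has already been picked we are done; if $i_j$ has been picked but not $i_k$, that contradicts what we are proving, so we only need to handle the case where greedy is about to choose). Let $\quanti$ be the current fraction of agent $i$. The marginal value of $i_k$, were it available, is the $\mlext$-gain from raising $\quanti$ to $\quanti+\delta_{ik}$; similarly for $i_j$ with $\delta_{ij}$. Since selecting $i_k$ corresponds to an increment starting at a lower cumulative quantile than $i_j$ would (because $k<j$ and greedy picks slices of $i$ in some order — here we may need a small induction: if greedy has so far picked a set of $i$-slices of total width $\quanti$, and $i_k\notin$ that set while $i_j\notin$ that set, then $\quanti\le \sum_{\ell<k}\delta_{i\ell}$... ), and since $\delta_{ik}\ge\delta_{ij}$ by the previous paragraph, concavity of $\mlext$ along coordinate $i$ yields
\[
\mlext(\ldots,\quanti+\delta_{ik},\ldots)-\mlext(\ldots,\quanti,\ldots)\ \geq\ \mlext(\ldots,\quanti+\delta_{ij},\ldots)-\mlext(\ldots,\quanti,\ldots).
\]
Hence $\mlext_S(i_k)\geq\mlext_S(i_j)$ at every step, so greedy never strictly prefers $i_j$ to $i_k$; with a consistent tie-breaking rule (pick lower-index slices first) greedy picks $i_k$ before $i_j$, completing the proof. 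I would close by noting the one subtlety to verify carefully: the inductive claim that, before either slice is chosen, the already-selected $i$-slices form exactly a prefix $\{i_1,\dots,i_{m}\}$ with $m<k$, which follows by iterating the very statement being proved and is the reason the argument is phrased as an induction over greedy steps.
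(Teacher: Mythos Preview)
Your proposal is correct and follows the paper's approach: use convexity of the cost curve (regularity) to get $\delta_{ik}\geq\delta_{ij}$, then deduce $\mlext_S(i_k)\geq\mlext_S(i_j)$ for any $S$ containing neither. You over-complicate the second step, though --- by the paper's definition, both marginal contributions are the gain from raising the \emph{same} current value $\quanti(S)$ by $\delta_{ik}$ and $\delta_{ij}$ respectively, so the displayed inequality follows immediately from monotonicity of $\mlext$; concavity is not needed, and there is no ``lower cumulative quantile'' issue or inductive subtlety about which $i$-slices have already been picked.
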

\begin{proof}
Since all small agents have equal cost, we need to show that $i_k$ has a larger marginal contribution than $i_j$ to any set $S$ of small agents such that $i_k, i_j \not \in S$. Since $\mlext(\cdot)$ is monotone, it suffices to show that $\delta_{ik} > \delta_{ij}$. In quantile space, the cost of increasing some quantile $\quanti$ by a fix amount is increasing in $\quanti$ since $\quanti \cdot \dist^{-1}(\quanti)$ is convex by definition of regular distributions. Therefore, in cost space, the increase in quantile $\delta_i$ that is obtained by increasing the cost curve by a fix amount is decreasing, so $\delta_{ik} > \delta_{ij}$.
\end{proof}

The case of irregular distributions is considered in Section~\ref{s:irregular}.

\paragraph{With exact values of $\delta_{ij}$ and $\mlext_S(i_j)$.} We consider the case where the exact values of the increases in $\quants$ and marginal contributions
are given by an oracle.  We show that finding a good solution to this reduced problem with small agents gives us a good solution to the problem with big agents.

 \begin{lemma}\label{l:discretization} The optimal solution $\opti{S}$ to the reduced problem  satisfies $ \mlext(\quants(\opti{S})) \geq (1 - o(1)) \mlext(\exquants)$ where $\exquants$ is the optimal solution to the multilinear extension program~\eqref{eq:multilinear-prog}.
  \end{lemma}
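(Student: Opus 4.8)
The plan is to round the continuous optimum $\exquants$ down, coordinate by coordinate, to the nearest small-agent boundary, observe that the resulting set $S$ of small agents is feasible for the cardinality-$\numSmall$ constraint, and show that this rounding costs only $o(1)$ in multilinear value. Optimality of $\opti{S}$ for the reduced problem then gives $\mlext(\quants(\opti{S})) \ge \mlext(\quants(S)) \ge (1-o(1))\,\mlext(\exquants)$.

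In detail, I would write $\costcurvei(\quant) = \quant\,\disti^{-1}(\quant)$ for agent $i$'s cost curve, which is convex and increasing with $\costcurvei(0)=0$ since $\disti$ is regular; assuming (as is standard) that $\disti$ has no atom at cost $0$, $\costcurvei$ is strictly increasing, so $\costcurvei^{-1}$ is well defined, continuous, concave, and increasing with $\costcurvei^{-1}(0)=0$. The defining recurrence for the increments gives $\sum_{k\le j}\delta_{ik} = \costcurvei^{-1}(j\budget/\numSmall)$. Set $j_i = \bigfloor{\costcurvei(\exquanti)\,\numSmall/\budget}$ and $S = \{\,i_k : i\in N,\ 1\le k\le j_i\,\}$; then $S$ is downward closed in each agent, $\quanti(S) = \costcurvei^{-1}(j_i\budget/\numSmall) \le \exquanti$, and $S$ is feasible because $|S| = \sum_i j_i \le \sum_i \costcurvei(\exquanti)\,\numSmall/\budget \le \numSmall$ by the ex ante budget constraint on $\exquants$.

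Next I would bound the per-agent quantile loss. Since $\costcurvei^{-1}$ is concave and increasing with $\costcurvei^{-1}(0)=0$, and $\costcurvei(\exquanti) - j_i\budget/\numSmall < \budget/\numSmall$, concavity yields
$$\exquanti - \quanti(S) \;=\; \costcurvei^{-1}\big(\costcurvei(\exquanti)\big) - \costcurvei^{-1}\big(j_i\budget/\numSmall\big) \;\le\; \costcurvei^{-1}\big(\budget/\numSmall\big),$$
which tends to $0$ as $\numSmall \to \infty$. To turn this into a value bound, I would move from $\quants(S)$ to $\exquants$ one coordinate at a time: since $\mlext$ is multilinear and $\val$ is monotone submodular, its partial derivative in coordinate $i$ lies everywhere in $[0,\val(\{i\})]$, so
$$\mlext(\exquants) - \mlext(\quants(S)) \;\le\; \sum_i \big(\exquanti - \quanti(S)\big)\,\val(\{i\}) \;\le\; \Big(\sum_i \val(\{i\})\Big)\max_i \costcurvei^{-1}(\budget/\numSmall) \;=\; o(1).$$
Since $\mlext(\exquants)$ is a fixed positive constant (otherwise the claim is vacuous), this gives $\mlext(\quants(S)) \ge (1-o(1))\,\mlext(\exquants)$, and $\mlext(\quants(\opti{S})) \ge \mlext(\quants(S))$ by optimality of $\opti{S}$ over sets of at most $\numSmall$ small agents.

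The one place requiring care is the passage from the $O(\budget/\numSmall)$ per-agent \emph{cost} loss to a vanishing per-agent \emph{quantile} loss: this is precisely where regularity enters, as convexity of $\costcurvei$ (hence concavity of $\costcurvei^{-1}$) together with $\costcurvei^{-1}(0)=0$. The remaining steps — feasibility of $S$ and the multilinear telescoping inequality — are routine.
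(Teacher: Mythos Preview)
Your argument is correct, but it differs from the paper's in how the discretization loss is controlled. Both proofs begin from the same rounded-down set $S$ (your $j_i = \lfloor \costcurvei(\exquanti)\,\numSmall/\budget\rfloor$ is exactly the paper's ``maximize $\quants(S)$ subject to $\quants(S)\le\exquants$''), and both observe $|S|\le \numSmall$ via the ex ante budget. The divergence is in the loss bound.

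The paper overshoots: it adds one extra small agent per big agent to form $S^{+1}$, so $\quants(S^{+1})\ge\exquants$ and hence $\mlext(\quants(S^{+1}))\ge\mlext(\exquants)$ by monotonicity; then it greedily removes $n$ small agents to restore $|\cdot|\le \numSmall$. With the concrete choice $\numSmall=n^2$, concavity of $\mlext$ along positive directions yields a multiplicative loss of $O(n/\numSmall)=O(1/n)$, uniform in the values, the cost curves, and the magnitude of $\mlext(\exquants)$.

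You instead bound the undershoot directly: concavity of $\costcurvei^{-1}$ (which is where regularity enters your proof) caps each quantile gap at $\costcurvei^{-1}(\budget/\numSmall)$, and $\partial_i\mlext\le \val(\{i\})$ converts this into an additive value loss of at most $\big(\sum_i \val(\{i\})\big)\max_i \costcurvei^{-1}(\budget/\numSmall)$. This is valid and arguably more elementary, but the bound is additive and instance-dependent; turning it into a multiplicative $(1-o(1))$ requires treating $\mlext(\exquants)$ and $\sum_i \val(\{i\})$ as fixed constants while sending $\numSmall\to\infty$, and it does not immediately produce an explicit polynomial choice of $\numSmall$ in terms of $n$ alone. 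The paper's overshoot-and-prune route buys exactly that: $\numSmall=n^2$ suffices regardless of the cost curves or the value scale, which is what the surrounding polynomial-time claim relies on.
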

  \begin{proof} We pick the step size to be $m = n^2$. The proof shows that there exists a set $S$ that is close to a feasible solution in the reduced problem and such that $\quants(S)$ is a better solution than $\exquants$. Let $S$ be the set of small agents such that $\quants(S)$ is maximized subject to $\quants(S) \leq \exquants$. Define $S^{+1}$ to be the set containing all small agents in $S$ and one additional small agent for each big agent $i$. Observe that $\mlext(\quants(S^{+1}))  \geq \mlext( \exquants)$ since $\mlext(\cdot)$ is non-decreasing. So there is a feasible solution to the discretized problem such that if we add one small agent for each big agent $i$, then we obtain a better solution than the optimal solution to the original problem. 
 
Greedily remove agents by minimal marginal contribution from $S^{+1}$ until we get a feasible solution $S$. The number of small agents who need to be removed is $n$ since $S$ is feasible. Since $S$ contains $n^2$ small agents, by the greediness and the fact  $\mlext(\cdot)$ is concave along any line of positive direction,  $(1 + 1/n) \mlext(\quants(S)) \geq \mlext(\quants(S^{+1}))$.

Therefore,
\begin{align*}
(1 + o(1)) \mlext(\quants(\opti{S})) \geq (1 + o(1)) \mlext(\quants(S)) \geq  \mlext(\quants(S^{+1})) \geq \mlext( \exquants)
\end{align*}
\end{proof}

Next, we show that the reduced problem can be optimized.

\begin{lemma}\label{l:greedyreduced} Let $S$ be the set returned by the greedy algorithm for submodular functions under a cardinality constraint on the reduced problem, then $ \mlext(\quants(S)) \geq (1 - 1/e) \mlext(\quants(S^*))$ where $S^*$ is the optimal solution to the reduced problem.
\end{lemma}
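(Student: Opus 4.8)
The plan is to recognize the reduced problem as a standard instance of monotone submodular maximization under a cardinality constraint, so that the $1-1/e$ guarantee of the greedy algorithm (\citealp{NWF-78}) applies directly. Define the set function $\setfunc$ on the ground set of the $\numSmall n$ small agents by $\setfunc(S) = \mlext(\quants(S))$, where $\quanti(S) = \sum_{j \,:\, i_j \in S}\delta_{ij}$. Every small agent costs $\budget/\numSmall$ and the budget is $\budget$, so a set $S$ of small agents is feasible in the reduced problem exactly when $\setsize{S} \le \numSmall$; that is, the constraint is a cardinality constraint with bound $\numSmall$. Running the greedy algorithm on the reduced problem is precisely greedy maximization of $\setfunc$ under this constraint, where the marginal value of $i_j$ with respect to $S$ is $\mlext_S(i_j) = \setfunc(S \cup \{i_j\}) - \setfunc(S)$. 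Hence it suffices to show that $\setfunc$ is monotone and submodular; the claimed bound then follows from the Nemhauser--Wolsey--Fisher theorem, since greedy returns $S$ with $\setfunc(S) \ge \bigl(1 - (1 - 1/\numSmall)^{\numSmall}\bigr)\setfunc(S^*) \ge (1 - 1/e)\,\setfunc(S^*)$.

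To prove that $\setfunc$ is monotone and submodular I would use two facts. First, the map $S \mapsto \quants(S)$ is additive over small agents: $\quants(S) = \sum_{i_j \in S}\delta_{ij}\,\mathbf{e}_i$, where the increments $\delta_{ij} \ge 0$ are fixed quantities that do not depend on $S$ (in particular, they do not depend on which other small agents of big agent $i$ are present). Second, the multilinear extension of a monotone submodular function is monotone, is multilinear (hence linear in each coordinate separately), and has nonpositive mixed second derivatives, i.e.\ $\partial\mlext/\partial\quanti$ is independent of $\quanti$ and nonincreasing in every other coordinate; equivalently, $\mlext$ is concave along every nonnegative direction, the same property already used in the proof of Lemma~\ref{l:discretization}. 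Monotonicity of $\setfunc$ is then immediate, since enlarging $S$ only increases the coordinates of $\quants(S)$ and $\mlext$ is monotone. For submodularity, fix $T \subseteq S$ and a small agent $i_j \notin S$. By linearity of $\mlext$ in coordinate $i$, $\setfunc(T \cup \{i_j\}) - \setfunc(T) = \delta_{ij}\,\partial_i\mlext(\quants(T))$ and $\setfunc(S \cup \{i_j\}) - \setfunc(S) = \delta_{ij}\,\partial_i\mlext(\quants(S))$; since $\quants(T) \le \quants(S)$ coordinatewise (in particular in every coordinate other than $i$) and $\partial_i\mlext$ is nonincreasing in those coordinates, $\partial_i\mlext(\quants(T)) \ge \partial_i\mlext(\quants(S))$, and multiplying by $\delta_{ij} \ge 0$ yields the submodularity inequality.

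The main obstacle is this submodularity verification: it relies both on the structure of the reduction, which attaches a fixed incremental vector $\delta_{ij}\,\mathbf{e}_i$ to each small agent so that the ``set-to-point'' map $S \mapsto \quants(S)$ is additive, and on the precise curvature of the multilinear extension of a monotone submodular function. The remaining ingredients --- identifying the reduced constraint as a cardinality constraint and invoking the greedy guarantee of \citet{NWF-78} --- are routine.
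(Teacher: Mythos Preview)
Your proposal is correct and takes essentially the same approach as the paper: identify the reduced constraint as a cardinality constraint, verify that $\setfunc(S)=\mlext(\quants(S))$ is monotone submodular, and invoke the \citet{NWF-78} greedy guarantee. The paper compresses the submodularity check into a single line (``follows directly from the concavity of $\mlext(\cdot)$ along any positive line of direction''), whereas you spell out the verification via the additivity of $S\mapsto\quants(S)$ and the nonpositive mixed second derivatives of the multilinear extension; your argument is the explicit version of theirs.
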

\begin{proof}
 Observe that the objective function in the reduced problem is a submodular function. This follows directly from the concavity of $\mlext(\cdot)$ along any positive line of direction. In addition, since all small agents have cost $\budget / \numSmall$, the constraint is a cardinality constraint. Since the greedy algorithm for submodular functions under a cardinality constraint is a $1 -1 /e$ approximation for submodular functions, we get the desired result.
\end{proof}

We now have the tools to show that if we had an oracle for the increases and marginal contributions, the greedy algorithm on the reduced instance would give us a  $1 - 1/e - o(1)$ approximation.

\begin{lemma} \label{l:greedyExact} Let $S$ be the output of the greedy algorithm on the reduced instance, where exact values of $\delta_{ij}$ and $\mlext_S(i_j)$ are given by an oracle at each iteration, then $\mlext(\quants(S)) \geq (1 - 1/e - o(1)) \mlext(\exquants)$, where $\exquants$ is the optimal solution to the multilinear extension program~\eqref{eq:multilinear-prog}.
\end{lemma}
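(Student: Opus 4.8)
The plan is to chain together the two lemmas just proved, Lemma~\ref{l:greedyreduced} and Lemma~\ref{l:discretization}, after first checking that the greedy output is a genuine feasible point of the continuous program. First I would recall the structure of the reduced instance: it has $\numSmall n$ small agents, each of cost $\budget/\numSmall$, so the knapsack constraint becomes a cardinality constraint allowing $\numSmall$ small agents; the objective $\mlext_S(\cdot)$ is monotone submodular by concavity of $\mlext(\cdot)$ along positive directions; hence greedy selects exactly $\numSmall$ small agents. By the preceding ``lower quantiles first'' lemma (using regularity of the $\disti$, which gives $\delta_{ik} > \delta_{ij}$ for $k<j$ together with monotonicity of $\mlext(\cdot)$), greedy always adds $i_k$ before $i_j$, so the returned set $S$ is nested: $i_j \in S$ implies $i_k \in S$ for all $k<j$. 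Therefore the continuous solution $\quants(S)$ has total cost at most $\budget$ and is feasible for the multilinear extension program~\eqref{eq:multilinear-prog}.

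Next I would apply Lemma~\ref{l:greedyreduced} to obtain $\mlext(\quants(S)) \geq (1-1/e)\,\mlext(\quants(\opti{S}))$, where $\opti{S}$ is the optimal solution to the reduced problem, and then Lemma~\ref{l:discretization} to obtain $\mlext(\quants(\opti{S})) \geq (1-o(1))\,\mlext(\exquants)$. Multiplying the two bounds gives
\[
\mlext(\quants(S)) \;\geq\; (1-1/e)(1-o(1))\,\mlext(\exquants) \;=\; \left(1 - \tfrac1e - o(1)\right)\mlext(\exquants),
\]
which is the claimed bound.

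There is essentially no obstacle: the two substantive lemmas have already been established, so the work here is bookkeeping. The two points worth double-checking are (i) that the $o(1)$ term really vanishes as $n\to\infty$ with the choice $\numSmall = n^2$ fixed in Lemma~\ref{l:discretization}, and (ii) that the greedy output, which is a priori only feasible for the reduced cardinality problem, in fact maps to a feasible point of~\eqref{eq:multilinear-prog} — this is precisely what the nestedness argument above secures. For completeness I would also note that $\opti{S}$ may be taken nested as well (again from $\delta_{ik}>\delta_{ij}$ for $k<j$ and monotonicity of $\mlext(\cdot)$), so that $\quants(\opti{S})$ is a legitimate feasible solution to~\eqref{eq:multilinear-prog} and the comparison with $\exquants$ in Lemma~\ref{l:discretization} is on firm footing.
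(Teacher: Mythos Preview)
Your proposal is correct and follows essentially the same route as the paper: chain Lemma~\ref{l:greedyreduced} with Lemma~\ref{l:discretization} to obtain $\mlext(\quants(S)) \geq (1-1/e)\,\mlext(\quants(\opti{S})) \geq (1-1/e-o(1))\,\mlext(\exquants)$. The paper's own proof is terser and omits the feasibility and nestedness checks you include, but those are handled elsewhere in Section~\ref{s:optimal} and your added care does no harm.
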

\begin{proof}
We combine the results from the discretization that causes a $o(1)$ loss with the greediness of the algorithm that is a $1 - 1/e$ approximation to obtain the desired result.

 By  Lemma~\ref{l:greedyreduced} and  Lemma~\ref{l:discretization},
$$ \mlext(\quants(S))  \geq (1 - 1/e) \mlext(\quants(S^*)) \geq (1 - 1/e - o(1)) \mlext(\critquants)$$ where $S^*$ is the optimal solution to the reduced problem.
\end{proof}

\paragraph{With estimates of $\delta_{ij}$ and $\mlext_S(i_j)$.} We now show that we can use the greedy algorithm with estimates of the increases and the marginal contributions, that we can compute. Let $\noisy{\quants}(S)$ be defined similarly to $\quants(S)$ but with estimates $\noisy{\delta}_{i_j}$.  The first lemma shows that the value of the optimal solution to the reduced problem has almost the same value as when the increases $\delta_{ij}$ are estimated. The second lemma extends Lemma~\ref{l:greedyreduced} to the case where greedy is run with estimated marginal contributions $\noisy{\mlext}_S(i_j)$ and any $\noisy{\delta}_{i_j}$. We defer the proofs of these two lemmas to the appendix.

\begin{lemma} \label{l:noisygreedy} Let $S^*$ be the optimal solution to the reduced problem with exact value of $\delta_{ij}$ and $\mlext_S(i_j)$,  then $  \mlext(\noisy{\quants}(S^*)) \geq (1 - o(1)) \mlext(\quants(S^*))$.
\end{lemma}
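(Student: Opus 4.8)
The plan is to compute each estimate $\noisy{\delta}_{ij}$ by binary search on agent $i$'s cost curve $\quant \mapsto \quant\,\disti^{-1}(\quant)$, which is monotone increasing (and, for regular $\disti$, convex); rounding the search outcome so that $\noisy{\delta}_{ij} \le \delta_{ij}$, polynomially many iterations give an additive error $\delta_{ij} - \noisy{\delta}_{ij} \le \eta$ with $\eta$ as small as $1/\mathrm{poly}(n)$. Granting such estimates, the argument splits into two parts: (i) show that they force $\noisy{\quanti}(S^*) \ge (1-1/n)\,\quanti(S^*)$ for every agent $i$; and (ii) deduce $\mlext(\noisy{\quants}(S^*)) \ge (1-1/n)\,\mlext(\quants(S^*)) = (1-o(1))\,\mlext(\quants(S^*))$ from the monotonicity of $\mlext(\cdot)$ and its concavity along non-negative directions.

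For (i) I would first argue that, without loss of generality, $S^*$ selects a prefix $\{i_1,\dots,i_{j_i}\}$ of the small agents of each big agent $i$: since $\delta_{ik} > \delta_{ij}$ whenever $k < j$ (shown earlier for regular distributions, in the proof that greedy picks small agents of lower quantiles first) and $\mlext(\cdot)$ is monotone, swapping a selected $i_j$ for an unselected $i_k$ with $k<j$ preserves cardinality, stays feasible for the cardinality constraint, and weakly increases $\quanti(S^*)$, hence $\mlext(\quants(S^*))$; iterating agent by agent yields a prefix solution that is still optimal. For a prefix, $\quanti(S^*) = \sum_{k \le j_i}\delta_{ik} \ge \delta_{i1}$ whenever $j_i \ge 1$, and $\delta_{i1}$ satisfies $\delta_{i1}\,\disti^{-1}(\delta_{i1}) = \budget/\numSmall$, so $\delta_{i1} \ge \budget/(\numSmall\,\pmax)$, where $\pmax$ is an upper bound on the (bounded) support of the cost distributions. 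On the other hand $|\noisy{\quanti}(S^*) - \quanti(S^*)| \le \sum_{k\le j_i}(\delta_{ik}-\noisy{\delta}_{ik}) \le \numSmall\,\eta$, so choosing $\eta \le \budget/(n\,\numSmall^2\,\pmax)$ --- still only polynomially many binary-search iterations --- gives $|\noisy{\quanti}(S^*) - \quanti(S^*)| \le \delta_{i1}/n \le \quanti(S^*)/n$ when $j_i \ge 1$, while both sides vanish when $j_i = 0$; hence $\noisy{\quanti}(S^*) \ge (1-1/n)\,\quanti(S^*)$, and also $\noisy{\quanti}(S^*) \le \quanti(S^*) \le 1$, so $\noisy{\quants}(S^*)$ stays in the domain of $\mlext$.

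For (ii), monotonicity of $\mlext(\cdot)$ and $\noisy{\quants}(S^*) \ge (1-1/n)\,\quants(S^*)$ give $\mlext(\noisy{\quants}(S^*)) \ge \mlext\big((1-1/n)\,\quants(S^*)\big)$; and since the multilinear extension of a monotone submodular function is concave along every non-negative direction (the property already used in the proofs of Lemmas~\ref{l:discretization} and~\ref{l:greedyreduced}), restricting it to the segment from $\zeros$ to $\quants(S^*)$ and using $\mlext(\zeros) = \val(\emptyset) \ge 0$ gives $\mlext\big((1-1/n)\,\quants(S^*)\big) \ge (1-1/n)\,\mlext(\quants(S^*))$, which chains to the claimed bound. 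I expect the one genuinely delicate point to be precisely the conversion in (i) of the additive precision of binary search into a multiplicative guarantee: a priori $\quanti(S^*)$ could be positive yet so small that an additive perturbation destroys it, and the reduction to a prefix together with the cost-curve lower bound $\delta_{i1} \ge \budget/(\numSmall\,\pmax)$ is exactly what rules this out. Everything afterwards --- propagating the bound through $\mlext$ --- is routine given the concavity of the multilinear extension along non-negative rays.
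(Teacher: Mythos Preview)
Your proposal is correct and follows the same two-step skeleton as the paper's proof: first show $\noisy{\quanti}(S^*) \geq (1-1/n)\,\quanti(S^*)$ for every $i$, then use monotonicity and concavity of $\mlext$ along non-negative directions to pass to $\mlext(\noisy{\quants}(S^*)) \geq (1-1/n)\,\mlext(\quants(S^*))$. The one difference is in how the coordinate-wise multiplicative bound is obtained. The paper simply stipulates that the estimates satisfy $(1-1/n^3)\,\delta_{ij} \leq \noisy{\delta}_{ij} \leq \delta_{ij}$ for every $i,j$ (justified only by ``the weight functions are increasing''), so the bound on $\noisy{\quanti}(S^*)$ follows by summation with no structural assumption on $S^*$. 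You instead start from the additive precision that binary search actually delivers and recover a multiplicative guarantee by first arguing that $S^*$ is without loss of generality a prefix for each big agent and then lower-bounding $\quanti(S^*) \geq \delta_{i1} \geq \budget/(\numSmall\,\pmax)$ via bounded support. Your route is longer but makes explicit what the oracle really gives; the paper's is shorter but sweeps the additive-to-multiplicative conversion under the rug.
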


\begin{lemma} \label{l:ngreedyreduced}
Let $\noisy{S}$ be the set returned by the greedy algorithm on the reduced problem with estimates $\noisy{\delta}_{i_j}$ and $\noisy{\mlext}_S(i_j)$, then $ \mlext(\noisy{\quants}(\noisy{S})) \geq (1 - 1/e - o(1)) \mlext(\noisy{\quants}(S^*))$ w.h.p., where $S^*$ is the optimal solution to the reduced problem with estimates $\noisy{\delta}_{i_j}$ and exact values $\mlext_S(i_j)$.
\end{lemma}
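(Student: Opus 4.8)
Fix the estimated increases $\noisy{\delta}_{i_j}$ and set $f(S) = \mlext(\noisy{\quants}(S))$ on the ground set of the $\numSmall n$ small agents. Exactly as in the proof of Lemma~\ref{l:greedyreduced}, $f$ is monotone and submodular (by monotonicity of $\mlext(\cdot)$ and its concavity along any nonnegative direction) and the reduced constraint is the cardinality constraint $|S| \le \numSmall$; so $S^*$ maximizes $f$ over sets of size at most $\numSmall$, while greedy, when extending its current set $S$, picks a small agent maximizing the \emph{estimate} $\noisy{\mlext}_S(i_j)$ instead of the true marginal $\Delta_S(i_j) := f(S \cup \{i_j\}) - f(S)$. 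The plan is: (i) bound the estimation error uniformly, (ii) push that error through the standard greedy analysis, and (iii) calibrate the error so that the accumulated loss is $o(1)$ times the optimum. For (i): each $\mlext(\quants') = \expect[T \sim \quants']{\val(T)}$ is an average of $\val(T) \in [0,\val(N)]$, so by Hoeffding's inequality a sample mean over $t$ independent draws is within additive $\eta$ of it with probability $1 - 2\exp(-2t\eta^2/\val(N)^2)$, and writing each $\noisy{\mlext}_S(i_j)$ as a difference of two such sample means makes it within $2\eta$ of $\Delta_S(i_j)$ with comparable probability. Greedy runs $\numSmall$ rounds and inspects at most $\numSmall n$ candidates per round, so a union bound over the $O(\numSmall^2 n)$ estimates it ever uses shows that, with $t = \mathrm{poly}(n, \numSmall, \val(N)/\eta)$ samples, \emph{all} of them are within $2\eta$ of the corresponding true marginals with probability $1 - 1/\mathrm{poly}(n)$. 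Condition on this event.

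\paragraph{The approximate-greedy step.} On this event, the small agent $e_\ell$ added in round $\ell$ has true marginal $\Delta_{S_{\ell-1}}(e_\ell) \ge \Delta^{\max}_\ell - 4\eta$, where $S_{\ell-1}$ is greedy's set after $\ell-1$ rounds and $\Delta^{\max}_\ell = \max_e \Delta_{S_{\ell-1}}(e)$ --- we lose $2\eta$ because greedy's pick may be underestimated and another $2\eta$ because the true maximizer may be overestimated. Writing $a_\ell = f(S^*) - f(S_\ell)$, monotonicity and submodularity give $a_{\ell-1} \le \sum_{e \in S^*}\Delta_{S_{\ell-1}}(e) \le \numSmall\,\Delta^{\max}_\ell$, so $f(S_\ell) - f(S_{\ell-1}) \ge a_{\ell-1}/\numSmall - 4\eta$, i.e., $a_\ell \le (1 - 1/\numSmall)\,a_{\ell-1} + 4\eta$. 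Unrolling from $a_0 \le f(S^*)$,
\[
a_{\numSmall} \;\le\; (1 - 1/\numSmall)^{\numSmall}\, f(S^*) \;+\; 4\eta\sum_{\ell=0}^{\numSmall-1}(1 - 1/\numSmall)^{\ell} \;\le\; \tfrac{1}{e}\, f(S^*) + 4\eta\,\numSmall,
\]
hence $f(\noisy{S}) = f(S^*) - a_{\numSmall} \ge (1 - 1/e)\, f(S^*) - 4\eta\,\numSmall$.

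\paragraph{Calibration and the main obstacle.} With $\numSmall = n^2$ as in Lemma~\ref{l:discretization}, it remains to choose $\eta$ so that $4\eta\,\numSmall = o(1)\cdot f(S^*)$, and this is where the real care is needed: the sampling guarantee is \emph{additive} while the lemma asks for a \emph{multiplicative} $(1 - 1/e - o(1))$ factor, so $\eta$ must be set relative to the (unknown) optimum $f(S^*)$. A polynomial lower bound suffices: a single small agent contributes $\max_i \noisy{\delta}_{i_1}\val(\{i\}) > 0$ to $f$, whereas $f(S^*) \le \mlext(\ones) = \val(N) \le n\max_i \val(\{i\})$, so $f(S^*)$ is within a polynomial (in the input) factor of the computable quantity $\max_i \noisy{\delta}_{i_1}\val(\{i\})$; taking $\eta$ to be this quantity divided by a large enough power of $n$ then gives $4\eta\,\numSmall = o(1)\, f(S^*)$ while keeping $t = \mathrm{poly}(n)$, and on the good event we conclude $\mlext(\noisy{\quants}(\noisy{S})) = f(\noisy{S}) \ge (1 - 1/e - o(1))\,\mlext(\noisy{\quants}(S^*))$. (One also notes that greedy adds the small agents of each big agent in order of increasing index $j$ --- since the estimated increases $\noisy{\delta}_{i_j}$ are decreasing in $j$ up to the estimation error, exactly as in the regular-case argument before Lemma~\ref{l:discretization} --- so $\noisy{\quants}(\noisy{S})$ remains a legitimate budget-feasible continuous solution.) The only genuinely nonroutine point is this calibration of $\eta$ against $f(S^*)$; submodularity of $f$, the error-carrying recursion, and the union bound are all standard.
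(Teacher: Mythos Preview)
Your proof is correct and follows the same approach as the paper: both reduce to the standard noisy-greedy analysis for monotone submodular maximization under a cardinality constraint, obtaining sampling-based estimates of marginals and carrying an additive error term through the $(1-1/\numSmall)$ recursion. The only substantive difference is bookkeeping---the paper asserts (citing \citet{CCMV-11}) a multiplicative error of $\delta^2 g(S^*)$ and reads off $(1-1/e)(1-2\delta)$, whereas you work with an absolute Hoeffding error $\eta$ and then explicitly calibrate $\eta$ against a computable lower bound on $f(S^*)$; your treatment of this calibration is in fact more careful than the paper's, which leaves the same $\val(N)/g(S^*)$ dependence implicit.
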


Combining the previous results, we obtain the main result of this section.

\begin{theorem} \label{thm:greedy} Let $\noisy{S}$ be the output by the greedy algorithm on the reduced instance with estimates of $\delta_{ij}$ and $\mlext_S(i_j)$, then $\mlext(\noisy{\quants}(\noisy{S})) \geq (1 - 1/e - o(1)) \mlext(\exquants)$ w.h.p., where $\exquants$ is the optimal solution to the multilinear extension program~\eqref{eq:multilinear-prog}.
\end{theorem}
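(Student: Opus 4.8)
The plan is to chain together the lemmas of this subsection, tracking how the several $o(1)$ losses compose. It helps to name three optima of (variants of) the reduced problem: let $\opti{S}$ be the optimum of the reduced problem with exact increases $\delta_{ij}$ and exact marginal contributions $\mlext_S(i_j)$ (this is the solution figuring in Lemmas~\ref{l:discretization} and~\ref{l:noisygreedy}); let $\opti{S}_{\text{est}}$ be the optimum of the reduced problem with estimated increases $\noisy{\delta}_{i_j}$ but exact marginal contributions (the ``$S^*$'' of Lemma~\ref{l:ngreedyreduced}); and recall that $\noisy{S}$ is the greedy output when both quantities are estimated.

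First I would apply Lemma~\ref{l:ngreedyreduced}, obtaining $\mlext(\noisy{\quants}(\noisy{S})) \geq (1 - 1/e - o(1))\,\mlext(\noisy{\quants}(\opti{S}_{\text{est}}))$ with high probability. Second, since the reduced objective with estimated increases evaluated on a set $S$ is exactly $\mlext(\noisy{\quants}(S))$ (the marginal contributions telescope), and since $\opti{S}$ is feasible in that program---at most $\numSmall$ small agents of cost $\budget/\numSmall$ each, respecting the lower-quantile-first ordering---optimality of $\opti{S}_{\text{est}}$ gives $\mlext(\noisy{\quants}(\opti{S}_{\text{est}})) \geq \mlext(\noisy{\quants}(\opti{S}))$. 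Third, Lemma~\ref{l:noisygreedy} bounds the error of evaluating $\opti{S}$ with the estimated increases: $\mlext(\noisy{\quants}(\opti{S})) \geq (1 - o(1))\,\mlext(\quants(\opti{S}))$. Fourth, Lemma~\ref{l:discretization} relates the best reduced solution to the continuous optimum: $\mlext(\quants(\opti{S})) \geq (1 - o(1))\,\mlext(\exquants)$.

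Multiplying these four inequalities and simplifying $(1 - 1/e - o(1))(1 - o(1))^2 = 1 - 1/e - o(1)$ yields $\mlext(\noisy{\quants}(\noisy{S})) \geq (1 - 1/e - o(1))\,\mlext(\exquants)$ w.h.p. The probabilistic qualifier enters only through Lemma~\ref{l:ngreedyreduced}, so it carries through without change, and feasibility of $\noisy{\quants}(\noisy{S})$ for the multilinear extension program~\eqref{eq:multilinear-prog} follows from the construction of the reduced instance (at most $\numSmall$ small agents, each of cost $\budget/\numSmall$). I expect the main obstacle to be the second step: one has to check that the ``exact'', ``estimated-increase'', and ``fully-estimated'' versions of the reduced problem share a common feasible region, so that comparing the optimum of one against a solution merely feasible for it is legitimate, and that the telescoping identity identifying the reduced objective with $\mlext(\noisy{\quants}(\cdot))$ actually holds. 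The remaining steps are routine arithmetic on $o(1)$ terms together with direct appeals to the stated lemmas.
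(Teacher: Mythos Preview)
Your proposal is correct and follows essentially the same chain of lemmas as the paper: apply Lemma~\ref{l:ngreedyreduced}, then Lemma~\ref{l:noisygreedy}, then Lemma~\ref{l:discretization}, and multiply the resulting factors. Your step~2, bridging the two distinct optima $\opti{S}_{\text{est}}$ and $\opti{S}$ via the optimality of the former for the estimated-increase objective, is a detail the paper glosses over (it writes both as a single ``$S^*$''), so your version is in fact slightly more careful.
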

\begin{proof}
This proof follows similarly to the one for Lemma~\ref{l:greedyExact}, the difference is that this proof adds the loss from the estimates.

 By Lemma~\ref{l:noisygreedy} and Lemma~\ref{l:ngreedyreduced}, 
$$ \mlext(\noisy{\quants}(\noisy{S})) \geq (1 - 1/e - o(1)) \mlext(\noisy{\quants}(S^*)) \geq (1 - 1/e - o(1)) \mlext(\quants(S^*))$$ where $S^*$ is the optimal solution to the reduced problem. Using Lemma~\ref{l:discretization} that connects the discretized reduced instance to the original continuous problem, we conclude that $$\mlext(\noisy{\quants}(\noisy{S})) \geq (1 - 1/e - o(1)) \mlext(\quants(S^*)) \geq (1 - 1/e - o(1)) \mlext(\critquants).$$
\end{proof}

Note that in the case of additive value functions, the greedy algorithm is optimal when the optimization is subject to a cardinality constraint and the marginal contributions can be computed exactly. We therefore get the following result.

\begin{lemma} \label{l:greedy} Assume $\val(\cdot)$ is an additive value function. Let $S$ be the set returned by the greedy algorithm on the reduced problem with estimates $\noisy{\delta}_{i_j}$, then $ \mlext(\noisy{\quants}(S)) \geq (1- o(1)) \mlext(\exquants)$ w.h.p., where $\exquants$ is the optimal solution to the multilinear extension program~\eqref{eq:multilinear-prog}.
\end{lemma}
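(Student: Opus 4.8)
The plan is to exploit the fact that for an additive value function the multilinear extension is \emph{linear}, $\mlext(\quants) = \sum_i \vali\,\quanti$, so the reduced (discretized) instance has a modular objective: the value a small agent $i_j$ contributes to any set $S$ not containing it is exactly $\vali\,\delta_{ij}$, or $\vali\,\noisy{\delta}_{ij}$ when the increases are estimated, independent of $S$. Two consequences matter. First, there is no separate ``marginal-contribution estimate'' to handle: once an estimate $\noisy{\delta}_{ij}$ of the $j$th increase is available, the corresponding value $\vali\,\noisy{\delta}_{ij}$ is computed exactly, so the only source of error is the estimation of the increases, which is controlled by Lemma~\ref{l:noisygreedy}. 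Second, the greedy algorithm for a modular objective under a cardinality constraint is \emph{optimal}, not merely a $1-1/e$ approximation --- it simply selects the $\numSmall$ small agents of largest value. Hence the analogue of Lemma~\ref{l:ngreedyreduced} holds here with the factor $1-1/e$ replaced by $1$: the set $S$ returned by greedy run with estimates $\noisy{\delta}_{ij}$ satisfies $\mlext(\noisy{\quants}(S)) \geq \mlext(\noisy{\quants}(S'))$ for every feasible $S'$ in the reduced instance.

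With this strengthening the proof is a short chain of already-proved lemmas. Let $\opti{S}$ be the optimal solution to the reduced instance with the \emph{exact} increases $\delta_{ij}$. Since $\opti{S}$ is feasible in the reduced instance, optimality of greedy for modular objectives gives $\mlext(\noisy{\quants}(S)) \geq \mlext(\noisy{\quants}(\opti{S}))$. By Lemma~\ref{l:noisygreedy}, $\mlext(\noisy{\quants}(\opti{S})) \geq (1-o(1))\,\mlext(\quants(\opti{S}))$, and by Lemma~\ref{l:discretization}, $\mlext(\quants(\opti{S})) \geq (1-o(1))\,\mlext(\exquants)$. Composing the three inequalities yields $\mlext(\noisy{\quants}(S)) \geq (1-o(1))\,\mlext(\exquants)$, and the ``w.h.p.'' qualifier is inherited from the concentration used to obtain the estimates $\noisy{\delta}_{ij}$ in Lemma~\ref{l:noisygreedy}.

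The one point requiring care --- and where I would spend the most attention --- is that the greedy output must correspond to an \emph{actual} continuous solution $\noisy{\quants}(S)$, i.e.\ it must satisfy the prefix/nestedness condition that $i_j \in S$ implies $i_k \in S$ for all $k < j$, so that the cost bound of $\budget$ translates correctly (this is the role played by the earlier lemma showing greedy picks lower-quantile small agents first). For the exact increases this follows from convexity of the cost curve of a regular distribution, which forces $\delta_{ik} > \delta_{ij}$ and hence $\vali\,\delta_{ik} > \vali\,\delta_{ij}$ for $k < j$, so the top-$\numSmall$-by-value set automatically takes a prefix of each big agent's small agents. I would either argue the estimation procedure returns estimates $\noisy{\delta}_{ij}$ that preserve this monotonicity in $j$ --- in which case the same argument applies verbatim --- or absorb any finitely many order violations into the already-present $o(1)$ discretization loss. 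Everything else is bookkeeping, so I do not expect a genuine obstacle.
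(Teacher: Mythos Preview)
Your proposal is correct and takes essentially the same approach as the paper, which offers only a one-sentence justification (``the greedy algorithm is optimal when the optimization is subject to a cardinality constraint and the marginal contributions can be computed exactly'') before stating the lemma; you have simply, and correctly, made explicit the chain $\mlext(\noisy{\quants}(S)) \geq \mlext(\noisy{\quants}(\opti{S})) \geq (1-o(1))\mlext(\quants(\opti{S})) \geq (1-o(1))\mlext(\exquants)$ via modular-greedy optimality, Lemma~\ref{l:noisygreedy}, and Lemma~\ref{l:discretization}. One small remark: the ``w.h.p.'' does not actually come from Lemma~\ref{l:noisygreedy}, whose estimation of $\noisy{\delta}_{ij}$ is deterministic (binary search on a monotone cost curve); in the additive case no sampling for marginal contributions is needed either, so the qualifier is essentially vacuous here and carried over from the submodular statement for uniformity.
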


Therefore, all the results in previous sections suffer an extra $1 - 1/e - o(1)$ factor in the general case of submodular value function and an extra $1 - o(1)$ factor in the case of additive value function that are due to computational constraints.

\section{Symmetric Costs and Values}
\label{s:sym}
In this section we study symmetric environments where both the distribution of costs and the value function are symmetric. A submodular value function is symmetric if the value of a set only depends on the cardinality of that set, i.e., $\val(S) = g(|S|)$ for some function $g(\cdot)$. In this setting, we obtain an oblivious posted pricing that achieves an approximation ratio of $(1 - 1/\sqrt{2 \pi k })(1 - 1 / k)$ where $k$ is the size of the market, which is identical to the approximation obtained in the additive case with sequential posted pricing. We assume that the distribution of costs is regular.

The following technicalities are used for this section only. We overload the notation and denote by $\val(\cdot) : \mathbb{R}_+ \rightarrow \mathbb{R}_+$ the concave hull of the points $\{(i, v(S_i))\}_{i=0}^{n}$ where $S_i$ is any set of size $i$. The posted prices in this section are symmetric and are defined by a single price $\price$, i.e., $\prices = (\price, \cdots, \price)$ and $\exquants = (\exquant, \cdots, \exquant)$. Note that the market size $k$ in such a symmetric setting is $k = \budget / \price$.

We start with two lemmas that highlight symmetric properties of the optimal solution to the concave closure program in this symmetric setting.

\begin{lemma} 
\label{l:prices-symmetric}
For symmetric submodular value function $\val(\cdot)$ and symmetric distributions of costs,   the optimal solution $\exquants$ to the concave closure program~\eqref{eq:relaxed-concave-bayes-opt-prog} is symmetric, i.e., $\exquanti^+ = \exquant_j^+$ for all $i,j$.
\end{lemma}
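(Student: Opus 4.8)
The plan is to take an \emph{arbitrary} optimal solution $\exquants^+$ of the concave closure program~\eqref{eq:relaxed-concave-bayes-opt-prog} and replace it by its symmetrization, namely the average of $\exquants^+$ over all $n!$ permutations of its coordinates. I will show this symmetrized point is feasible and has objective value at least $\ccext(\exquants^+)$, hence is also optimal; since it is the constant vector $\bar\quant\cdot\ones$ with $\bar\quant = \tfrac1n\sum_i \exquant_i^+$, this exhibits a symmetric optimal solution, and we may take $\exquants^+$ to be it.

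Two structural facts do the work. First, because the cost distributions are identical, $\disti^{-1} = \dist^{-1}$ for every $i$, so the budget constraint of~\eqref{eq:relaxed-concave-bayes-opt-prog} reads $\sum_i \quanti\,\dist^{-1}(\quanti) \leq \budget$; by the assumed regularity of the cost distribution and the alternate characterization of regularity in Section~\ref{s:optimal}, the cost curve $\quant \mapsto \quant\,\dist^{-1}(\quant)$ is convex. Second, because $\val(\cdot)$ is symmetric, $\ccext(\cdot)$ is invariant under permutations of its coordinates: relabeling the ground set by a permutation $\pi$ is a bijection on distributions over subsets that permutes the marginal profile accordingly and leaves $\expect[S\sim\setdist]{\val(S)}$ unchanged since $\val(\pi S)=\val(S)$, so the maxima defining $\ccext(\quants)$ and $\ccext(\pi\,\quants)$ coincide.

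With these in hand, let $\bar\quants$ be the average of the $n!$ coordinate-permutations of $\exquants^+$; a direct count gives $\bar\quant_i = \tfrac1n\sum_j \exquant_j^+ =: \bar\quant$ for every $i$, so $\bar\quants = \bar\quant\cdot\ones$. Feasibility follows from convexity of the cost curve and Jensen's inequality: $n\cdot\bar\quant\,\dist^{-1}(\bar\quant) \leq \sum_j \exquant_j^+\,\dist^{-1}(\exquant_j^+) \leq \budget$. The objective does not drop, since $\ccext(\cdot)$ is concave (Definition~\ref{d:submodular-extensions}) and, as just noted, permutation-invariant, so $\ccext(\bar\quants) \geq \tfrac{1}{n!}\sum_\pi \ccext(\pi\,\exquants^+) = \ccext(\exquants^+)$. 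Thus $\bar\quants$ is feasible with $\ccext(\bar\quants)\geq\ccext(\exquants^+)$ and is therefore optimal.

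The argument is short, and the only place that genuinely uses the hypotheses is feasibility of the symmetrized point: this is exactly where regularity (convexity of the cost curve) is essential --- without it, averaging the $\exquant_i^+$ could push the expected payment above $\budget$. As an alternative to averaging over all of $S_n$ at once, one can average two unequal coordinates $i,j$ at a time, using convexity of the cost curve to preserve feasibility and the $i\leftrightarrow j$ coordinate symmetry of $\ccext$ together with its concavity to preserve the objective, iterating until all coordinates agree; either route yields a symmetric optimal solution.
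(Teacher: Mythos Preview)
Your proof is correct and follows the same idea as the paper's own proof, which is the one-sentence observation that the program is symmetric and convex (concave objective, convex feasible region from regular cost curves), hence admits a symmetric optimum. You have simply spelled out the standard symmetrization argument that underlies that sentence: average an optimal point over all coordinate permutations, use convexity of the cost curve (regularity) for feasibility and concavity plus permutation-invariance of $\ccext$ for the objective.
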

\begin{proof} By the concavity of the concave closure and the convexity of cost curves (since the distribution of costs is regular), the program we wish to optimize is symmetric and convex, so the optimal solution is symmetric. 
\end{proof}

\begin{lemma}
\label{l:sizeSym}
For symmetric monotone submodular value function $\val(\cdot)$ and symmetric distributions of costs, there exists a distribution $\setdist$ over sets of agents with marginals $\exquants^+ = (\exquant^+, \cdots, \exquant^+)$ such that $\expect[S \sim \setdist]{\val(S)} = \ccext(\exquants^+)$ and such that all sets $S$ and $T$ that can be drawn from $\setdist$ have size either $\lfloor k \rfloor$ or $\lceil k \rceil$.
\end{lemma}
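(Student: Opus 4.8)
The plan is a standard symmetrization-plus-Jensen argument, carried out in three steps: reduce from arbitrary distributions over winning sets to \emph{permutation-invariant} ones, then to distributions over set \emph{sizes}, and finally use concavity of the hull $\val(\cdot)$ to identify the optimal size distribution as the two-point distribution supported on $\lfloor k\rfloor$ and $\lceil k\rceil$.

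First I would record that the budget constraint of the concave closure program~\eqref{eq:relaxed-concave-bayes-opt-prog} binds at the optimum: since $\ccext(\cdot)$ is monotone and, along the symmetric diagonal, the cost $n\,\exquant\,\dist^{-1}(\exquant)$ is increasing in $\exquant$, raising $\exquant^+$ is always beneficial until the budget is exhausted, so $n\,\exquant^+\dist^{-1}(\exquant^+) = \budget$; writing $\price = \dist^{-1}(\exquant^+)$ this reads $n\,\exquant^+ = \budget/\price = k$. Hence $\exquant^+ = k/n$, and \emph{every} distribution with marginals $\exquants^+$ has expected winning-set size exactly $k$. (The only alternative is $\exquant^+ = 1$ with slack budget, where $S = N$ deterministically; this degenerate case I would dispatch in one sentence.)

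Next, symmetrization. Take any distribution $\setdist_0$ with marginals $\exquants^+$ that attains $\ccext(\exquants^+)$ and average it over all $n!$ permutations of the agents to obtain $\setdist'$. Its marginals remain uniformly $\exquant^+$ (the average of the $n$ equal marginals), and since $\val(\cdot)$ is symmetric (it depends only on cardinality), $\expect[S \sim \setdist']{\val(S)} = \expect[S \sim \setdist_0]{\val(S)} = \ccext(\exquants^+)$, so $\setdist'$ is also optimal. A permutation-invariant distribution is determined by a distribution $(p_0,\dots,p_n)$ over sizes — conditioned on $|S| = \ell$ it is uniform on the $\binom{n}{\ell}$ sets of that size — under which the per-agent marginal equals $\sum_\ell p_\ell\,\ell/n$ and the objective equals $\sum_\ell p_\ell\,\val(\ell)$, where $\val(\ell)$ denotes the symmetric set-function value on size $\ell$ (which, by submodularity, is concave in $\ell$ and agrees with the hull $\val(\cdot)$ at integer points).

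Finally, the one-dimensional optimization: maximize $\sum_\ell p_\ell\,\val(\ell)$ subject to $\sum_\ell p_\ell\,\ell = k$, which is forced by the marginal constraint together with the binding budget. Because $\val(\cdot):\reals_+\to\reals_+$ is the concave hull of $\{(i,\val(i))\}$, Jensen gives $\sum_\ell p_\ell\,\val(\ell) \le \val(k)$, with equality for the two-point distribution $p_{\lfloor k\rfloor} = \lceil k\rceil - k$, $p_{\lceil k\rceil} = k - \lfloor k\rfloor$ (a point mass if $k$ is an integer), since $\val$ is affine on the segment between the consecutive integer hull points $\lfloor k\rfloor$ and $\lceil k\rceil$. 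The distribution $\setdist$ that first draws a size $\ell \in \{\lfloor k\rfloor,\lceil k\rceil\}$ from this $p$ and then a uniformly random set of size $\ell$ therefore has marginals $\exquant^+$, value $\val(k) = \ccext(\exquants^+)$, and support confined to sizes $\lfloor k\rfloor$ and $\lceil k\rceil$, which is exactly the claim. I do not anticipate a genuine obstacle; the only mildly delicate point is the bookkeeping that ties ``marginals $\exquants^+$'' to ``mean size $k$'', which is precisely where the binding budget enters.
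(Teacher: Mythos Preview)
Your proposal is correct and follows essentially the same approach as the paper: both first argue that the budget binds so that $n\exquant^+ = k$, then upper bound any distribution with these marginals by the concave-hull value $\val(k)$, and finally exhibit the two-point mixture over sizes $\lfloor k\rfloor$ and $\lceil k\rceil$ as achieving this bound. Your explicit permutation-averaging step makes rigorous what the paper leaves implicit when it simply asserts the existence of ``a distribution $\setdist$ that is a mixture of sets of size $\lfloor k\rfloor$ and $\lceil k\rceil$ such that the marginals are $\exquant^+$,'' but the underlying argument is the same.
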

\begin{proof}
First, note that $\budget = \price \cdot n \cdot \exquant^+$ since $\exquants^+$ is the optimal solution to the concave closure program and since $\val(\cdot)$ is monotone, which implies that $k = n \cdot \exquant^+$ since $k = \budget / \price$.

  The expected value of a set of expected size $n \cdot \exquant^+$ drawn from a distribution is at most $\val(n \cdot \exquant^+)$ by the definition of concave hull. By taking a distribution $\setdist$ that is a mixture of sets of size $\lfloor n \cdot \exquant^+ \rfloor = \lfloor k \rfloor$ and $\lceil n \cdot \exquant^+ \rceil = \lceil k \rceil$ such that the marginals are $\exquant^+$, the expected value of a set drawn from $\setdist$ is $\val(n \cdot \exquant^+)$ since $\val(S)$ is submodular. Combining the two previous observations, $\expect[S \sim \setdist]{\val(S)} = \ccext(\exquants^+)$  since the concave closure is the maximum expected value over distributions with some marginals $\exquants$. 
\end{proof}

 Given quantiles $\exquants = (\exquant, \cdots, \exquant) $, the value of the concave closure $\ccext(\exquants)$ can be computed easily by Lemma~\ref{l:sizeSym} and symmetricity. The concave closure program can therefore be approximated arbitrarily well and efficiently when there is symmetry, by using binary search to get arbitrarily close to the optimal quantile $\exquant$.  Our approach for obtaining the desired approximation is to construct an additive function that lower bounds the symmetric submodular function on feasible sets and that upper bounds it otherwise.

\begin{theorem}
\label{t:pp-symmetric}
In the case of symmetric monotone submodular value functions and symmetric regular cost distributions, the oblivious posted pricing $\prices = (\price, \cdots, \price)$ with $\critcost = \dist^{-1}(\exquant^+)$  is an $(1 - 1 / \sqrt{2 \pi  k}) (1 - 1 / k)$ approximation to the optimal ex ante mechanism, where $\exquants^+ = (\exquant^+, \cdots, \exquant^+)$ is the optimal solution to the concave closure program~\eqref{eq:relaxed-concave-bayes-opt-prog} and $k$ is the size of the market.
\end{theorem}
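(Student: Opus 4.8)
The plan is to reduce this symmetric submodular statement to the additive correlation-gap analysis of Section~\ref{s:additive} by way of a uniform additive surrogate. First I would invoke Lemma~\ref{l:prices-symmetric} so that $\exquants^+ = (\exquant^+,\dots,\exquant^+)$ and a single price $\price = \dist^{-1}(\exquant^+)$ is posted, with $k = \budget/\price$. As in the proof of Lemma~\ref{l:sizeSym}, monotonicity forces the whole budget to be spent, so $\budget = \price\, n\,\exquant^+$ and hence $n\,\exquant^+ = k$; that lemma also gives $\ccext(\exquants^+) = \val(k)$, the value of the (one-dimensional) concave hull at the real number $k$, attained by a mixture of sets of size $\floor{k}$ and $\lceil k\rceil$. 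By Lemma~\ref{l:ub} this $\val(k)$ upper bounds the optimal ex ante mechanism, so it will be enough to show the posted pricing earns at least $(1-1/\sqrt{2\pi k})(1-1/k)\,\val(k)$.

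The key construction is a uniform additive surrogate: give every agent value $w_i := \val(k)/k$ and (unchanged) cost $\price = \budget/k$. Because $\val(\cdot)$ is concave with $\val(0)=0$, the additive function $w(S) = |S|\,\val(k)/k$ satisfies $w(S) \le \val(S)$ whenever $|S| \le k$ and $w(S) \ge \val(S)$ whenever $|S| \ge k$; that is, $w$ lower bounds the true value on sets the budget can afford and upper bounds it otherwise. With these costs the market is exactly $k$-large, the marginals $\exquants^+$ meet the ex ante budget with equality ($\sum_i \price\,\exquant^+ = \price k = \budget$), and the correlated value of the additive $w$ at $\exquants^+$ equals $\sum_i w_i\,\exquant^+ = \val(k)$ — the very same benchmark.

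Next I would compute what the oblivious pricing earns. Since the value function is symmetric and monotone and every accepted agent is paid the same price $\price$, under any arrival order the mechanism ends up serving $\min(|R|,\floor{k})$ agents, where $R\sim\exquants^+$ is the random set of agents whose realized cost lies below $\price$; its value is $\expect[R]{\val(\min(|R|,\floor{k}))}$. As $\min(|R|,\floor{k}) \le \floor{k}\le k$, only the favorable (lower-bound) side of the surrogate is ever needed: pointwise $\val(\min(|R|,\floor{k})) \ge \tfrac{\val(k)}{k}\min(|R|,\floor{k})$, which is exactly the integral-knapsack value of $w$ (capacity $\budget$, uniform item size $\price$, hence at most $\floor{k}$ items) evaluated on $R$. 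So the mechanism's value is at least the independent value of that integral-knapsack surrogate at marginals $\exquants^+$.

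Finally I would route through Lemma~\ref{l:fractional-knapsack-additive-relaxation-cg}: applied to the additive $w$ with costs $\price$ (ex ante budget satisfied, $k$-large), the independent value of the \emph{fractional}-knapsack surrogate is at least $(1-1/\sqrt{2\pi k})$ times the correlated value of $w$, namely $(1-1/\sqrt{2\pi k})\,\val(k)$; and, exactly as in the proof of Theorem~\ref{t:pp-additive}, discarding the lone fractional item loses at most a factor $1-1/k$ (pointwise, $\min(|R|,\floor{k}) \ge (1-1/k)\min(|R|,k)$). Chaining these with the previous paragraph gives mechanism value $\ge (1-1/\sqrt{2\pi k})(1-1/k)\,\val(k)$, which is at least $(1-1/\sqrt{2\pi k})(1-1/k)$ times the optimal ex ante mechanism. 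I expect the only delicate part to be the non-integer-$k$ bookkeeping — checking that $\ccext(\exquants^+)$ is \emph{exactly} $\val(k)$, that the surrogate's correlated value matches it, and that the knapsack truncation at $\floor{k}$ items is precisely what allows using only the lower-bound half of the sandwich while the benchmark is pinned down by the additive surrogate's exact (linear) correlated value; given Lemmas~\ref{l:prices-symmetric}, \ref{l:sizeSym}, \ref{l:ub} and \ref{l:fractional-knapsack-additive-relaxation-cg}, the rest is routine.
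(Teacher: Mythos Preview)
Your proposal is correct and follows essentially the same route as the paper: build a uniform additive surrogate that sandwiches the symmetric submodular value at the threshold $k$, lower-bound the mechanism's value by the integral-knapsack independent value of the surrogate, and then invoke Lemma~\ref{l:fractional-knapsack-additive-relaxation-cg} plus the $(1-1/k)$ fractional-to-integral loss. The only cosmetic difference is your surrogate slope $\val(k)/k$ versus the paper's $\val(\lfloor k\rfloor)/\lfloor k\rfloor$; both work by concavity, and your choice makes the correlated value of the surrogate equal $\ccext(\exquants^+)=\val(k)$ exactly rather than requiring the paper's separate observation that $\expect[S\sim\setdist]{\val^{add}(S)}\ge \expect[S\sim\setdist]{\val(S)}$ on the $\{\lfloor k\rfloor,\lceil k\rceil\}$-support distribution.
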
 

\begin{proof}
By Lemma~\ref{l:sizeSym}, there exists a distribution $\setdist$ over sets of agents with marginals $\exquant^+$ such that $\expect[S \sim \setdist]{\val(S)} = \ccext(\exquants^+)$ and such that sets drawn from $\setdist$ have size $\lfloor k \rfloor$ or $\lceil k \rceil$. We consider the additive value function $\val^{add}(\cdot)$ defined as follow: 
$$\val^{add}(S) = |S| \frac{\val(\lfloor k \rfloor)}{\lfloor k \rfloor}$$ and overload the notation for  $\val^{add}(\cdot)$ similarly as for $\val(\cdot)$. We make the following observations about $\val^{add}(\cdot)$:

\begin{itemize}
\item $\val^{add}(i) \leq \val(i)$ for $i \leq \lfloor k \rfloor$ and $\val^{add}(i) \geq \val(i)$ otherwise, by submodularity.
\item $\expect[S \sim \setdist]{\val^{add}(S)} \geq  \expect[S \sim \setdist]{\val(S)}$, since $\val^{add}(\lceil k \rceil) \geq  \val(\lceil k \rceil)$ and $\val^{add}(\lfloor k \rfloor) = \val(\lfloor k \rfloor)$.
\item $\val(\cdot)$ is an additive set function with values $\vali = \frac{1}{\lfloor k \rfloor}\val(\lfloor k \rfloor)$ for each element.
\end{itemize}

Since the feasible sets are sets of size at most $\lfloor k \rfloor$ and by the first observation on $\val^{add}(\cdot)$, the performance of the posted pricing mechanism is at least the independent integral knapsack value of $\val^{add}(\cdot)$. The independent integral knapsack value of $\val^{add}(\cdot)$ is at most a factor $(1 - 1/k)$ away from its independent fractional knapsack value,  $\expect[S \sim \exquants^+]{\val^{add}_{\budget}(S)} $. By Lemma~\ref{l:fractional-knapsack-additive-relaxation-cg} and the third observation on $\val^{add}(\cdot)$, $\expect[S \sim \exquants^+]{\val^{add}_{\budget}(S)} \geq (1 - \frac{1}{\sqrt{2 \pi k}}) \expect[S \sim \setdist]{\val^{add}(S)}$. By the second observation on $\val^{add}(\cdot)$,  $ \expect[S \sim \setdist]{\val^{add}(S)} \geq \ccext(\exquants^+)  $.  Since $\ccext(\exquants^+)$ is an upper bound on the optimal mechanism by Lemma \ref{l:ub}, we get the desired result.
\end{proof}

Note that in previous settings, we used the solution to the multilinear extension program to define the posted pricing mechanisms. In this setting, we used the solution to the concave closure program in order to take advantage of the concavity of the objective function for computational purposes. Finally, note that in the symmetric case, sequential posted pricing offers no advantage compared to oblivious posted pricing.

\section{Irregular Distributions}
\label{s:irregular}
In this section, we consider irregular distributions. Recall that a distribution $\dist$ is regular if the virtual cost function is increasing, or equivalently, if the cost curve $\quant \cdot \dist^{-1}(\quant)$ is convex. The ironing method introduced by \citet{mye-81} gives monotone ironed virtual costs and convex cost curves. With these convex cost curves, we construct \emph{randomized} posted pricing mechanisms that enjoy the same approximation ratios as the deterministic mechanisms, albeit with a generalized definition of the market size $k$ for randomized posted pricings. Additionally, in the case of additive objective functions, the sequential posted pricing is derandomized.

Denote the cost curve of agent $i$ by  $\costcurvei(\quanti) = \quanti \disti^{-1}(\quanti)$. \citet{BR-89} observed that the derivative of the cost curve with respect to quantile is equal to the virtual cost function, $\costcurvei'(\quanti) = \virti(\costi)$.  The ironing method constructs the convex hull $\costcurvehi(\quanti)$ of the cost curve $\costcurvei(\cdot)$. For $\quanti = \disti(\costi)$, the ironed virtual costs are $\ivirti(\costi) = \costcurvehi'(\quanti)$. By taking the convex hull of the cost curves, we have convex cost curves and monotone ironed virtual costs as desired. The next two lemmas show that expected payments $\costcurvehi(\exquanti)$ are feasible while serving each agent with probability $\exquanti$, and that no incentive compatible mechanism can do better.

\begin{lemma}[\citealp{mye-81}, \citealp{BR-89}]
\label{l:irr-neg} 
For any agent with cost drawn from distribution $\disti$ and any incentive compatible mechanism that selects agent $i$ with probability $\exquanti$, the expected payment to agent $i$ is at least $\costcurvehi(\exquanti)$.
 \end{lemma}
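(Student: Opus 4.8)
The plan is to pass to quantile space, represent an arbitrary incentive compatible mechanism for agent $i$ as a randomization over posted prices, and conclude by Jensen's inequality applied to the convex hull $\costcurvehi$. Fix agent $i$ and write her (possibly randomized) allocation rule as a function $\alloci:[0,1]\to[0,1]$ of her quantile $\quanti=\disti(\costi)$, where $\alloci(\quanti)$ is the probability that she is served. By the \citet{mye-81} characterization (Lemma~\ref{l:IC=monotone}), incentive compatibility forces $\alloci$ to be nonincreasing in $\quanti$ (a higher cost means a higher quantile and hence weakly less service), and by assumption $\int_0^1\alloci(\quanti)\,\dd\quanti=\exquanti$. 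By Lemma~\ref{l:payment=virtual-cost} the expected payment to agent $i$ equals her expected virtual surplus, which is a \emph{linear} functional of the allocation rule $\alloci(\cdot)$; so it suffices to lower bound this functional over all nonincreasing $\alloci$ with mean $\exquanti$.

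The key step is to decompose such an $\alloci$ as a mixture of deterministic threshold (posted-price) rules: there is a probability measure $\rho$ on $[0,1]$ with $\alloci(\quanti)=\rho\big((\quanti,1]\big)$, equivalently $\alloci(\quanti)=\expect[\theta\sim\rho]{\mathbb{1}[\quanti\le\theta]}$. The threshold rule $\quanti\mapsto\mathbb{1}[\quanti\le\theta]$ is exactly the mechanism that posts price $\disti^{-1}(\theta)$: it serves agent $i$ with probability $\theta$, so by the definition of the cost curve (equivalently, by the \citet{BR-89} identity $\costcurvei'(\quanti)=\virti(\costi)$ together with $\costcurvei(0)=0$, since $\disti^{-1}(0)$ is the least cost in the support) its expected payment is $\theta\cdot\disti^{-1}(\theta)=\costcurvei(\theta)$. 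Interchanging the order of integration and using linearity of the expected payment in the allocation rule then gives $\exquanti=\expect[\theta\sim\rho]{\theta}$ and an expected payment of $\expect[\theta\sim\rho]{\costcurvei(\theta)}$.

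It remains only to apply Jensen's inequality. Since $\costcurvehi$ is by definition the largest convex function lying pointwise below $\costcurvei$, $\expect[\theta\sim\rho]{\costcurvei(\theta)}\ge\expect[\theta\sim\rho]{\costcurvehi(\theta)}\ge\costcurvehi\big(\expect[\theta\sim\rho]{\theta}\big)=\costcurvehi(\exquanti)$, which is the claimed bound. (The bound is tight: equality holds for the point mass at $\exquanti$ when $\costcurvehi(\exquanti)=\costcurvei(\exquanti)$, and otherwise for the two-point distribution on the endpoints of the maximal ironed interval containing $\exquanti$, which is the randomized posted pricing underlying the companion feasibility direction.) I expect the only delicate point to be making the threshold decomposition fully rigorous for an arbitrary monotone, possibly randomized rule --- handling atoms of $\rho$, the behavior at $\quanti\in\{0,1\}$, and invoking $\costcurvei'=\virti$ only almost everywhere --- but all of this is routine, so the write-up should be short.
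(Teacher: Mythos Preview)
The paper does not supply its own proof of this lemma; it is stated as a known result with citations to \citet{mye-81} and \citet{BR-89}. Your argument is correct and is essentially the standard ironing argument in the Bulow--Roberts quantile-space framework: decompose the monotone interim allocation rule as a mixture of threshold (posted-price) rules, use that the expected payment of the threshold at quantile $\theta$ is exactly $\costcurvei(\theta)$, and then apply Jensen's inequality via the convex minorant $\costcurvehi$. The measure-theoretic caveats you flag (atoms of $\rho$, endpoint behavior, almost-everywhere differentiability of $\costcurvei$) are indeed routine and do not affect the conclusion.
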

 
We give the proof of the following known lemma since it exhibits how to pick the prices and the probabilities of the randomized mechanisms.
 
 \begin{lemma} [\citealp{mye-81}]
 \label{l:irr-pos} 
 Expected payment  $\costcurvehi(\exquanti)$ while serving agent $i$ with probability $\exquanti$  is achievable using a randomized posted pricing with at most two prices.
 \end{lemma}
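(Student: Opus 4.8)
The plan is to read the randomized mechanism directly off the ironed cost curve. First I would record the elementary accounting fact: offering agent $i$ the deterministic price $\disti^{-1}(\quant)$ serves her exactly when $\costi \le \disti^{-1}(\quant)$, which occurs with probability $\quant$ since $\costi \sim \disti$, and yields expected payment $\quant \cdot \disti^{-1}(\quant) = \costcurvei(\quant)$. Thus a single deterministic price realizes precisely the points $(\quant,\costcurvei(\quant))$ on the un-ironed cost curve, and the whole task is to realize the points on its convex hull $\costcurvehi$.

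Next I would split on whether $\exquanti$ is ``ironed'' or not. If $\costcurvehi(\exquanti) = \costcurvei(\exquanti)$ (in particular whenever $\disti$ is regular, or whenever $\exquanti$ avoids every ironing interval), the single price $\disti^{-1}(\exquanti)$ already serves $i$ with probability $\exquanti$ at expected payment $\costcurvehi(\exquanti)$, and one price suffices. Otherwise $\exquanti$ lies in the interior of a maximal interval $[\quant^-,\quant^+]$ on which $\costcurvehi$ is affine; by the construction of the convex hull together with the standard ironing facts of \citet{mye-81} and \citet{BR-89} already invoked above, $\costcurvehi$ agrees with $\costcurvei$ at the endpoints of this interval, i.e.\ $\costcurvehi(\quant^-)=\costcurvei(\quant^-)$ and $\costcurvehi(\quant^+)=\costcurvei(\quant^+)$.

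I would then write $\exquanti = \convexcomb\,\quant^- + (1-\convexcomb)\,\quant^+$ for the unique $\convexcomb\in(0,1)$ and define the randomized posted pricing: with probability $\convexcomb$ offer price $\disti^{-1}(\quant^-)$, and with probability $1-\convexcomb$ offer price $\disti^{-1}(\quant^+)$. The probability that $i$ is served is $\convexcomb\,\quant^- + (1-\convexcomb)\,\quant^+ = \exquanti$ as required, and the expected payment is $\convexcomb\,\costcurvei(\quant^-) + (1-\convexcomb)\,\costcurvei(\quant^+) = \convexcomb\,\costcurvehi(\quant^-) + (1-\convexcomb)\,\costcurvehi(\quant^+) = \costcurvehi(\exquanti)$, where the last equality uses the endpoint agreement and the affinity of $\costcurvehi$ on $[\quant^-,\quant^+]$. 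This mechanism posts at most two distinct prices, which is exactly the claim.

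The only point needing care — and the one I would state cleanly rather than re-derive — is the endpoint agreement $\costcurvehi(\quant^{\pm}) = \costcurvei(\quant^{\pm})$, i.e.\ that the convex hull touches the original cost curve at the boundary of every maximal linear piece (including the degenerate pieces abutting the quantiles $0$ and $1$). This is precisely the structural content of Myerson's ironing / the Bulow--Roberts cost-curve picture, so I would cite it; everything else in the argument is the one-line convex-combination computation above, so I do not expect any real obstacle.
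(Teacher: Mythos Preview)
Your proposal is correct and takes essentially the same approach as the paper: both split on whether $\exquanti$ already lies on the un-ironed cost curve, identify the endpoints $a,b$ (your $\quant^-,\quant^+$) of the ironed interval where $\costcurvehi$ touches $\costcurvei$, and then realize $(\exquanti,\costcurvehi(\exquanti))$ as the appropriate convex combination of the two deterministic postings $\disti^{-1}(a)$ and $\disti^{-1}(b)$. The only cosmetic difference is that the paper writes out the mixing probabilities explicitly as $1-\tfrac{\exquanti-a}{b-a}$ and $1-\tfrac{b-\exquanti}{b-a}$, whereas you parameterize by a single $\convexcomb$.
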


\begin{proof}
Fix a seller $i$ and an ex ante sale probability $\exquanti$. If $\exquanti = \costcurvei(\exquanti)$, then it suffices to post price $\disti^{-1}(\exquanti)$. Otherwise, let $a$ be the largest quantile smaller than $\exquanti$ such that $\costcurvehi(a) = \costcurvei(a)$. Similarly, let $b$ the smallest quantile larger than $\exquanti$ such that $\costcurvehi(b) = \costcurvei(b)$. The interval $[a,b]$ corresponds to the ironed interval in which $\exquanti$ falls in. By the definition of convex hull, we get
$$\costcurvehi(\exquanti) = (1 - \frac{\exquanti - a}{b- a}) \costcurvehi(a) + (1 - \frac{b- \exquanti}{b- a}) \costcurvehi(b) = (1 - \frac{\exquanti - a}{b- a}) \costcurvei(a) + (1 - \frac{b- \exquanti}{b- a}) \costcurvei(b).$$

Therefore, posting price $\disti^{-1}(a)$ with probability $1 - \frac{\exquanti - a}{b- a}$ and $\disti^{-1}(b)$ with probability $1 - \frac{b- \exquanti}{b- a}$ has expected payment $\costcurvehi(\exquanti)$ and the ex ante probability that seller $i$ accepts the price is $\exquanti$. 
\end{proof}

By Lemma~\ref{l:irr-neg} and Lemma~\ref{l:irr-pos}, the ex ante results also hold for the irregular case using randomized posted pricing. The following definition generalizes the notion of posted prices to allow for randomization.

\begin{definition} For a randomized posted pricing $\exquants$,
\begin{itemize}
\item Prices $\price_{i1}$ and $\price_{i2}$ with probabilities of picking each price are induced by $\exquanti$.
\item Randomly pick $\price_{i1}$ or $\price_{i2}$.
\item In the case of sequential posted pricing, set the ordering to be in decreasing order of bang-per-buck.
\end{itemize}
\end{definition}
\begin{definition} With randomized posted pricing, a market is $k$-large if $\budget / \price_{ij} \geq k$ for all agents $i$ and $j \in \{1,2\}$.
\end{definition}

\subsection{From Ex Ante to Ex Post with Additive Value Functions}

For the additive case, we first show that the ex post randomized posted pricing performs well and then derandomize the mechanism.

\begin{theorem}
\label{t:perf-irr}
The randomized sequential posted pricing mechanism $(\exquants, \orders(\cdot))$ that serve agents with probability $\exquants$, where $\exquants$ is the solution to the multilinear extension
program~\eqref{eq:multilinear-prog} and where the order $\orders(\cdot)$ is decreasing in
$\frac{\vali}{\pricei}$,  is a $(1-1/\sqrt{2 \pi
  k})(1 - 1/k)$ approximation to the optimal mechanism in a $k$-large market.
\end{theorem}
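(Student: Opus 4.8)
The plan is to replay the proof of Theorem~\ref{t:pp-additive}, making exactly two changes: replace the deterministic posting $\disti^{-1}(\exquanti)$ by the randomized two-price posting from Lemma~\ref{l:irr-pos}, and, in the knapsack bookkeeping, split each agent into its two possible price-copies so that the only remaining randomness is which copy is ``live.'' For the ex ante half, work with the ironed cost curves $\costcurvehi(\cdot)$: for additive $\val(\cdot)$ the concave closure and multilinear extension programs coincide, and by Lemma~\ref{l:irr-neg} the optimal value $\sum_i\vali\exquanti$ of program~\eqref{eq:multilinear-prog} (taken with the $\costcurvehi$) still upper bounds the optimal ex ante mechanism — the irregular analogue of Lemma~\ref{l:ub}, which is precisely what Lemmas~\ref{l:irr-neg} and~\ref{l:irr-pos} deliver. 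By Lemma~\ref{l:irr-pos}, serving agent $i$ with probability $\exquanti$ via the randomized two-price posting costs exactly $\costcurvehi(\exquanti)$ in expectation, so the randomized posted pricing is ex ante feasible with ex ante value $\sum_i\vali\exquanti$.

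For the ex post loss, let $\price_{i1},\price_{i2}$ be agent $i$'s two posted prices, offered with the probabilities from the proof of Lemma~\ref{l:irr-pos}; the $k$-large condition for randomized pricings gives $\price_{ij}\le\budget/k$. Form a ground set $N'=\{i_1,i_2:i\in N\}$ in which copy $i_j$ has value $\vali$ and size $\price_{ij}$, and let $\exquant_{i_j}$ be the probability that $i$ is offered $\price_{ij}$ and accepts. A short computation with the convex-hull interpolation gives $\exquant_{i_1}+\exquant_{i_2}=\exquanti$ and $\sum_{i,j}\price_{ij}\exquant_{i_j}=\sum_i\costcurvehi(\exquanti)\le\budget$, so on $N'$ the marginals $\exquants'=(\exquant_{i_j})$ meet the ex ante budget constraint and the $k$-large condition. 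Exactly as in Theorem~\ref{t:pp-additive}, processing agents in decreasing bang-per-buck $\vali/(\text{realized price})$, the mechanism serves every copy that is admitted non-fractionally in the fractional-knapsack solution on the realized live set; since each size is at most $\budget/k$, the single fractional copy carries at most a $1/k$ fraction of that solution's value, so dropping it costs at most a factor $1-1/k$.

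It remains to show $\expect{\valB(\text{live set on }N')}\ge(1-1/\sqrt{2\pi k})\sum_i\vali\exquanti$, where $\valB$ is the fractional-knapsack set function on $N'$. The live set contains at most one copy of each agent and is independent across agents, whereas Lemma~\ref{l:fractional-knapsack-additive-relaxation-cg} applies to the fully independent distribution $\exquants'$ on $N'$, whose fractional-knapsack value is $\expect[S\sim\exquants']{\valB(S)}\ge(1-1/\sqrt{2\pi k})\sum_{i,j}\vali\exquant_{i_j}=(1-1/\sqrt{2\pi k})\sum_i\vali\exquanti$. I would close the gap one agent at a time: replacing, for a single agent $i$, the independent pair of copies by the mutually-exclusive pair with the same marginals changes the conditional expectation (given the copies $Q$ drawn from the other agents) by $\exquant_{i_1}\exquant_{i_2}\big(\valB(Q\cup\{i_1\})+\valB(Q\cup\{i_2\})-\valB(Q\cup\{i_1,i_2\})-\valB(Q)\big)$, which is nonnegative by submodularity of $\valB$. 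Hence this negative correlation only helps, $\expect{\valB(\text{live set})}\ge\expect[S\sim\exquants']{\valB(S)}$, and multiplying the three factors yields the claimed $(1-1/\sqrt{2\pi k})(1-1/k)$ approximation.

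The main obstacle is this last step: because the realized prices are random, the relevant object is not a fixed product distribution (so Lemma~\ref{l:fractional-knapsack-additive-relaxation-cg} cannot be invoked directly) but a product-over-agents of within-agent mutually-exclusive choices, and one must argue via the swap above that this negative correlation does not hurt a monotone submodular set function. A second, easily-missed point is that $k$-largeness must be checked on the expanded ground set $N'$ rather than on the original agents: the realized sizes $\price_{ij}$ are individually bounded by $\budget/k$, but an agent's expected realized price conditioned on being served need not be, so one cannot first ``average out'' the randomness and then apply the deterministic argument.
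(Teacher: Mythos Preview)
Your proposal is correct and follows essentially the same route as the paper: both split each randomized agent into its two price-copies on an expanded ground set $N'$, verify that the copy-marginals satisfy the ex ante budget and $k$-large conditions, argue that the mutually-exclusive (randomized) realization does at least as well as the fully independent one, and then invoke Lemma~\ref{l:fractional-knapsack-additive-relaxation-cg} on $N'$ together with the $1-1/k$ loss from dropping the fractional copy. Your second-difference submodularity computation, $\exquant_{i_1}\exquant_{i_2}\big(\valB(Q\cup\{i_1\})+\valB(Q\cup\{i_2\})-\valB(Q\cup\{i_1,i_2\})-\valB(Q)\big)\ge 0$, is a cleaner and more explicit formalization of what the paper argues informally as ``the loss from agents that do not fit in the ex post budget constraint is greater in the deterministic case.''
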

\begin{proof}
We show that the randomized sequential posted pricing performs better than a deterministic sequential posted pricing with the same ex ante performance and a market that is $k$-large. Consider a randomized agent $i$ who is offered $\pricei = \price_{i1}$ with probability $\rho$ and $\pricei =\price_{i2}$ otherwise. Remove agent $i$ and replace it with two deterministic agents $i1$ and $i2$ with value $\vali$, who are offered $\price_{i1}$ and $\price_{i2}$ and who accept their price with probability $\rho \disti(\price_{i1})$ and $(1-\rho) \disti(\price_{i2})$ respectively. Call this new posted pricing the deterministic instance and the original posted pricing the randomized instance. 

Both instances have the same ex ante performance since the expected total cost remains the same and since agent $i$ accepts his offer with probability equal to the sum of the probabilities that agents $i_1$ and $i_2$ accept their offer. Fix a set $S$ of agents who accept their offer that does not include $i$ and fix these offers. Notice that in both the randomized and deterministic instance, there is an expected increase in the total cost of $\price_{i1} \rho \disti(\price_{i1}) + \price_{i2} (1- \rho) \disti(\price_{i2})$ caused by agent $i$ to $S$. However, in the randomized instance, this increase in cost is either $\price_{i1}$ or $\price_{i2}$ and in the deterministic instance, this increase in cost can also be $\price_{i1} + \price_{i2}$. Since agents are ordered by decreasing bang-per-buck, the loss from agents that do not fit in the ex post budget constraint
is greater in the deterministic case. Therefore, the loss of the fractional knapsack value with respect to the ex ante performance of the mechanism is greater in the deterministic instance.

Now note that this argument can be repeated inductively until all the agents left are deterministic. So the approximation ratio obtained by the randomized mechanism is $(1-1/\sqrt{2 \pi
  k})(1 - 1/k)$, by combining Lemma~\ref{l:fractional-knapsack-additive-relaxation-cg} and the $1-1/k$ loss from dropping the fractional agent.
\end{proof}

We now show that the mechanism can be derandomized.

\begin{theorem}
Any sequential randomized posted pricing $(\exquants, \orders(\cdot))$ can be modified into a sequential deterministic posted pricing in the case of additive value functions.
\end{theorem}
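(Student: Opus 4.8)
The plan is to observe that a randomized sequential posted pricing is simply a distribution over \emph{deterministic} sequential posted pricings, and then to keep the best member of its support. By Lemma~\ref{l:irr-pos}, for each agent $i$ the randomized posted pricing offers one of at most two prices $\price_{i1},\price_{i2}$ with probabilities determined by $\exquanti$; once these per-agent coin flips are realized the price vector is fixed, the mechanism sorts agents by decreasing bang-per-buck $\vali/\pricei$, and runs the deterministic posted pricing of Definition~\ref{d:posted-pricing}. So all of the mechanism's randomness is a single coin vector $c\in\{1,2\}^n$ (with $c_i$ vacuous for any agent $i$ whose $\exquanti$ does not lie in an ironed interval), and for each fixed $c$ we get an honest deterministic sequential posted pricing $\mech(c)$, with prices $(\price_{i,c_i})_i$ and the bang-per-buck order induced by those prices.

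First I would apply the law of total expectation over $c$: the expected value of the randomized mechanism equals $\sum_c \prob{c}\cdot(\text{expected value of }\mech(c))$, a convex combination of the values of the finitely many deterministic mechanisms $\mech(c)$. Hence some realization $c^\star$ satisfies that $\mech(c^\star)$ obtains at least the value of the randomized mechanism, and $\mech(c^\star)$ is the desired sequential deterministic posted pricing: it is trivially incentive compatible, ex post budget feasible by construction, and, doing at least as well as the mechanism it came from, it inherits that mechanism's performance --- in particular the $(1-1/\sqrt{2\pi k})(1-1/k)$ approximation of Theorem~\ref{t:perf-irr}. If an efficient construction is wanted rather than a pure existence statement, I would instead derandomize one agent at a time by the method of conditional expectations: iterating over the agents, fix the current agent's offered price to whichever of $\price_{i1},\price_{i2}$ maximizes the conditional expected value of the still-partly-random mechanism; this is at least the coin-weighted average, so the objective never drops, and after $n$ steps every agent is deterministic.

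The step that needs care is that the arrival order $\orders(\cdot)$ depends on the \emph{realized} prices: two agents trade their relative bang-per-buck ranks depending on which of their prices comes up. Every conditioning step must therefore be read with re-sorting included --- fixing agent $i$'s price re-sorts the whole queue before the remaining coins are flipped --- which is exactly why it is cleanest to argue at the level of the coin vector $c$ and the induced mechanism $\mech(c)$, where the sort is already built in, rather than about a fixed sequence. The only other thing to verify is that the conditional expectations in the efficient variant can be computed, or estimated within the $o(1)$ slack already tolerated by the computational results of Section~\ref{s:optimal}; each is an expectation of the value of an explicit posted pricing over independent cost distributions, so this is routine.
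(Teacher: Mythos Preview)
Your argument is correct and takes a genuinely different route from the paper. The paper's proof exploits the structure of ironing: it observes that whenever two agents $i,j$ are randomized, both $\exquanti$ and $\exquant_j$ lie strictly inside ironed intervals where the ironed virtual costs $\ivirti,\ivirt_j$ are locally constant; it then shifts ex ante probability from the agent with larger $\ivirt/\val$ to the other, keeping the ex ante budget tight and not decreasing the ex ante objective, until one of the two hits the boundary of its ironed interval and becomes deterministic. Iterating leaves at most one randomized agent, for whom one simply keeps the better of the two prices. Your proof, by contrast, is the pure averaging/derandomization argument over the coin vector $c\in\{1,2\}^n$, using nothing about ironing at all.

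What each buys: your approach is more elementary and more general --- it applies verbatim to any randomized posted pricing, not just the two-price mixtures that arise from ironing --- and the existence statement is immediate. The paper's approach yields a much more structured deterministic mechanism: after step one the ex ante solution $\exquants$ itself has been replaced by one with at most a single randomized coordinate, so the final derandomization is a single binary comparison rather than $n$ rounds of conditional expectations. This is exactly why the paper's computational corollary only needs to estimate two fractional-knapsack values, whereas your efficient variant would need to estimate $2n$ of them (two per agent, sequentially). Both are polynomial, but the paper's construction is tighter and makes the connection to the Lagrangian/ironing picture explicit. Your remark that the order $\orders$ must be re-sorted after each coin is fixed is well taken and is the one subtlety that your framing via $\mech(c)$ handles cleanly.
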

\begin{proof}
The proof proceeds in two steps. The first reduces the number of randomized agents until there is one left by using properties of ironed intervals. The second step is to simply pick the best of the two prices that are offered to the last randomized agent.

Consider a randomized posted pricing $(\exquants, \orders(\cdot))$ with at least two agents $i$ and $j$ that are randomized.  The marginal cost per unit value of these two agents are $ \costcurvehi'(\exquanti) / \vali = \ivirti(\costi) / \vali$ and $\ivirt_j(\cost_j) / \val_j$. Without loss of generality, assume $\ivirti(\costi) / \vali \leq \ivirt_j(\cost_j) / \val_j$. Since both of these agents are randomized, $\exquanti$ and $\exquant_j$ are within ironed intervals and their ironed virtual costs are constants within these intervals. With no loss in the objective function, we can therefore increase $\exquanti$ and decrease $\exquant_j$ such that the budget still binds and such that either $\exquanti$ or $\exquant_j$ is at the extremity of the ironed interval it is in, and therefore not randomized anymore. This construction can be repeated until one randomized agent is left.

Consider a randomized posted pricing with a unique randomized agent $i$ who is offered $\pricei = \price_{i1}$ with probability $\rho$ and $\pricei =\price_{i2}$ otherwise. The proof of Theorem~\ref{t:perf-irr} shows that the ratio between the performance of the optimal mechanism and the expected fractional knapsack value is at least $1-1/\sqrt{2 \pi
  k}$. Agent $i$ is either offered $\price_{i1}$ or $\price_{i2}$, so by expectations, with at least one of these two offers, the previous ratio is at least $1-1/\sqrt{2 \pi
  k}$. Dropping the fractional agent and keeping the best price to offer to agent $i$, we therefore get a $(1-1/\sqrt{2 \pi
  k})(1 - 1/k)$ approximation for a deterministic mechanism.  \end{proof}

\begin{corollary}
Any sequential randomized posted pricing $(\exquants, \orders(\cdot))$ can be modified with high probability into a sequential deterministic posted pricing in the case of additive value functions with an additional $o(1)$ loss in polynomial time.
\end{corollary}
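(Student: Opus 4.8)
The plan is to make the derandomization of the preceding theorem \emph{constructive}, absorbing into an $o(1)$ term the error arising from the fact that ironing, quantile inversion, and the price computation can only be carried out approximately. I would combine two ingredients: the polynomial-time computation of a near-optimal (randomized) ex ante posted pricing from Section~\ref{s:optimal}, and an approximate but faithful implementation of the pairwise ``pivot'' reduction used in the proof above.

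First, invoke Lemma~\ref{l:greedy}: with access only to estimates $\noisy\delta_{ij}$ of the quantile increases, the greedy algorithm on the reduced instance returns, in polynomial time and w.h.p., quantiles $\noisy\exquants$ with $\mlext(\noisy\exquants) \geq (1-o(1))\,\mlext(\exquants)$, and by Lemma~\ref{l:irr-pos} these induce at most two prices per agent. Alongside $\noisy\exquants$ I would compute, from a grid of $\mathrm{poly}(n)$ sampled points of the estimated cost curves $\noisy\costcurvei(\quanti) = \quanti\,\noisy\disti^{-1}(\quanti)$, the lower convex envelopes $\noisy\costcurvehi$; from these one reads off, for each randomized agent $i$, the endpoints of the ironed interval containing $\noisy\exquanti$ and the (constant) estimated ironed virtual cost $\noisy\ivirt_i$ on that interval. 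Refining the grid makes the error in each of these quantities smaller than any prescribed $o(1/n)$.

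Next, run the reduction from the proof of the preceding theorem on the estimated curves: repeatedly pick two agents $i,j$ that are still randomized with $\noisy\ivirt_i/\vali \le \noisy\ivirt_j/\val_j$, increase $\noisy\exquanti$ and decrease $\noisy\exquant_j$ keeping the estimated budget binding, and stop when one of them reaches an endpoint of its estimated ironed interval, at which point that agent is no longer randomized. Each pivot is a constant-time linear computation, at most $n-1$ pivots occur, and each preserves the estimated objective exactly while moving the true objective by at most the grid error; since there are $O(n)$ pivots this contributes $o(1)$ total. When a single randomized agent remains, evaluate the mechanism's expected fractional-knapsack value under each of its two prices and keep the larger; by the argument in the proof of Theorem~\ref{t:perf-irr}, at least one of the two prices makes this value at least $(1-1/\sqrt{2\pi k})$ times the upper bound on the optimal mechanism, and dropping the fractional agent loses a further $1-1/k$. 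Composing the $1-o(1)$ of the first step with the cumulative $o(1)$ of the pivots yields a deterministic sequential posted pricing that is a $(1-1/\sqrt{2\pi k})(1-1/k)(1-o(1))$ approximation, computed in polynomial time and w.h.p.

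The main obstacle is the error bookkeeping rather than any new idea: one must check (a) that a $\mathrm{poly}(n)$-size grid suffices to pin down ironed-interval endpoints and virtual-cost slopes to within $o(1/n)$, (b) that after each approximate pivot the budget constraint and the bang-per-buck ordering underlying Theorem~\ref{t:perf-irr} are preserved up to this error, so the lemma still applies to the perturbed quantiles, and (c) that the $O(n)$ perturbations add up to an $o(1)$ loss in value rather than compounding. No probabilistic argument beyond what Section~\ref{s:optimal} already provides is needed; the ``w.h.p.'' is inherited verbatim from Lemma~\ref{l:greedy}.
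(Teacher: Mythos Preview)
Your proposal misidentifies where the computational difficulty lies, and as a result glosses over the one step that actually requires the ``w.h.p.'' and the $o(1)$ loss.

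The corollary takes the randomized posted pricing $(\exquants,\orders(\cdot))$ as input; invoking Lemma~\ref{l:greedy} to recompute $\exquants$ is off-topic here. More importantly, the pivot reduction in the preceding theorem is purely algebraic on the ironed cost curves: each step solves a one-variable linear equation to push one quantile to an ironed-interval endpoint while keeping $\sum_i \costcurvehi(\exquanti)=\budget$. The paper treats these operations (and access to the ironed curves themselves) as exact, so there is no grid, no accumulated pivot error, and no need for the bookkeeping items (a)--(c) you list. Your concern that ``ironing, quantile inversion, and the price computation can only be carried out approximately'' is not the obstacle the corollary is addressing.

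The actual obstacle is the step you dismiss in half a sentence: ``evaluate the mechanism's expected fractional-knapsack value under each of its two prices and keep the larger.'' This expectation $\expect[S\sim\exquants]{\valB(S)}$ is a sum over $2^n$ sets and cannot be computed exactly in polynomial time. The paper's proof consists entirely of observing that $\valB$ is submodular, so its multilinear extension can be estimated to any desired accuracy by sampling random sets and applying Chernoff bounds; this sampling is the sole source of both the ``with high probability'' and the additional $o(1)$ loss. Your claim that ``no probabilistic argument beyond what Section~\ref{s:optimal} already provides is needed'' and that ``the `w.h.p.' is inherited verbatim from Lemma~\ref{l:greedy}'' is therefore incorrect: the randomness here is fresh, coming from sampling to compare the two final prices, not from the greedy computation of $\exquants$.
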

\begin{proof}
We need to compute which offered price between $\price_{i1}$ and $\price_{i2}$ performs better in terms of fractional knapsack value. Fractional-knapsack is a submodular function and  the multilinear extension of submodular functions can be approximated arbitrarily well by sampling using Chernoff bounds. Therefore, with high probability, it is possible to compare arbitrarily well the fractional knapsack value obtained with the two offered prices to agent $i$. 
\end{proof}

\subsection{From Ex Ante to Ex Post with Submodular Value Functions}

With submodular value functions, the analysis for the oblivious randomized posted pricing is identical as the analysis for the oblivious deterministic posted pricing. In Section~\ref{s:crs}, Theorem~\ref{t:CR-knapsack} shows that by lowering the budget by some small amount, we get that the sum of the costs does not exceed the budget with high probability. Note that this results does not only hold for deterministic agents but also for randomized agents since the payment $p_i$ to an agent $i$ only need to be bounded by $B/k$ and is not restricted to be either $0$ or $\pricei$. Therefore, the sum of the costs does not exceed the budget with high probability in the randomized case as well and the remaining of the analysis of section~\ref{s:crs} also holds.

\begin{theorem} 
For $\epsilon \in (0,1/2)$, if the randomized oblivious posted pricing $\exquants$, where $\exquants$
is the optimal solution to the multilinear
extension program~\eqref{eq:multilinear-prog} with budget $(1 -
\epsilon) \budget$,
satisfies $2/ \epsilon \leq k \leq B / \max_i \pricei$, then this
posted pricing mechanism is a $(1 - 1 /e)(1 - \epsilon) (1- e^{-
  \epsilon^2 ( 1- \epsilon) k / 12})$ approximation to the optimal
mechanism for submodular value functions and $(1 - \epsilon) (1- e^{-
  \epsilon^2 ( 1- \epsilon) k / 12})$ for additive value functions.
\end{theorem}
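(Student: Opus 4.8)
The plan is to recycle the proof of the deterministic theorem of Section~\ref{s:crs} almost verbatim; the only new ingredient is that a randomized posted pricing first draws, independently for each agent $i$, the price $\price_i\in\{\price_{i1},\price_{i2}\}$ it actually posts (with the probabilities prescribed by Lemma~\ref{l:irr-pos}), and the Section~\ref{s:crs} analysis is then invoked conditionally on the realized price profile $(\price_1,\dots,\price_n)$. Conditioned on that profile the mechanism is an ordinary deterministic oblivious posted pricing, so Lemma~\ref{l:budget-offer-relation} and Lemma~\ref{l:CR-main} apply pointwise and only a short averaging step is needed to remove the conditioning.

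First I would establish the ex ante guarantee. By Lemma~\ref{l:irr-neg} and Lemma~\ref{l:irr-pos}, serving agent $i$ with probability $\exquanti$ through the two-price randomized pricing costs in expectation exactly the ironed cost curve value $\costcurvehi(\exquanti)$, which is convex; hence the multilinear extension program~\eqref{eq:multilinear-prog}, read with ironed cost curves and budget $(1-\epsilon)\budget$, has a well-behaved optimum $\exquants$. Rerunning the chain of inequalities from the Section~\ref{s:crs} theorem---Theorem~\ref{t:cg-submodular} to pass from $\mlext$ to $\ccext$, concavity of $\ccext(\cdot)$ together with $\disti^{-1}$ being increasing to absorb the $(1-\epsilon)$ budget scaling, and Lemma~\ref{l:ub} for the upper bound on the optimal mechanism---shows that the ex ante randomized posted pricing serving each $i$ with probability $\exquanti$ is a $(1-1/e)(1-\epsilon)$ approximation to the optimal mechanism, improving to $(1-\epsilon)$ in the additive case where $\mlext=\ccext$.

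Next I would pass to ex post feasibility. Write $p_i=\price_i$ if agent $i$ accepts her offer and $p_i=0$ otherwise. The $p_i$ are mutually independent (both the price draw and the cost draw are product distributions over the agents), each lies in $[0,\budget/k]$ because the generalized $k$-large market condition gives $\budget/\price_{ij}\ge k$ for $j\in\{1,2\}$, and $\sum_i\expect{p_i}=\sum_i\costcurvehi(\exquanti)\le(1-\epsilon)\budget$. Thus Theorem~\ref{t:CR-knapsack} applies and gives $\prob{\sum_i p_i\le(1-1/k)\budget}\ge\gamma$, with $\gamma:=1-e^{-\epsilon^2(1-\epsilon)k/12}$. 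Conditioning on the realized prices, Lemma~\ref{l:budget-offer-relation} shows each agent is then offered her price with probability at least $\gamma$ under any arrival order, and Lemma~\ref{l:CR-main} (through the submodular contention resolution bound of Theorem~\ref{t:cr-main}) upgrades this to expected value at least $\gamma\,\mlext(\exquants)$; averaging over the price profile preserves the bound. Multiplying the ex ante factor by $\gamma$ yields $(1-1/e)(1-\epsilon)(1-e^{-\epsilon^2(1-\epsilon)k/12})$ for submodular value functions and $(1-\epsilon)(1-e^{-\epsilon^2(1-\epsilon)k/12})$ for additive ones.

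I expect the only delicate point to be checking that the extra layer of price randomization does not disturb the hypotheses of Theorems~\ref{t:CR-knapsack} and~\ref{t:cr-main}: the independence and the $[0,\budget/k]$ boundedness of the payments $p_i$ must be traced back to the product structure of the price draws and to the generalized $k$-large definition, and the monotonicity property demanded by Theorem~\ref{t:cr-main} must be seen to survive the conditioning---which it does, since for every fixed price profile the map $\pi$ that retains the would-be acceptors arriving before the budget is exhausted is monotone in the acceptor set, exactly as in the proof of Lemma~\ref{l:CR-main}. Beyond this, everything is a line-by-line copy of Section~\ref{s:crs}, as the paragraph preceding the theorem already anticipates.
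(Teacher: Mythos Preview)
Your proposal matches the paper's argument, which is just the one-paragraph remark preceding the theorem: Theorem~\ref{t:CR-knapsack} only needs the payments $p_i$ to be independent and bounded in $[0,\budget/k]$, and the generalized $k$-large condition guarantees this for randomized prices, so the Section~\ref{s:crs} proof carries over unchanged. Your write-up fleshes this out and correctly invokes Lemmas~\ref{l:irr-neg} and~\ref{l:irr-pos} in place of the regularity assumption.

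One caution about your ``condition on the realized price profile, then average'' step: the bound $\gamma$ from Theorem~\ref{t:CR-knapsack} is unconditional and does \emph{not} hold uniformly across price profiles, and the conditional offer-probability and the conditional multilinear value are negatively correlated, so averaging their product does not obviously recover $\gamma\,\mlext(\exquants)$. The paper sidesteps this by never conditioning. The clean unconditional route is: since every realized $\pricei\le\budget/k$, the event that agent $i$ is offered her price is implied by $\sum_{j\ne i}p_j\le(1-1/k)\budget$, which is independent of whether $i\in R$; hence $\prob{i\in\pi(R)\mid i\in R}\ge\prob{\sum_j p_j\le(1-1/k)\budget}\ge\gamma$, and the marginal property of Theorem~\ref{t:cr-main} holds directly. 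Your check of the monotonicity property is fine.
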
 

\section{Conclusion}

We consider questions of budget feasibility in a Bayesian setting.  We
show that simple posted pricing mechanisms are ex post budget feasible
and approximate the Bayesian optimal mechanism.  Our analysis first
considers the ex ante relaxation where the budget constraint is
allowed to hold in expectation.  Good approximations are obtained when
this ex ante relaxation is optimized for a slightly reduced budget or
when the agents are ordered by bang-per-buck (value divided by offered
price).  The latter approach, in the case of additive value functions
when it applies, gives better bounds.

Another method for designing posted pricing mechanisms from the
literature comes from the generalized magician's problem from
\citet{ala-14}.  Unfortunately, this approach does not satisfy the
monotonicity property of Theorem~\ref{t:cr-main} needed to apply known
results that give a good approximation in the case of submodular
functions.  Thus, it is unclear whether this approach can be adapted
to budget feasibility questions.





\newpage
\bibliographystyle{apalike}
\bibliography{biblio}
\newpage 

\section*{APPENDIX}

\subsection*{Missing proofs from section \ref{s:optimal}}

\begin{proof}[Proof of Lemma~\ref{l:noisygreedy}] We need to find the increase satisfies $ \budget / \numSmall = \disti^{-1}(\sum_{k \leq j} \delta_{i_k}) \cdot (\sum_{k \leq j} \delta_{i_k})  - \disti^{-1}(\sum_{k < j} \delta_{i_k}) \cdot (\sum_{k < j} \delta_{i_k})$. To approximate it, we find $\noisy{\delta}_{i_j}$ such that $(1 - 1/n^3) \delta_{i_j} \leq \noisy{\delta}_{i_j} \leq \delta_i$, which can be done easily since the weight functions are increasing.

Recall that $\noisy{\quants}(S)$ is defined similarly to $\quants(S)$ but with estimates $\noisy{\delta}_{i_j}$. Let $\opti{S}$ be the optimal solution of the problem with small agents without noise. Since $(1 - 1/n^3) \delta_{i_j} \leq \noisy{\delta}_{i_j} \leq \delta_{i_j}$ for all $i,j$, we get that $\noisy{\quants}(\opti{S}) \geq (1 - 1 / n) \quants(\opti{S})$.  By the concavity of $\mlext(\cdot)$ along positive lines of direction, we get that $\mlext(\noisy{\quants}(\opti{S})) \geq (1 - 1 / n)\mlext(\quants(\opti{S}))$. 
\end{proof}

\begin{proof}[Proof of Lemma~\ref{l:ngreedyreduced}] 
First note that the objective function for the reduced instance is a submodular function regardless of the values of $\noisy{\delta}_{i_j}$. So since we are comparing ourselves with $\noisy{\quants}(S^*)$,
 it remains to show that the greedy algorithm with a noisy oracle on marginal contribution of agents performs well. 

Let $g(\cdot)$ be the objective function of the reduced instance. The marginal contributions are estimated by taking $\frac{10}{\delta^4}(1 + \ln n)$ samples of the random set with independent marginal probabilities $\quants$. By using basic Chernoff bounds as in \citet{CCMV-11}, we get that with high probability, all the estimates that are computed during the algorithm  have an additive error of at most $\delta^2 \redval(\opti{S})$.

Let $S$ be the set of small agents returned by the algorithm. Let $S_i = \{e_1, \dots, e_i\}$ be the value of $S$ after $i$ iterations. Now since $\redval(\cdot)$ is submodular, 
\begin{align*}
\redval(\opti{S}) & \leq \redval(S_{i-1}) + \sum_{e \in \opti{S} \setminus S_{i-1}} \redval_{S_{i-1}}(e) 
\end{align*}
By the greediness of the algorithm, $\noisy{\redval}_{S_{i-1}}(e_i) \geq \noisy{\redval}_{S_{i-1}}(e)$ for all $e \in \opti{S} \setminus S_{i-1}$. So, $\redval_{S_{i-1}}(e_i)  + 2 \delta^2 \redval(\opti{S}) \geq \redval_{S_{i-1}}(e)$, and 
\begin{align*}
\redval(\opti{S}) & \leq \redval(S_{i-1}) + \frac{1}{\delta} (\redval_{S_{i-1}}(e_i)  + 2 \delta^2 \redval(\opti{S}) ) \\
(1 - 2 \delta) \redval(\opti{S}) & \leq \redval(S_{i-1}) + \frac{1}{\delta} \redval_{S_{i-1}}(e_i)   
\end{align*}
Then, by following identically the remaining of the proof for the $e/ (e-1)$ approximation for greedy subject to a cardinality constraint, but by replacing $\redval(\opti{S})$ by $(1 - 2 \delta) \redval(\opti{S})$, we get that $(1 - 1/e) \redval(S) \geq (1 - 2 \delta) \redval(\opti{S}) $, which concludes the proof.
\end{proof}

\end{document}